\newif\iflong\longtrue
\newif\ifanonymous\anonymousfalse
\iflong
\documentclass[11pt]{article}
\usepackage[utf8]{inputenc}
\usepackage[top=1in,left=1in,right=1in,bottom=1in]{geometry}
\else
\documentclass[sigconf]{acmart}
\fi
\usepackage{amsthm}
\usepackage{amsmath,amssymb}
\usepackage{hyperref}
\usepackage{relsize}
\usepackage[noend]{algpseudocode}
\usepackage[ruled,vlined]{algorithm2e}
\usepackage{multicol}
\usepackage{relsize}
\usepackage[most]{tcolorbox}
\usepackage{color}
\usepackage{thm-restate}
\usepackage{mathtools}
\usepackage{enumerate}
\usepackage{pgfplots}
\usepackage{subcaption}
\usepackage{graphicx}
\usepackage{booktabs}
\usepackage{xspace}
\pgfplotsset{
    compat=1.3,
    legend image code/.code={
        \draw [#1] (0cm,-0.1cm) rectangle (0.6cm,0.1cm);
    },
}
\usepackage[capitalize,nameinlink]{cleveref}

\theoremstyle{plain}
\newtheorem{theorem}{Theorem}[section]

\newtheorem{invariant}{Invariant}
\newtheorem{lemma}[theorem]{Lemma}
\newtheorem{corollary}[theorem]{Corollary}

\newtheorem{challenge}[lemma]{Challenge}
\newtheorem{observation}[theorem]{Observation}

\def\polylog{\operatorname{polylog}}

\newcommand{\congest}{\textsc{CONGEST}\xspace}

\newcommand{\hd}{high-degree\xspace}
\newcommand{\ld}{low-degree\xspace}
\newcommand{\mmax}{m_{max}}
\newcommand{\mavg}{m_{avg}}
\newcommand{\vld}{V_L}
\newcommand{\vhd}{V_H}
\newcommand{\alg}{\mathcal{A}}
\newcommand{\eps}{\varepsilon}
\newcommand{\mcur}{m_{cur}}

\DeclareMathOperator{\poly}{poly}

\definecolor{mygreen}{RGB}{20,140,80}
\definecolor{linkcolor}{RGB}{0,0,230}
\definecolor{mylightgray}{RGB}{230,230,230}
\definecolor{verylightgray}{RGB}{245,245,245}

\hypersetup{
     colorlinks=true,
     citecolor= blue,
     linkcolor= blue,
     urlcolor= blue
}
\newcommand{\myparagraph}[1]{\vspace{1.5pt}\noindent {\bf #1.}}

\newcounter{myalgctr}

\newtcolorbox{OuterBox}[1][]{

    breakable,
    enhanced,
    frame hidden,
    interior hidden,
    left=-5pt,
    right=-5pt,
    top=-5pt,
    float=p,
    boxsep=0pt,
    arc=0pt
#1}

\newtcolorbox{InnerBox}[1][]{

    enforce breakable,
    enhanced,
    colback=gray,
    colframe=white,
#1}

\newenvironment{tbox}{
\vspace{0.2cm}
\begin{tcolorbox}[width=\columnwidth,
                  enhanced,
                  boxsep=2pt,
                  left=1pt,
                  right=1pt,
                  top=4pt,
                  boxrule=1pt,
                  arc=0pt,
                  colback=white,
                  colframe=black,
	              breakable
                  ]

}{
\end{tcolorbox}
}

\newcommand{\tboxhrule}[0]{\vspace{0.1cm} {\color{black} \hrule} \vspace{0.2cm}}

\newenvironment{titledtbox}[1]{\begin{tbox}#1 \tboxhrule}{\end{tbox}}

\iflong
\else
\usepackage[font=small,aboveskip=0pt,belowskip=0pt]{caption}
\setlength{\skip\footins}{5pt}
\setlength{\textfloatsep}{2pt plus 1.0pt minus 1.0pt}
\setlength{\intextsep}{2pt plus 1.0pt minus 1.0pt}
\setlength{\floatsep}{2pt plus 1.0pt minus 1.0pt}
\setlength{\dbltextfloatsep}{2pt plus 1.0pt minus 1.0pt}
\setlength{\dblfloatsep}{2pt plus 1.0pt minus 1.0pt}
\usepackage[compact]{titlesec}
\titlespacing*{\section}{0pt}{3pt}{3pt}
\titlespacing*{\subsection}{0pt}{3pt}{2pt}
\titlespacing*{\subsubsection}{0pt}{3pt}{2pt}
\fi
\crefname{theorem}{Theorem}{Theorems}
\Crefname{lemma}{Lemma}{Lemmas}
\Crefname{claim}{Claim}{Claims}
\Crefname{observation}{Observation}{Observations}
\Crefname{algorithm}{Algorithm}{Algorithms}
\Crefname{myalgctr}{Algorithm}{Algorithms}
\Crefname{challenge}{Challenge}{Challenges}

\DeclarePairedDelimiter\floor{\lfloor}{\rfloor}

\algrenewcommand\algorithmicindent{1em}

\newcommand{\tO}{\widetilde{O}}

\newcommand{\mysize}{\fontsize{9}{10}}
\renewcommand{\deg}{deg}

\setlength\parindent{\parindent}
\usepackage{xcolor}

\begin{document}
\sloppy
\title{Near-Optimal Distributed Implementations of Dynamic Algorithms for Symmetry Breaking Problems}

\iflong
\else

\begin{abstract}
    The field of dynamic graph algorithms aims at achieving a thorough
    understanding of real-world networks whose topology evolves with
    time. Traditionally, the focus has been on the classic sequential,
    centralized setting where the main quality measure of an algorithm is its
    update time, i.e.\ the time needed to restore the solution after each
    update. While real-life networks are very often distributed across multiple machines, the fundamental question of finding efficient
    \emph{dynamic, distributed graph algorithms} received little attention to
    date. The goal in this setting is to optimize both the \emph{round} and
    \emph{message} complexities incurred per update step, ideally achieving a
    message complexity that matches the centralized update time in $O(1)$
    (perhaps amortized) rounds.

    Toward initiating a systematic study of \emph{dynamic, distributed
    algorithms}, we study some of the most central symmetry-breaking problems: maximal
    independent set (MIS), maximal matching/(approx-) maximum cardinality matching (MM/MCM),
    and $(\Delta + 1)$-vertex coloring.
                        This paper focuses on
		dynamic, distributed algorithms that are \emph{deterministic},
    and in particular --- robust against an \emph{adaptive} adversary.
    Most of our focus is on the
            MIS algorithm,
    which achieves $O\left(m^{2/3}\log^2
    n\right)$ amortized messages in $O\left(\log^2 n\right)$ amortized rounds
    in the \textsc{Congest} model.
    Notably, the amortized message complexity of our algorithm matches the amortized update
            time of the best-known deterministic
    {\em centralized} MIS algorithm by Gupta and Khan [SOSA'21]
        up to a $\polylog n$ factor.
		The previous best deterministic distributed MIS algorithm, by
        Assadi et al.\ [STOC'18],
		uses $O(m^{3/4})$ amortized messages in $O(1)$ amortized rounds, i.e.,
        we achieve a polynomial improvement in the message complexity by a
        $\polylog n$ increase to the round complexity; moreover, the algorithm
        of Assadi et al. makes an implicit assumption that the network is
        connected at all times, which seems excessively strong when it comes to
        dynamic networks.
                                        Using techniques similar to the ones we developed for our
        MIS algorithm, we also provide
    deterministic algorithms
        for MM, approximate MCM and $(\Delta +
    1)$-vertex coloring whose message complexities match or nearly match the update times of the
    best centralized algorithms, while
    having either constant or $\polylog(n)$ round complexities.
\end{abstract}
\fi

\iflong
\ifanonymous
\author{Anonymous Author(s)}
\else
\author{Shiri Antaki\thanks{Tel Aviv University
    \href{mailto:shiriantaki@mail.tau.ac.il}{shiriantaki@mail.tau.ac.il}}
\and Quanquan C. Liu\thanks{Massachusetts Institute of Technology
    \href{mailto:quanquan@mit.edu}{quanquan@mit.edu}}
\and Shay Solomon \thanks{Tel Aviv University \href{mailto:
solo.shay@gmail.com}{
solo.shay@gmail.com}}}
\fi

\maketitle

\iflong
\begin{abstract}
    The field of dynamic graph algorithms aims at achieving a thorough
    understanding of real-world networks whose topology evolves with
    time. Traditionally, the focus has been on the classic sequential,
    centralized setting where the main quality measure of an algorithm is its
    update time, i.e.\ the time needed to restore the solution after each
    update. While real-life networks are very often distributed across multiple machines, the fundamental question of finding efficient
    \emph{dynamic, distributed graph algorithms} received little attention to
    date. The goal in this setting is to optimize both the \emph{round} and
    \emph{message} complexities incurred per update step, ideally achieving a
    message complexity that matches the centralized update time in $O(1)$
    (perhaps amortized) rounds.

    Toward initiating a systematic study of \emph{dynamic, distributed
    algorithms}, we study some of the most central symmetry-breaking problems: maximal
    independent set (MIS), maximal matching/(approx-) maximum cardinality matching (MM/MCM),
    and $(\Delta + 1)$-vertex coloring.
                        This paper focuses on
		dynamic, distributed algorithms that are \emph{deterministic},
    and in particular --- robust against an \emph{adaptive} adversary.
    Most of our focus is on our
            MIS algorithm,
    which achieves $O\left(m^{2/3}\log^2
    n\right)$ amortized messages in $O\left(\log^2 n\right)$ amortized rounds
    in the \textsc{Congest} model.
    Notably, the amortized message complexity of our algorithm matches the amortized update
            time of the best-known deterministic
    {\em centralized} MIS algorithm by Gupta and Khan [SOSA'21]
        up to a $\polylog n$ factor.
		The previous best deterministic distributed MIS algorithm, by
        Assadi et al.\ [STOC'18],
		uses $O(m^{3/4})$ amortized messages in $O(1)$ amortized rounds, i.e.,
        we achieve a polynomial improvement in the message complexity by a
        $\polylog n$ increase to the round complexity; moreover, the algorithm
        of Assadi et al. makes an implicit assumption that the network is
        connected at all times, which seems excessively strong when it comes to
        dynamic networks.
                                        Using techniques similar to the ones we developed for our
        MIS algorithm, we also provide
    deterministic algorithms
        for MM, approximate MCM and $(\Delta +
    1)$-vertex coloring whose message complexities match or nearly match the update times of the
    best centralized algorithms, while
    having either constant or $\polylog(n)$ round complexities.
\end{abstract}

\fi

\section{Introduction}\label{sec:intro}
Traditional graph algorithms process a \emph{static} graph on a single
(centralized) machine and are sequential;
thus, their runtime is at least linear in the graph size.
Even linear-time static algorithms, which are traditionally considered extremely fast
and optimal, are often inadequate for coping
with \emph{modern big data}, which dynamically changes and evolves at a
rapid pace. Often, such big data cannot be stored in one machine and hence
distributed methods are required to process it.
Efficiently coping with such
data is widely recognized as one of the most important
challenges of modern computation.
A \emph{dynamic} graph algorithm is one that efficiently deals with rapid
changes to the input graph, where
a common goal is to maintain a subgraph with some key property while the underlying graph changes over time.
A \emph{distributed} graph algorithm is one that  efficiently deals with graph
data stored in multiple machines, where the corresponding processors work
\emph{in parallel}
in order to achieve a common goal by communicating and coordinating their actions
via message passing.

The focus of dynamic graph algorithms, up until the past few years, has
almost exclusively been in the classic \emph{sequential, centralized} setting,
where the main quality measure is the algorithm's {\em update time}, i.e., the
time needed to update the graph structure of interest per update step.
Meanwhile, the focus of distributed algorithms has been mainly on the
\emph{static} setting, primarily on the \emph{round complexity} of {\em static tasks}.
Surprisingly, the fundamental question of \emph{distributing} known sequential
dynamic graph algorithms received very little attention to date. Most previous
works \cite{AOSS18,BKM19,BEG18,CDK20,CHK16,KG18,KS18,PPS16,PS16,LPR09} focused on small amortized round complexity.
A few works considered message complexity~\cite{AOSS18,PS16} but only for
certain problems.

Minimizing the number of messages is an important goal.
A small number of messages implies a small load on the communication links, which in some cases enables running multiple algorithms (or multiple instances of the same algorithm) concurrently and it can allow for pipeline implementation.
Moreover, this measure is useful when considering cases where one cares about the total work done (as captured by the number of messages sent), e.g.\ when the total network bandwidth is limited. We remark that many real-world systems are often bandwidth limited; examples of bandwidth limited systems include systems with poor wireless connections, an over-saturated network (many independent agents are on the same network, as in, e.g., a large company), or a mobile data network in a poorly connected area. On a low bandwidth network, an algorithm which uses less rounds but more messages in principle can actually take longer to finish its execution than an algorithm that uses less messages but more rounds.

We define the \emph{message-efficiency}
of our algorithms to be the ratio between the \emph{amortized
message complexity} per update of our algorithm
and the sequential amortized
update time of the best-known centralized algorithm. Note that although
we consider message-efficiency as a separate property, we still want our
algorithms to run in $\poly\log n$ or $O(1)$ rounds, as is standard.
Our goal is to design distributed algorithms with (nearly) constant message-efficiency, meaning the amortized message complexity asymptotically matches the amortized update time of the best centralized algorithm for the problem.
Of course, at the same time, we want to upper bound the amortized number of rounds of such
algorithms by (nearly) constant.
We allow $\poly \log n$ slacks in both the message and round complexities.

In this paper, we aim at initiating a systematic study of dynamic, distributed message-efficient algorithms
by considering a single edge update per step, as in the classic, sequential centralized setting. Even for this basic setting,
there are many challenges underlying the adaption of centralized algorithms to the
distributed setting.
Very recently, Censor-Hillel et al.\ \cite{CDK20} and Bamberger et al.\
\cite{BKM19} studied the question of simultaneously handling concurrent updates
in a distributed setting. However, their algorithms require $\Omega(m)$
amortized messages. Thus, it seems crucial to first thoroughly understand the
base case of a single update and only later extend to more general settings.
We focus on classic \emph{symmetry-breaking} problems:
maximal independent set (MIS), maximal matching (MM), $3/2$-maximum
cardinality matching (MCM), and $(\Delta
+ 1)$-vertex coloring.
Symmetry-breaking constitutes one of the most important
challenges in distributed computing, since in many distributed systems processors might be in the same state,
yet one must somehow break the symmetry to perform almost any nontrivial
computation.

One challenge in optimizing the message-efficiency
is that many centralized algorithms for these problems rely on
global variables and data structures, and they often rely on periodic \emph{global restarts}.
The idea behind a periodic global restart is for the algorithm to handle some number of updates ``lazily'', essentially without
changing the data structure, until a sufficient number of updates have been
accumulated.
Such centralized approaches are problematic in the distributed setting as each
individual vertex does not have access to global information.
Moreover, many centralized algorithms~\cite{AOSS18,DZ18,GuptaK18,KNNP20,NS13}
also
assume knowledge of the number of edges in the graph at any time, which does not lose generality in a centralized setting but is a very strong assumption to make in (dynamic) distributed settings --
such information can be acquired, but through potentially
expensive communication.
As a result, distributing centralized dynamic algorithms is a challenging task.

We shall restrict our attention to the standard $\congest$ model of communication
(\cite{PelB00}), where message size is bounded by $O(\log n)$ bits.
In particular, we use the most studied model of the distributed dynamic setting, the \emph{local
wakeup ($\congest$) model} (cf.\ \cite{AOSS18,CHK16,KS18,PPS16,PS16}),
where following an edge
update $(u,v)$, only the updated vertices $u$ and $v$ wake up.
The update procedure proceeds in fault-free synchronous rounds during which
every processor {\em that has been woken up} is allowed to exchange $O(\log n)$-bit messages
with its neighbors until finishing its execution, which differs from the
standard static setting in a crucial aspect: in
the static setting all the vertices are woken up at the outset and engage in the
algorithm,
whereas in the dynamic setting a vertex has to be woken up as part of the update
procedure---by receiving a message that  propagated from either $u$ or $v$---in
order to engage in the update procedure. In particular, to achieve good
message-efficiency, vertices cannot blindly participate in the update procedure
as even the size of the $2$-hop neighborhood of an updated vertex can
be large, leading to large message complexity;
this poses a highly nontrivial challenge for
algorithms which need to maintain some global invariant.

The round and message complexities can be viewed as the ``runtime'' and ``total work'' of the algorithm, respectively.
Our paper focuses on solving the dynamic MIS problem via a deterministic, \congest
algorithm that minimizes message complexity and number of rounds using new
techniques to resolve the issue surrounding an unknown number of edges in the entire
graph.
We then use similar techniques
to those used in our algorithm for solving MIS
to solve the other problems discussed in this paper, namely, MM,
$3/2$-approximate MCM, and
$(\Delta + 1)$-vertex coloring.
For MM, $3/2$-approximate MCM, and $(\Delta + 1)$-coloring, there exist
$O(\Delta)$-message, $O(1)$-round naive algorithms where each node queries its
immediate neighbors for the desired property (their color, whether they are in a
matching\dots etc.); the goal is beat these algorithms in number of messages in
the setting where $\Delta = \Omega(\sqrt{m})$,
where $\Delta$ is a fixed upper bound on the maximum degree in the
graph.
For MM, $(3/2)$-approximate MCM, and $(\Delta + 1)$-coloring,
we provide the first non-trivial $O(1)$-message-efficient
deterministic,
distributed algorithms that match the update times of $O(\sqrt{m})$ of the
best-known centralized algorithms~\cite{KNNP20,NS13}. Our algorithm runs in
$O(1)$ worst-case rounds for MM and $(\Delta + 1)$-coloring and $O(\log \Delta)$
worst-case rounds for $3/2$-approximate MCM.
Furthermore, our algorithms for these
problems are simple and, hence, practically implementable.

\myparagraph{Related Work for MIS}
We give a more comprehensive survey of the related work on MIS since this
problem has received more attention in the dynamic, distributed setting.
The MIS problem has been intensively studied over the years
since the celebrated works of \cite{AlonBI86,Linial87,Luby86} from the mid 1980s.
Most of the literature on the problem arguably revolves around parallel and distributed settings, perhaps due to the practical applications of MIS in these settings.

In recent years there has been a growing body of work on the problem of
\emph{maintaining} an MIS in dynamic networks
\cite{AOSS18,AOSS19,BDHSS19,CHK16,CZ19,DZ18,GuptaK18,KW13}.
Although most of this work has focused on the standard centralized, sequential
setting, Censor-Hillel et al.\ \cite{CHK16} gave a {randomized} algorithm for maintaining
an MIS against an \emph{oblivious adversary} in distributed dynamic
networks.\footnote{In the \emph{oblivious adversarial model} (see, e.g.,
\cite{CW77b,KKM13}), the adversary has complete knowledge of all the edges in
the graph and their arrival order, as well as of the algorithm, but does not
have access to the random bits used by the algorithm, and so cannot adapt its
updates in response to the random choices of the algorithm.}
The distributed algorithm of \cite{CHK16} achieves an (amortized) message
complexity of $\Omega(\Delta)$ and an \emph{expected} round complexity of
$O(1)$.
König and Wattenhofer \cite{KW13} gave an algorithm for maintaining an MIS which requires a constant number of rounds, but as opposed to \cite{CHK16}, makes the assumptions that a node gracefully leaves the network, and messages may have unbounded size. Also, the number of broadcasts that are done may be large.
Assadi et al.~\cite{AOSS18}
showed that their main centralized algorithm for MIS can be naturally adapted
into a deterministic distributed algorithm with amortized
message complexity $O(\min\{m^{3/4},\Delta\})$ and an amortized round complexity of $O(1)$.
However, this algorithm operates under the assumption that knowledge
of up-to-date estimates of $m$ is provided to all vertices; this leads to an
implicit assumption that the graph
remains connected throughout the course of the algorithm,
which is a strong assumption to make, especially in dynamic networks.

In the centralized, sequential setting,
the deterministic $O(m^{3/4})$ bound of Assadi et al.\ \cite{AOSS18} for general
graphs was improved to $O(m^{2/3})$ by Gupta and Khan \cite{GuptaK18} and
independently to $O(m^{2/3} \sqrt{\log m})$ by Du and Zhang
\cite{DZ18}.
Allowing randomization against an oblivious adversary, the update time in general graphs was reduced to
$O(\sqrt{m})$ \cite{DZ18}, further to $\tO(\min\{m^{1/3},\sqrt{n}\})$ \cite{AOSS19},
and ultimately to $\poly \log n$ \cite{BDHSS19,CZ19}.
To the best of our knowledge, the only distributed algorithms for the problem
are the two mentioned above \cite{AOSS18, CHK16}.
That said, it seems rather straightforward to distribute the randomized
algorithms with amortized update time $\polylog(n)$ \cite{BDHSS19,CZ19},  to
obtain a distributed algorithm with both message and round complexities bounded
by $O(\polylog(n))$.\footnote{We did not make any effort to verify this claim, as
    our   goal was to achieve an efficient {\em deterministic} algorithm, or at
least one that does not assume an {\em oblivious} adversary.} However, the
disadvantage of these randomized algorithms is that they crucially
require the oblivious adversary assumption.
While such an assumption might be fine in the centralized setting, in the
distributed setting it is easier for adversaries to corrupt links in the
network as well as to corrupt and/or eavesdrop on the messages sent through such links.
Thus, in such settings the oblivious
adversary assumption seems excessively strong and impractical.
In this paper, we shall restrict our attention
to deterministic algorithms, which are, in particular, robust against an adaptive adversary.
Prior to this work
there was no deterministic (or even randomized against a non-oblivious adversary)
distributed algorithm for MIS with a message complexity of $o(m^{3/4})$,
and moreover, it seemed highly unclear if the deterministic algorithms of \cite{DZ18,GuptaK18},
with amortized update time $\tilde O(m^{2/3})$ could be distributed efficiently.

\subsection{Our Contributions}
\myparagraph{Maximal independent set}
We present a deterministic distributed algorithm that achieves amortized
message and round complexities of $O(m^{2/3}\log^2{n})$ and $O(\log^2{n})$, respectively.
To this end, we reduce the problem of dynamically maintaining an MIS to that of statically computing it.
Our reduction builds on the aforementioned (centralized, sequential) algorithm of Gupta and Khan \cite{GuptaK18} in a nontrivial way.

\begin{theorem} \label{t1}
Equipped with a black-box static deterministic algorithm for computing an MIS within $T(n)$ rounds for any $n$-vertex distributed network, an MIS can be maintained \emph{deterministically} (in the local wakeup model) over any sequence of edge insertions and deletions that start from an empty distributed network on $n$ vertices,  within $O(T(n))$ \emph{amortized round complexity} and
	$O(m^{2/3} \cdot T(n))$ \emph{amortized message complexity},
	where $m$ denotes the dynamic number of edges.
\end{theorem}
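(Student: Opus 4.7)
My plan is to distribute a Gupta--Khan-style~\cite{GuptaK18} reduction in a way compatible with the local wakeup model: maintain a high/low-degree partition of the vertices with threshold $\tau=\Theta(m^{1/3})$, handle each edge update with a small amount of local work plus, when necessary, one invocation of the black-box static MIS routine on a subgraph that lives inside a single woken-up neighborhood, and amortize occasional expensive fixups against a batch of cheap updates. Throughout, each vertex knows whether itself and each of its neighbors belongs to the current MIS $M$, and each high-degree vertex (degree $\ge\tau$) additionally tracks the number of incident edge changes since its last \emph{rebalance}. Since $\tau=m^{1/3}$, there are $O(m^{2/3})$ high-degree vertices and their total degree is $O(m)$.

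\emph{Per-update handling.} When $(u,v)$ is inserted or deleted, both endpoints wake up; a few constant-round cases cover when $M$ remains maximal and independent. Otherwise, say $u$ has to leave $M$; I find replacements by invoking the static MIS algorithm on the induced subgraph $\{w\in N[u]:w\text{ has no MIS neighbor}\}$, which sits inside $u$'s already-woken neighborhood, so at most $\deg(u)+1$ vertices participate and the call takes $T(n)$ rounds and $\tO(\deg(u)\cdot T(n))$ messages. I conclude by propagating the new status of each newly added MIS vertex to its neighbors.

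\emph{Rebalances and global re-partition.} Whenever a high-degree vertex $v$'s counter reaches $\tau$, I perform a \emph{rebalance}: wake $N(v)$, invoke the static MIS on the eligible subgraph inside $N[v]$, and broadcast each neighbor's updated status, at a cost of $T(n)$ rounds and $\tO(\deg(v)\cdot T(n))$ messages. Charging each rebalance to the $\tau$ incident updates that triggered it yields
\[
\tO\!\left(\frac{1}{\tau}\sum_{v\in \vhd}\deg(v)\right)\cdot T(n)\;=\;\tO(m/\tau)\cdot T(n)\;=\;\tO(m^{2/3})\cdot T(n)
\]
amortized messages per update and $\tO(T(n)/\tau)$ amortized rounds. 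Because $\tau$ depends on $m$, I also re-run the static MIS globally each time $m$ doubles or halves, spending $\tO(m\cdot T(n))$ messages amortized over $\Theta(m)$ updates; this contributes only $\tO(T(n))$ messages and $o(T(n))$ rounds per update and accommodates the switch between $m_{max}$ and $m_{avg}$ discussed in~\cref{sec:m-avg}.

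\emph{Main obstacle.} The delicate part is formulating the local invariants so that after every edge update the set of MIS violations is guaranteed to lie in $N(u)\cup N(v)$ (or in the neighborhood of a freshly rebalanced high-degree vertex), which is what allows a single static-MIS call on a subgraph \emph{already inside a woken neighborhood} to restore both maximality and independence without violating the local wakeup rule. The amortized analysis must then simultaneously charge per-update local fixes, high-degree rebalances, and periodic global re-partitions, and must avoid double counting when both endpoints of an update are high-degree or when an update both triggers a neighborhood repair at $u$ and bumps the rebalance counter at $v$.
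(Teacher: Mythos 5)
Your proposal has genuine gaps that prevent it from reaching the claimed bounds. First, the quantitative core is off. You set the threshold at $\tau=\Theta(m^{1/3})$, whereas the Gupta--Khan scheme (and this paper) use $\Theta(m^{2/3})$: the point is that low-degree vertices then have degree $O(m^{2/3})$ (so notifying all their neighbors on a status change is affordable), while there are only $O(m^{1/3})$ high-degree vertices, so the candidate subgraph on which the static MIS is invoked has $O(m^{2/3})$ edges. With $\tau=m^{1/3}$ neither side is controlled: a vertex forced out of the MIS can have degree $\Theta(n)\gg m^{2/3}$, and your invariant ``each vertex knows the MIS status of all its neighbors'' forces $\Theta(\deg)$ notifications on every status change, giving $\Omega(\Delta)$ messages per update with no charging argument to absorb it. (The paper avoids exactly this via the priority rule: \ld vertices ignore the MIS status of \hd neighbors and only count \ld MIS neighbors, and \hd vertices notify only their \hd neighbors.) Your per-call cost $\widetilde{O}(\deg(u)\cdot T(n))$ also undercounts: a CONGEST MIS on the eligible subgraph inside $N[u]$ costs $T(n)$ times the number of \emph{edges} of that induced subgraph (up to $\deg(u)^2$), plus the messages needed for candidates to learn which of their own neighbors are eligible, which again touches edges leaving $N[u]$. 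Finally, your rebalance step is either redundant (you already claim to restore the MIS after every update) or, if it is meant to defer notifications until a counter reaches $\tau$, it breaks the very knowledge invariant your per-update repair relies on and leaves maximality violated between rebalances, which is not permitted.

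Second, your treatment of the unknown and changing $m$ (the paper's \cref{c2}) does not work in the local wakeup model: ``re-run the static MIS globally each time $m$ doubles or halves'' presupposes that some vertex can detect this and reach all vertices, but vertices do not know $m$, only updated vertices wake up, and the graph may be disconnected, so no global broadcast or census is possible. This is precisely why the paper replaces the global restart with a \emph{local} one (\cref{sec:restart}): per-vertex movement bounds $d'_v$, a BFS tree over the constant-diameter subgraph $\{v\}\cup L\cup B\cup W$ around the update, computation of $S=\sum_{v\in V'}d_v$, demotion of low-degree vertices from $\vhd$ until at most $S^{1/3}$ remain (\cref{thm:restart}, \cref{lemma1}), and, for the $m_{avg}$ bound, timestamps (\cref{sec:m-avg}). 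Your own ``Main obstacle'' paragraph concedes that the invariants confining violations to a woken neighborhood are not pinned down; in fact maintaining them within the $O(m^{2/3}\cdot T(n))$ message budget is the crux of the paper's construction, and the proposal as written does not supply it.
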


Using the MIS algorithm of \cite{GGR20,RG20}, which runs in $O(\log^5 n)$ rounds, on top of the transformation of Theorem \ref{t1}, the amortized bounds on the round and message complexities of the resulting distributed algorithm are $O(\log^5 n)$ and $O(m^{2/3} \log^5 n)$, respectively.

While the black-box static MIS algorithm used in the transformation of Theorem \ref{t1} applies to arbitrary graphs,
bounded diameter graphs admit faster MIS algorithms.
Next, we strengthen the transformation of Theorem \ref{t1}, to achieve a diameter-sensitive transformation.
We stress that the algorithm returned as output of this transformation applies to any dynamic graph;
the restriction on the diameter is only for the black-box static MIS algorithm.
The black-box MIS algorithm should also satisfy another property; given as input an independent set $M' \subseteq V$
of the graph, the output MIS should be a superset of the input set $M'$; we shall call this an {\em input-respecting MIS}.
\begin{theorem} \label{t2}
Equipped with a black-box static deterministic algorithm for computing an
input-respecting MIS within $T'(n)$ rounds for any $n$-vertex distributed
network {\em with diameter at most 6},
an MIS can be maintained \emph{deterministically}
over any sequence of edge insertions and deletions that start from an empty network on $n$ vertices,  within $O(T'(n))$ \emph{amortized round complexity} and
	$O(m^{2/3} \cdot T'(n))$ \emph{amortized message complexity},
	where $m$ denotes the dynamic number of edges.
\footnote{For our purposes it suffices to
take a constant of 6, but any fixed constant $c \geq 6$ works.}
\end{theorem}

The MIS algorithm of \cite{CPS20} runs in $O(D \log^2 n)$ rounds in distributed graphs of diameter $D$.
We adapt the algorithm of \cite{CPS20} to return an input-respecting MIS.
Plugging the resulting MIS algorithm into the transformation of Theorem \ref{t2} yields:
\begin{corollary}\label{cor}
	Starting from an empty distributed network on $n$ vertices, an MIS can be maintained \emph{deterministically}
		over any sequence of edge insertions and deletions with $O(\log^2 n)$ \emph{amortized round complexity} and
	$O(m^{2/3} \log^2 n)$ \emph{amortized message complexity}.
\end{corollary}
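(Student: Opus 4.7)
The plan is to apply \cref{t2} with a black-box algorithm derived from the MIS algorithm of \cite{CPS20}. That algorithm runs in $O(D \log^2 n)$ rounds on any distributed network of diameter $D$, so on constant-diameter networks it completes in $T'(n) = O(\log^2 n)$ rounds. Substituting this into \cref{t2} then directly yields an amortized round complexity of $O(T'(n)) = O(\log^2 n)$ and an amortized message complexity of $O(m^{2/3} \cdot T'(n)) = O(m^{2/3} \log^2 n)$, as claimed. The only remaining (and nontrivial) task is to adapt \cite{CPS20} into an \emph{input-respecting} variant within the same asymptotic round complexity, since \cref{t2} explicitly requires this property of its black box.

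For the adaptation, I would proceed as follows. Given a constant-diameter input graph $G=(V,E)$ equipped with an independent set $M' \subseteq V$ where each vertex knows its own membership in $M'$, the adapted algorithm first spends one round in which every $v \in M'$ broadcasts an ``in-MIS'' notification to its neighbors. Vertices in $M'$ are placed in the output, vertices in $N(M')$ are marked excluded, and the remaining set $U = V \setminus (M' \cup N(M'))$ of undecided vertices must have an MIS of the induced subgraph $G[U]$ computed. The final output is the union of $M'$ with the MIS on $G[U]$, which is manifestly an MIS of $G$ containing $M'$.

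The main obstacle is bounding the round complexity of the second phase: although $G$ has constant diameter, the induced subgraph $G[U]$ need not. I plan to circumvent this by keeping the excluded vertices in $M' \cup N(M')$ active as message relays in $G$; since $G$ has constant diameter, any pair of $U$-vertices can exchange an $O(\log n)$-bit message in $O(1)$ rounds of $G$-communication. Simulating each round of \cite{CPS20}'s algorithm on $G[U]$ by such an $O(1)$-round relay over $G$ inflates the round complexity by only a constant factor, so an input-respecting MIS is obtained in $T'(n) = O(\log^2 n)$ rounds on any constant-diameter network. Plugging this into \cref{t2} then establishes the corollary.
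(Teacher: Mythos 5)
Your outer structure is the paper's: adapt the $O(D\log^2 n)$-round algorithm of \cite{CPS20} into an input-respecting black box and feed it to \cref{t2} with $T'(n)=O(\log^2 n)$, and your decomposition of the output as $M'$ together with an MIS of $G[U]$ for $U=V\setminus(M'\cup N(M'))$ is also how the paper proceeds (in \cref{lem:cps} the set $S$ and its neighbors are exactly the vertices excluded from the computation). The gap is in the step where you bound the round complexity of the second phase. The $O(D\log^2 n)$ bound of \cite{CPS20} is not ``$O(\log^2 n)$ units of local work times a diameter overhead that a simulation can absorb'': the algorithm derandomizes Ghaffari's algorithm by the method of conditional expectations, and each of the $O(\log^2 n)$ seed bits is fixed by a convergecast of conditional-expectation sums to a leader followed by a broadcast of that bit, so the number of rounds is intrinsically proportional to the diameter of the network over which this aggregation is performed --- in your plan, $G[U]$, which you yourself note may have huge diameter and may in fact be disconnected, so a single leader need not even be reachable inside $G[U]$. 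A faithful simulation of that algorithm therefore still costs $O(D_{G[U]}\log^2 n)$ rounds; relays do not help, because every edge of $G[U]$ is already an edge of $G$, so nothing in the simulation gets cheaper. Moreover, the blanket claim that any pair of $U$-vertices can exchange an $O(\log n)$-bit message in $O(1)$ rounds over a constant-diameter $G$ is false in the \congest model due to congestion (many pairs routing through the same relay), although, fortunately, such pairwise routing is not what is needed.

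What is needed --- and what the paper does in \cref{lem:cps} --- is to modify the algorithm rather than simulate it: the vertices of $M'\cup N(M')$ stay in the communication structure as internal nodes of a BFS/aggregation tree of $G$ rooted at the leader, but they do not participate in the conditional-expectation computation (they contribute nothing to the sums). Each seed bit is then decided by one convergecast and one broadcast along this tree, costing $O(D_G)=O(1)$ rounds and avoiding congestion since only partial sums travel up and single bits travel down; this yields $O(\log^2 n)$ rounds and $O(|E|\log^2 n)$ messages even when $G[U]$ is disconnected or has large diameter. With this repair --- which is precisely the content of \cref{lem:cps} --- the remainder of your argument, substituting $T'(n)=O(\log^2 n)$ into \cref{t2}, goes through as you state.
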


Our algorithm uses unicast rather than broadcast messages,
which allows each processor to communicate differently with each of its
neighbors, and, more concretely,
to communicate with a subset of its neighbors---otherwise there is no hope to achieve a message complexity of $o(\Delta)$.

Prior to this work, the distributed algorithm of \cite{AOSS18} was the only one
providing a deterministic algorithm with $o(m)$ amortized message complexity. By
allowing the amortized round complexity to grow from constant to a small
polylogarithmic factor, we obtain a polynomial improvement in the message
complexity.
Perhaps more important than this improvement, in contrast to the work of
\cite{AOSS18} that relies on the network being connected at all times, our
algorithm does not make any assumptions on the network's connectivity. Assuming
that the network is always connected seems to be too much to ask for --- particularly
for dynamic networks. To cope efficiently with disconnected graphs, our
algorithm has to bypass several nontrivial challenges, some of which are
discussed next in~\cref{sec:tech-overview}.

\myparagraph{Other symmetry-breaking problems}
We also show the following results for MM, $(3/2)$-approximate MCM,
and $(\Delta+1)$-vertex coloring,
which were not known prior to this paper.
All of our results for these problems are $O(1)$-message-efficient, matching
the sequential running time of the best centralized deterministic
algorithms for these problems~\cite{KNNP20,NS13}.
\begin{theorem}\label{thm:mm}
    Starting from an empty distributed network on $n$ vertices, a maximal
    matching (MM) and a $(3/2)$-approximate maximum matching
    can be maintained \emph{deterministically} over any sequence
    of edge insertions and deletions with $O(1)$ and $O(\log \Delta)$ rounds,
    worst-case, respectively, and
    $O(\sqrt{m})$ worst-case and amortized message complexity, respectively.
\end{theorem}

\begin{theorem}\label{thm:coloring}
    Starting from an empty distributed network on $n$ vertices, a $(\Delta +
    1)$-vertex coloring can be maintained \emph{deterministically} over any
    sequence of edge insertions and deletions with $O(1)$ rounds, worst-case,
    and $O(\sqrt{m})$ worst-case message complexity.
\end{theorem}

\subsection{Paper Organization}
In~\cref{sec:tech-overview}, we provide a technical overview of our algorithms.
Then, in~\cref{sec:simple} we review the edge orientation technique and present
our algorithms for MM and $(\Delta+1)$-coloring.
In~\cref{sec:MCM}, we present our algorithm for $(3/2)$-approximate MCM.
Finally, in~\cref{sec:MIS}, we provide our new algorithm for MIS that improves
on~\cite{AOSS18}.

\section{Proof Overview and Technical Challenges}\label{sec:tech-overview}

Throughout the paper we show that oftentimes centralized algorithms that require a {\em global} property such as the
current value of $m$ can instead use information obtained from each vertex's
{\em local} neighborhood to approximate the global property. Similarly, instead of maintaining {\em global} invariants, we settle for local, weaker variations of such invariants, and demonstrate that they are sufficiently strong for our purposes.
Our insights and techniques may be
applied more broadly to obtain deterministic, dynamic, distributed algorithms
beyond the specific problems we study in this paper.

We first consider the maximal matching and $(\Delta + 1)$-vertex coloring
problems, and our proposed algorithms for them can be viewed as a ``warm-up''
for our later algorithms. Our algorithms employ {\em dynamic edge orientations}
to achieve $O(\sqrt{m})$ message complexity (matching the centralized update
time), together with constant round complexity, without knowing the current number of
edges $m$ and without making any connectivity assumptions.
We demonstrate that dynamic edge orientations are useful for ensuring
that each vertex sends information to at most $O(\sqrt{m})$ neighbors per
update. By maintaining the \emph{inherently local} invariant that edges are
oriented from vertices with \ld to vertices with \hd, we can
achieve a maximum outdegree of $O\left(\sqrt{m}\right)$. Using this, we can make
sure that each vertex has complete information on all its incoming neighbors,
and only need to spend $O(\sqrt{m})$ messages to find the remaining information
on its out-neighbors.
More specifically,
for algorithms where vertices only need to know information about their direct
neighbors, we show that such an orientation is enough to ensure $O(\sqrt{m})$ message
complexity (and $O(1)$ round complexity).

We then proceed to the $3/2$-maximum cardinality matching (MCM) problem. To
achieve an efficient algorithm in this case, the idea of dynamic edge
orientations is insufficient because vertices need not only information about
their direct neighbors but also information about their $2$-hop neighborhood.
To ensure that the number of messages stays
$O(\sqrt{m})$, we made two key insights. First, using the same \hd, \ld
partitioning as for MM and $(\Delta + 1)$-coloring (cutoff of $\Theta(\sqrt{m})$ but
without knowing $m$),
we ensure that \emph{all high-degree vertices are always matched}. We show that
the following is true for each \emph{unmatched} \hd vertex $w$: either $w$ can
find a free (unmatched) neighbor in the first $\Theta(\sqrt{m})$ neighbors it queries, or $w$ can
find a neighbor that is matched to a \ld vertex (we call this the
\emph{surrogate}) in the first $\Theta(\sqrt{m})$
neighbors it queries.
This fact allows us to ensure that $w$ is always matched.
The second key insight is that searching $O(\sqrt{m})$ neighbors of any vertex
allows us to determine whether the vertex is \hd or \ld, and, importantly, we do
this without needing to know the value of $m$.
This insight allows
us to do the following for each vertex incident to a matching
edge that was deleted: we can search its neighbors by guessing the value of
$\sqrt{m}$ in $O(\log \Delta)$ attempts;
starting with one neighbor, we successively search two times the previous number
of neighbors until either we find a free neighbor or a surrogate (or we run out
of neighbors). Once we find a free neighbor or a surrogate, we are done since we
have successfully matched the vertex. By our observation above, it is sufficient
to search $\Theta(\sqrt{m})$ neighbors for each \hd vertex before finding a free
neighbor or a surrogate. Since we double the number of neighbors we search, we
overestimate the number of such neighbors by at most a factor of $2$ and so the number
of messages sent is $O(\sqrt{m})$ (dominated by the last guess). The number of
rounds is $O(\log \Delta)$ since we require this many guesses before guessing
$\Delta$.
These insights show that to approximate a \hd/\ld partition for the MCM
problem, it is enough to look at (part of) a vertex's $2$-hop neighborhood.

Our main result for this paper is
solving the dynamic, distributed MIS problem; hence, we spend the remainder of
this section on its challenges and solutions. We use the ideas we obtain from
the other problems combined with several new insights in order to achieve our
final algorithm. The main challenge we face is that we are no longer able to use
only the $2$-hop neighborhood in our algorithm; instead we must use (part of) the $6$-hop
neighborhood and we formulate a new \emph{restart} procedure (discussed a bit
later) for this purpose.

Our distributed algorithm for maintaining an MIS, as summarized in Corollary \ref{cor},
is a direct consequence of the reduction of Theorem \ref{t2}, which strengthens that of Theorem \ref{t1}.
The starting point of both our reductions is the centralized sequential
algorithm of \cite{GuptaK18}, hereafter {Algorithm [GK]}, with amortized update time
$O(m^{2/3})$. We start by giving a high-level overview of [GK].

[GK] dynamically partitions the vertex set $V$ into two subsets $V_H$ and $V_L$
of high-degree and low-degree vertices, respectively, according to a degree
cutoff of $m^{2/3}$,   giving priority to the vertices of $V_L$ to be in the MIS
over those of $V_H$. For dynamic graphs of maximum degree $\le \Delta$, there is
a deterministic algorithm, hereafter Algorithm $\alg$, for maintaining an MIS with
amortized update time $O(\Delta)$ \cite{AOSS18}.
The subgraph $G_L = G[V_L]$ of $G$ induced by $V_L$ has maximum degree $\le
m^{2/3}$, thus an MIS $M_L$ for $V_L$ can be maintained with update time
$O(m^{2/3})$ by applying Algorithm $\alg$ on $G_L$.
No vertex of $V_H$ that is incident to an MIS vertex of $V_L$ can be in the MIS.
Let the remaining vertices of $V_H$ which are not in the MIS and not adjacent to
any vertices in the MIS be $V^*_H$; let $G^*_H = G[V^*_H]$ be
the subgraph of $G$ induced by $V^*_H$.
Since $G^*_H$ contains at most $O(|V^*_H|^2) = O(|V_H|^2) = O(m^{2/3})$ edges, a
fresh MIS $M_H$ for $G^*_H$
can be computed in $O(m^{2/3})$ time following every edge update by running any
linear-time static MIS algorithm. Specifically, the algorithm used in [GK] is
the trivial one of iteratively picking an arbitrary $v \in V^*_H$
to add to $M_H$ and then removing $v$ and all adjacent vertices from $V^*_H$;
the procedure runs until $G^*_H$ is empty.
The output MIS, given by $M := M_L \cup M_H$, is thus maintained in $O(m^{2/3})$ update time.

We next highlight some of the challenges that we faced on the way to achieving
an efficient distributed implementation of [GK].
\begin{challenge} \label{c1}
The task of computing a fresh MIS $M_H$ on $G^*_H$ following every edge update cannot be distributed efficiently.
In fact, this task as is cannot be distributed {\em even inefficiently}.
\end{challenge}
First and foremost, computing an MIS on $G^*_H$ requires all the vertices in $V^*_H$ to wake up.
Alas, in the local wakeup model, only the updated vertices are woken up following an edge update,
hence waking up all the vertices in $V^*_H$ may be prohibitive if the diameter is large and even infeasible (if the graph is not connected).
Instead of computing an MIS on the entire $G^*_H$, we propose to apply it on a {\em carefully chosen} subgraph of $G^*_H$,
denoted here by $\tilde G_H$, where all the vertices of $\tilde G_H$ are at constant distance from the updated vertices.
Simply waking up all vertices at constant distance from the updated vertices in $O(1)$ rounds to participate in a static MIS distributed algorithm won't work, as an MIS computation over the corresponding subgraph will cause conflicts due to edges and vertices lying outside of that subgraph. Consequently, we need to first identify the vertices among those that will not trigger any further conflicts, and compute the subgraph $\tilde G_H$ induced on them only.
Moreover, for the reduction of Theorem \ref{t2}, we would need $\tilde G_H$ to have a bounded diameter.
This is impossible in general, since, even if the diameter of the entire graph is small, we have no control whatsoever on the diameter of $G^*_H$, let alone on the diameter of $\tilde G_H$.

To overcome this difficulty, we must add to $\tilde G_H$ some vertices of $V_L$---those lying on the shortest paths between the updated vertices and the vertices of $\tilde G_H$ that we wish to apply the static MIS algorithm on.
We then face another challenge: the resulting graph, denoted by $G'_H$, contains vertices of $V_L$ on the one hand, but on the other hand the static MIS algorithm that we apply on $G'_H$ must not affect the MIS $M_L$ for $V_L$, as that would blow up both the round and message complexities.
To this end, we need to apply an input-respecting MIS algorithm on top of $G'_H$, where the input independent set should contain all vertices of $V_L$ in $G'_H$ that belong to $M_L$. This, in turn, requires us to adapt the algorithm
of \cite{CPS20} to be input-respecting.

\begin{challenge} \label{c2}
Vertices (processors) do not know the number of edges in the graph, which dynamically changes.
\end{challenge}
The update times of various (centralized) dynamic graph algorithms depend polynomially on the dynamic number of edges $m$.
For example, in the context of symmetry breaking problems,
for MIS there is an $O(m^{3/4})$ update time algorithm \cite{AOSS18}
and $\tO(m^{2/3})$ update time algorithms \cite{DZ18,GuptaK18};
for maximal and $(3/2 + \eps)$-approximate maximum matching there are $O(\sqrt{m})$
and $O(m^{1/4})$ update time algorithms \cite{GP13,BS16,NS13}; and for
$(\Delta+1)$-coloring there is an $O(\sqrt{m})$ update time algorithm
\cite{DBLP:journals/corr/abs-1909-07854}.
In these algorithms and others, achieving an update time polynomial in $m$ usually requires knowledge of $m$ or some approximation of it, where this knowledge is directly translated into the dynamic maintenance of data structures and invariants with respect to $m$.
In particular, the standard way to cope with a dynamic number of edges $m$ in a centralized setting is to apply a global
``restart'' procedure every time the number of edges grows or shrinks by a factor of 2, where the role of a restart is not merely to compute the new number of edges, but also to rebuild the data structures with respect to the new value of $m$, thereby restoring the validity of the invariants.

In the local wakeup model, we shall assume that each update step has a running timestamp associated with it.
We believe this assumption, which was used before (cf.\ \cite{AOSS18}), should be acceptable in practice.
Moreover, a variant of our algorithm does not require the usage of timestamps,
with the caveat that its amortized message complexity is $O(m^{2/3}_{max}\log^2{n})$
(see~\cref{sec:restart} for details).
For brevity, here, we focus on our algorithm that uses timestamps.

 Importantly, only the updated vertices learn about the timestamp of the corresponding update.
{\em If the graph were always connected}, a global restart procedure could be
distributed in a straightforward way, at least in terms of computing the
up-to-date number of edges. In this case, every vertex would store the timestamp
$t_{last}$ of the last restart as well as the number of edges $m_{last}$ in the
graph at that time, and once an update step occurs with timestamp $t_{last} +
m_{last}/2$, one can compute the up-to-date number of edges $m$ in the graph
within $O(m)$ rounds and messages, and then broadcast this number as well as the
current timestamp to all vertices.
Such a restart procedure was employed by Assadi et al.\ \cite{AOSS18} for
distributing their centralized algorithm.
We believe that the assumption that the graph is always connected is too strong when dynamic graphs are involved.
Indeed, a standard assumption in many dynamic graph algorithms with amortized time bounds (including that of \cite{AOSS18}) is that the initial graph is empty.
Moreover, the underlying goal is to improve over the $\Delta = O(n)$ message
complexity of the algorithm of \cite{AOSS18},  which means that the relevant
regime of graph densities is $m = o(n^{3/2})$.
If the graph is not connected, however, it seems hopeless for vertices to learn
up-to-date number of edges in the graph. This is, in fact, a rather
generic problem, which arises naturally when attempting to distribute any
centralized dynamic graph algorithm that requires knowledge of $m$
\cite{BS16,DZ18,GuptaK18,GP13,DBLP:journals/corr/abs-1909-07854,NS13}.

When $m$ is not known by the vertices, we can no longer maintain the invariant
that the sets $V_L$ and $V_H$ consist of the low-degree and high-degree
vertices, respectively. Indeed, a global change in the number of edges also
changes the degree cutoff of $m^{2/3}$, hence it is possible for this invariant
to hold at some point and to be drastically violated by many vertices at a later
step. This implies that vertices of $V_L$ could have very high
degree, hence
applying a distributed implementation of $\alg$ on top of the subgraph $G_L$ of
$G$ induced by $V_L$ would blow up the message complexity; symmetrically,
vertices of $V_H$ could have very low degree, hence their number may be much
larger than $m^{1/3}$, in which case the subgraph $G'_H$ of $G$ on which a
static MIS computation is applied
may contain many more edges than $m^{2/3}$, which too may blow up the message complexity.
Overcoming this hurdle is not a minor technicality, it is in fact the crux of our distributed algorithm.
To this end we propose several new structural insights into the problem,
which enable us to replace the global invariants of the centralized algorithm of \cite{GuptaK18}
by local, carefully-tailored invariants
that can be maintained efficiently in the local wakeup model, to ultimately
achieve our result. The key idea behind our approach is that we can use
the local neighborhood (instead of the global neighborhood) of a vertex $v$ to
determine whether $v$ should be categorized as low-degree or high-degree.
Importantly, this local neighborhood approach does not require the graph to be
connected and we show that it is enough to obtain our desired message
complexity, together with small $\poly\log n$ round complexity.

\section{Preliminaries}

Given an undirected graph $G = (V, E)$, which represents a distributed network, each vertex $v \in V$
has access to an adjacency list of its current neighbors.
We let $deg(v)$ denote the degree of $v$ or the size of $v$'s adjacency
list.
As commonly assumed, each vertex can distinguish its neighbors via unique IDs assigned to each vertex.

Following standard convention in dynamic graph algorithms, the graph is empty at the outset,
and there is a single edge update per step.
As mentioned, we consider the local wakeup model where, following an edge update (insertion or deletion),
only the two endpoints incident to that edge
\emph{wake up} and may initiate an update procedure.
Any other vertex remains asleep until receiving a message from a neighbor, at
which stage it can start participating in the update procedure (by exchanging
messages with its neighbors and performing other actions).
Computation proceeds in synchronous rounds; thus all vertices (those which are awake)
know when a round
of communication starts and ends,
and a vertex cannot respond to messages it received in the same round - it can
do so in the next round. Our algorithms operate in the \congest
model so messages have size
$O(\log n)$ bits.

In addition to considering worst-case rounds and messages per update,
in this paper, we also consider
\emph{amortized round complexity} (or {\em amortized update time}) and
\emph{amortized message complexity}, which bound the \emph{average} number of
communication rounds
and $O(\log n)$-bit messages sent, respectively, needed to update the solution per edge update, over a \emph{worst-case} sequence of updates.
In contrast to the static setting,  the amortized message complexity of distributed tasks could be sublinear in the graph size, which makes it natural to optimize both measures of amortized round and message complexities rather than just the former.

We consider three different notions of number of edges in the graph: $m_{max}$,
$\mavg$ and $m_{cur}$. $\mmax$ is the maximum number of edges in the graph over
all updates in the update sequence. $\mavg$ is the average number of updates in
the graph over the update sequence. Finally, $\mcur$ is the current number of
edges in the graph after the most recent update. When the particular definition
being used is obvious from context, we simply use $m$.

\section{Dynamic MM and \texorpdfstring{$(\Delta + 1)$}{}-Coloring via Dynamic Edge
Orientations}\label{sec:simple}

We begin our discussion of our dynamic, distributed algorithms for
symmetry-breaking problems with our simple warm-up algorithms for MM and
$(\Delta + 1)$-coloring.
We first present the edge orientation technique which is one way of ensuring
that vertices only need to send messages to a small enough subset of neighbors
upon an update. This technique is based on the
idea that for each vertex $v$, if the edge $(u,v)$ is oriented towards $v$, then
$v$ has all the relevant information about $u$. For every edge insertion or deletion
that causes
$u$ to change its properties, $u$ notifies all its outgoing neighbors.
In such a case, when $v$ needs to make a decision using all
its neighbors' properties, it only needs to ask its outgoing neighbors about
their properties to have a full picture of all its neighbors.
By using the edge orientation technique we can solve problems like dynamic, distributed
MM and $(\Delta + 1)$-coloring. All round and message complexities
in this section are in terms of
$\mcur$, the current number of edges in the graph. For convenience, we define
$m = \mcur$ in this section.

\subsection{Edge Orientation Algorithm}\label{para:orientation}
We orient the edges in the following way: provided an edge insertion $(u, v)$,
the edge is oriented from the vertex with lower degree to the vertex with higher
degree. If the degrees of the two vertices are equal, then the edge is oriented
arbitrarily.
Each vertex $v$ also maintains a number $p_v$ indicating its degree during the
last time it \emph{checked the orientation}
of all its adjacent edges
and \emph{reoriented} any of the checked edges.
Initially, we set $p_v = 1$ for all vertices.
If an edge insertion or edge deletion $(u, v)$
causes $v$'s degree to fall outside the range $[p_v/2, 2p_v]$, $v$ asks its
neighbors for their degree. For each neighbor $w$ of $v$, $v$ directs the edge $(w, v)$
from $w$ to $v$ if $\deg(w) \leq \deg(v)$. Otherwise, $v$ directs the edge $(v,
w)$ from $v$ to $w$. After performing the reorientation of edges, $v$ sets $p_v
= \deg(v)$.
The reoreintation of the edges is done gradually, $20$ edges per update step, which means that this reorientation process would be carried over during the course of the \emph{next} $\deg(v)/10$
updates incident to $v$. This means that we finish updating the changes before the next
set of updates causes $\deg(v)$ to fall outside the range $[p_v/2, 2p_v]$ again.\\
The pseudocode for this algorithm is given in~\cref{alg:orient} and~\cref{alg:reorient-edges}.

\iflong

\begin{algorithm}
	\caption{Orient\_edges$(u,v)$}\label{alg:orient}
    \SetAlgoLined
   	\LinesNumbered
    \KwResult{Orient the edges properly on an edge insertion or deletion.}
   	\textbf{Input: } An edge insertion or deletion $(u,v)$.\\
    \If{$\deg(v) > \deg(u)$}{Orient the edge from $u$ to $v$.}
    \Else{Orient the edge from $v$ to $u$.}
    \For{$z \in (u,v)$}
 	 {\If{$\deg(z) > 2p_z$ or $\deg(z) < p_z/2$}{Mark $z$ as in process of checking reorientation.\\$z$ updates $p_z = \deg(z)$.}
 	 \If{$z$ is marked as in process of checking reorientation}{Reorient\_edges($z$)}}
\end{algorithm}

\begin{algorithm}
	\caption{Reorient\_edges($z$)}\label{alg:reorient-edges}
    \SetAlgoLined
   	\LinesNumbered
    \KwResult{Reorient the edges adjacent to $z$ gradually}
   	\textbf{Input: } A vertex $z$ that needs to reorient its adjacent edges.\\
   	\If{all the edges adjacent to $z$ are marked as checked}{Mark $z$ as done. \\ \Return}
   	\For{$20$ adjacent edges $(z,x)$ of $z$ that still might need reorientation}{
     \If{$\deg(z) > \deg(x)$}{Orient the edge from $x$ to $z$.}
     \Else{Orient the edge from $z$ to $x$.}
     Mark $(z,x)$ as checked}
\end{algorithm}
\fi

The following invariant is crucial for the update algorithm.

\begin{invariant}\label{inv:edge-orientation}
For any edge $(u, v)$, if the edge is oriented from $u$ to $v$, then $\deg(u)
\leq 4\deg(v)$.
\end{invariant}

\iflong
We first prove that the invariant is always maintained under the above update
algorithm~\cref{alg:orient} by using the algorithms above and~\cref{obs:reorient}.

\begin{observation}\label{obs:reorient}
    An edge $(u,v)$ that was oriented from $u$ to $v$ is reoriented in~\cref{alg:reorient-edges} only if $\deg(v)>\deg(u)$.
\end{observation}

\begin{lemma}\label{lem:orient-update}
    \cref{inv:edge-orientation} is always maintained.
\end{lemma}

\begin{proof}
     We prove this lemma via induction on the update step. For the basis of the
     induction, we consider the graph at the beginning (before any edge update
     takes place), at which stage the graph is empty. Therefore, the first update would be an edge
     insertion $(u,v)$. If after this insertion the edge is oriented from $u$ to
     $v$ then by~\cref{alg:orient} $\deg(u) < \deg(v)$, so trivially
     $\deg(u)\leq 4\deg(v)$.
     Assume by induction that \cref{inv:edge-orientation} is maintained after update $i$, we now prove that \cref{inv:edge-orientation} is maintained after update $i+1$.
     Let $(u,v)$ be an edge in the graph after update $i+1$, such that the edge is oriented from $u$ to $v$.
     \begin{enumerate}
         \item If $(u,v)$ was inserted during update $i+1$: by~\cref{alg:orient}, the edge is oriented from $u$ to $v$ only if $\deg(v)>\deg(u)$ so trivially $\deg(u)\leq 4\deg(v)$. Note that there is no scenario in which $(u,v)$ was first oriented towards $v$ and then $u$ or $v$ would reorient this edge; this is because the orientation of an edge (both in the insertion and in reorientation) is determined according to the degrees of its endpoints, and since the degree of $u$ and $v$ did not change between the insertion and the reorientation, $(u,v)$ will not be reoriented.
         \item If $(u,v)$ was oriented from $v$ to $u$ after update $i$:
         Since $(u,v)$ is oriented from $u$ to $v$ after update $i+1$, it means that an edge reorientation was made. By~\cref{obs:reorient} this happens only if $\deg(v)>\deg(u)$.
         \item If $(u,v)$ was oriented from $u$ to $v$ after update $i$:
         by our induction hypothesis, after update $i$ it holds that $\deg(u)\leq 4\deg(v)$, if the degrees of $u$ or $v$ did not change during update $i+1$ then it still holds that $\deg(u)\leq 4\deg(v)$.
         If the degree of $u$ or $v$ changed, then an edge reorientation might occur, in that case, since $(u,v)$ is oriented from $u$ to $v$ after update $i+1$, $\deg(v)$ must be greater than $\deg(u)$.
         The only case that left to be handled is if the degree of $u$ or $v$ changed but an edge reorientation did not occur. Let $t$ be the time when the orientation of $(u, v)$ was last checked and let $\deg_t(u)$ and $\deg_t(v)$ be the degree of $u$ and $v$, respectively, at time $t$.
         Since $(u, v)$ is oriented from $u$ to $v$, it must be the case that $\deg_t(u) \leq \deg_t(v)$. Similarly, $\deg(v)$ and $\deg(u)$ must be within a factor of two of $\deg_t(v)$ and $\deg_t(u)$ respectively (otherwise, the edge would be checked at a later time than $t$). So $\deg(u) \leq 2\deg_t(u)$ and $\deg(v) \geq \deg_t(v)/2$. This means $\deg_t(u) \leq \deg_t(v)$ which implies that $\deg(u)/2 \leq \deg_t(u) \leq \deg_t(v) \leq 2\deg(v)$ which means that $\deg(u) \leq 4\deg(v)$.
     \end{enumerate}

\end{proof}

We prove the following property of the orientation maintained in this manner.
\else
    \cref{inv:edge-orientation} immediately allows us to prove the following
    property regarding the outdegree of any vertex in the graph.
\fi

\begin{lemma}\label{lem:outgoing}
Each vertex $v$ has at most $4\sqrt{m}$ outgoing neighbors.
\end{lemma}

\iflong
\begin{proof}
Suppose for contradiction that there exists a vertex $v$ with more than $4\sqrt{m}$ outgoing neighbors. Let
$D$ be the degree of $v$. Then, by~\cref{inv:edge-orientation}, each outgoing
neighbor of
$v$ has degree at least $D/4$. By our assumption, because $v$ has $> 4\sqrt{m}$
outgoing neighbors, $D = \deg(v) > 4\sqrt{m}$. This means that each outgoing
neighbor of $v$ has degree $> \sqrt{m}$. Thus the sum of the degrees in the graph is greater than
$4\sqrt{m} \cdot \sqrt{m} = 4m$. Since there are at most $m$ edges, the sum of the
degrees cannot be greater than $2m$, a contradiction. Hence, each vertex $v$ has at
most $4\sqrt{m}$ outgoing neighbors.
\end{proof}

Now, every time a vertex $v$ changes its state, $v$ updates all its outgoing neighbors about this change.
When the orientation of an edge $(u,v)$ updates,
$v$ must send its state to $u$ if the edge
is now
oriented towards $u$; otherwise, $u$ must send its state to $v$.
For the $(\Delta + 1)$-coloring problem we define the state of a vertex as its color and for the maximal matching problem we define the state of a vertex to be either free or matched.
This will guarantee that vertices would know the color of their incoming neighbors or which of their incoming neighbors are matched.
We will see later that by using~\cref{alg:orient} and preserving the state of the outgoing neighbors as described above, we can solve those problems efficiently.

\begin{lemma}\label{lem:incoming-neighbors-property}
    Each vertex $u$ has a complete information on the states of all its incoming neighbors after each update.
\end{lemma}

\begin{proof}
    We prove this lemma via induction on the update step. For the induction
    basis, we consider the graph at the beginning (before any edge update takes
    place), at which stage the graph is empty and therefore the condition is
    trivially maintained.
    Assume by induction that all vertices have complete information on the states of their incoming neighbors after update $i$.
    Let $u$ be an arbitrary vertex in the graph, we'll prove that $u$ has complete information on the state of its incoming neighbors after update $i+1$.
    For each incoming neighbor $v$ of $u$, if $v$ was an incoming neighbor of $u$ after update $i$, then by our explanation above, if the state of $v$ had been changed during update $i+1$, $v$ must have updated $u$ about it.
    Otherwise, by our induction hypothesis, $u$ has the state of $v$ after update $i$ and since it didn't change during update $i+1$, $u$ still has an updated information about $v$.
    If $v$ wasn't an incoming neighbor of $u$ after update $i$, it means that it became an incoming neighbor of $u$ during update $i+1$. This could happen if either $(u,v)$ was inserted during update $i+1$ or if the orientation of $(u,v)$ was changed during this update. According to the explanation above, in both cases $v$ updates $u$ about its state, so $u$ must have the current state of $v$. As this holds for each incoming neighbor $v$ of $u$, the lemma follows.

\end{proof}
\fi

\begin{lemma}\label{lem:message-round-complexity-orient}
    Given an edge insertion or deletion $(u, v)$,~\cref{alg:orient} requires $O(1)$
    message complexity and $O(1)$ round complexity, both worst-case.
\end{lemma}

\iflong
\begin{proof}
    Whenever the degree of a vertex $u$ falls out of the range $[p_u/2, 2p_u]$,
    we must check all the edges incident to $u$ and reorient if necessary.
    This naively costs
    $O(\deg(u))$ messages.

    To achieve $O(1)$ worst-case message complexity, we make the following procedure. When $\deg(u)$ falls outside the range
    $[p_u/2, 2p_u]$, we first update $\deg(u)$ to be the new degree. Then, we
    reorient $20$ edges incident to $u$ during each update incident to $u$, such that after $p_u/10$ updates incident
    to $u$, all the edges have been reoriented if necessary.
    The next $p_u/10$ updates incident to $u$ cannot increase the outdegree of $u$ by
    more than $p_u/10$, thus, the number of outgoing neighbors of $u$ is still
    $O(\sqrt{m})$. Futhermore, the next $p_u/10$ updates cannot cause $\deg(u)$ to fall
    outside the range $[p_u/4, 4p_u]$ and so another reorientation update cannot
    occur before the current update has been processed. Finally, since each edge
    update is incident to two vertices, we charge at most $40$ reorientations to the
    update (if both endpoints are performing reorientations).
    As for the number of rounds, in each update $u$ uses at most eight rounds of communication, performing the following
    tasks (as described in~\cref{alg:orient} and in the above explanation about the state of a vertex):
    \begin{enumerate}
        \item Check the degree of $v$ if the edge $(u,v)$ was inserted and receive it.
        \item If $\deg(u)$ falls outside the range $[p_u/2, 2p_u]$:
        \begin{enumerate}
            \item Ask $20$ neighbors their degrees.
            \item Receive the sent degrees.
            \item Reorient the edges if necessary.
        \end{enumerate}
        \item For new incoming neighbors (due to insertion or reorientation):
        \begin{enumerate}
            \item Ask its new incoming neighbors to send their state.
            \item Receive the sent states.
            \item Reorient the edges if necessary.
        \end{enumerate}
    \end{enumerate}

\end{proof}
\fi
\subsection{Distributed \texorpdfstring{$(\Delta + 1)$}{}-Coloring}

In this section, we maintain a $(\Delta + 1)$-coloring in the graph using the
edge orientation technique we presented in~\cref{para:orientation}.
First, given an edge update, we run~\cref{alg:orient}.
For each edge $(u,v)$ that was oriented from $u$ to $v$ and is reoriented (such that it is now oriented from $v$ to $u$), $v$ sends its color to $u$.
Therefore, the following  property is maintained: for every edge $(u,v)$ that is oriented from $u$
towards $v$, $v$ knows the current color
of $u$.
The algorithm works as follows: given an edge insertion $(u, v)$ between two vertices $u$ and $v$ with the
same color, pick one of the two vertices arbitrarily, w.l.o.g. $u$, to recolor.
$u$ queries all its outgoing neighbors for their colors. Then, $u$
picks a color that does not conflict with any of its neighbors. This can be done since $u$ already has complete information about all its incoming neighbors, so by asking its outgoing neighbors it receives information about all its neighbors' colors.  Finally, $u$
informs all its outgoing neighbors of its new color. All of $u$'s outgoing
neighbors store $u$'s color.
\iflong
\begin{algorithm}
\caption{$(\Delta + 1)$-Coloring}\label{alg:coloring}
    \SetAlgoLined
    \LinesNumbered
    \KwResult{Maintains a $(\Delta + 1)$-coloring in the graph upon edge insertion or deletion}
    \textbf{Input: } An edge insertion or deletion $(u,v)$.\\
    Call~\cref{alg:orient} on $(u,v)$.\\
    \If{there were edge flips}{\For{an edge $(z,x)$ that was flipped, where $z \in \{u,v\}$}{
    \If{the edge is oriented towards $x$}{$z$ updates $x$ about its color}
    \Else{$x$ updates $z$ about its color}}}
    \If{$(u,v)$ is an inserted edge}{\If{$u$ and $v$ have the same color}{
    pick a vertex from $\{u,v\}$ arbitrarily to recolor (w.l.o.g. $u$ is picked)\\
    \For{each outgoing neighbor $w$ of $u$}{$u$ asks $w$ for its color\\ $w$ responds to $u$ with its color}
    recall that $u$ has complete information of all colors of its incoming neighbors (~\cref{lem:incoming-neighbors-property} ) and now has complete information about all its neighbors\\\ $u$ picks a different color from all its neighbors\\
    \For{each outgoing neighbor $w$ of $u$}{$u$ informs $w$ about its new color}}}
\end{algorithm}
\fi

Because~\cref{alg:coloring} uses the edge orientation technique
presented above,
\cref{inv:edge-orientation} and~\cref{lem:incoming-neighbors-property} are
maintained after each edge insertion and deletion and the algorithm has the same
message and round complexity.

\begin{theorem}\label{lem:message-round-complexity-coloring}
    Starting from an empty graph, there exists a dynamic,
    distributed \congest $(\Delta + 1)$-coloring algorithm (\cref{alg:coloring}) that
    maintains a valid coloring of the distributed network in $O(\sqrt{m})$ message complexity
    and $O(1)$ round complexity, both worst-case.
\end{theorem}

\iflong
\begin{proof}
    By~\cref{lem:message-round-complexity-orient}, reorienting the edges if
    necessary costs $O(1)$ worst-case messages and at most $O(1)$ rounds.
    Now we show that whenever $u$ is recolored, it sends at most $O(\sqrt{m})$
    messages and~\cref{alg:coloring} uses $O(1)$ rounds. When $u$ is recolored, it sends its color
    only to its outgoing neighbors. By~\cref{lem:outgoing}, $u$ has at most
    $4\sqrt{m}$ outgoing neighbors. Thus, $u$ requires at most $O(\sqrt{m})$
    messages to send its color to its outgoing neighbors. This only requires one
    round of communication.

    Finally, we can show that the number of
    messages is $O(\sqrt{m})$ and number of rounds is $O(1)$ for each edge
    insertion or deletion. Following any edge deletion, no vertices need to be
    recolored. A vertex $v$ would only need to check all its edges if its degree
    decreases beyond $p_v/2$. Then,
    by~\cref{lem:message-round-complexity-orient},
    this results in $O(1)$ worst-case messages and $O(1)$ rounds per
    update. Following any edge insertion, $(u, v)$ where $u$ and $v$ have the same
    color, one of the two must recolor itself. As proved above, $u$ or $v$
    require $O(\sqrt{m})$ messages to send their new color in $O(1)$ rounds. If
    $u$ and $v$ have different colors, neither of them needs to recolor itself.
\end{proof}
\fi

\subsection{Maximal Matching}

We shall follow the above approach for $(\Delta+1)$-coloring
to maintain a maximal matching under
edge updates in $O(\sqrt{m})$ messages and $O(1)$ rounds, both worst case, per update.
For any edge $(u, v)$ that is oriented from $u$ to $v$, $v$ knows whether $u$ is
currently matched or free. Then, given any update we first perform the edge orientation
algorithm. (Any edge reorientation $(v, u)$ causes $v$ to send a message to $u$
indicating whether it is matched or free.)
For an edge insertion, no additional updates need to be made.
For an edge deletion $(u, v)$, if the deleted edge was an edge in the matching,
then we do the following for $u$ and after that for $v$; we next describe how to
handle $u$ but $v$ should be handled in the same way. $u$ first checks whether any of its incoming neighbors are free. If any are free, $u$ arbitrarily picks such an incoming neighbor to match with. If no incoming neighbors
of $u$ are free, $u$ asks its outgoing neighbors whether they are free (and its outgoing neighbors sends back their answer). If any are
free, $u$ arbitrarily picks a neighbor to be matched with. This algorithm allows us to achieve the same message and round
complexity as our $(\Delta + 1)$-coloring algorithm. The pseudocode for our
algorithm is provided in~\cref{alg:matching}.

\begin{algorithm}
\caption{Maximal Matching}\label{alg:matching}
    \footnotesize
    \SetAlgoLined
    \LinesNumbered
    \KwResult{Maintains a maximal matching in the graph following any edge update}
    \textbf{Input:} An edge insertion or deletion $(u,v)$.\\
    Call~\cref{alg:orient} on $(u,v)$.\\
    \If{there were edge flips}{\For{an edge $(z,x)$ that was flipped, where $z \in \{u,v\}$}{
    \If{the edge is oriented towards $x$}{$z$ updates $x$ about its current status (whether it is matched or free)}
    \Else{$x$ updates $z$ about its current status}}}
    \If{$(u,v)$ is a deleted edge}{\If{$u$ and $v$ were matched}{First do the following algorithm on $u$ and then on $v$ (to avoid the situation which both of them choose the same neighbor to be matched with)\\
    \For{$z \in \{u,v\}$}{
    $z$ informs all its outgoing neighbors that it is unmatched\\
    \If{there exists an incoming neighbor $x$ of $z$ that is unmatched}{match $z$ with $x$\\ $x$ informs all its outgoing neighbors that it is matched\\
    $z$ informs all its outgoing neighbors that it is matched}
    \Else{
    \For{each outgoing neighbor $w$ of $z$}{$z$ asks $w$ if it is unmatched\\ $w$ responds to $z$ if it's unmatched}\If{there exists an unmatched outgoing neighbor  $w$ of $z$}{match $z$ with $w$\\
    $w$ informs all its outgoing neighbors that it is matched\\
    $z$ informs all its outgoing neighbors that it is matched}}}}}

\end{algorithm}

\begin{theorem}\label{lem:message-round-complexity-mm}
    Starting from an empty graph, there exists a dynamic,
    distributed \congest MM algorithm (\cref{alg:matching}) that
    maintains a maximal matching in the distributed network in $O(\sqrt{m})$ message complexity
    and $O(1)$ round complexity, both worst-case.
\end{theorem}
\iflong
\begin{proof}
    For an edge update $(u,v)$, we handle $u$ and $v$ separately. We next show
    how to handle $u$, but $v$ should be handled in the same way.
    If $(u,v)$ was deleted and $u$ and $v$ were matched to each other: $u$ has
    complete information about its incoming neighbors
    by~\cref{lem:incoming-neighbors-property}, and, therefore, it can check if one
    of them is free. If a free incoming neighbor exists, it costs $O(1)$ rounds
    and messages to matched between $u$ and this neighbor. After that, notifying
    all the outgoing neighbors of $u$ about this change costs $O(1)$ rounds and
    $O(\sqrt{m})$ messages since $u$ has most $4\sqrt{m}$ outgoing neighbors
    by~\cref{lem:outgoing}.
    If $u$ does not have an unmatched incoming neighbor, then checking all of
    $u$'s outgoing neighbors and choosing a match, if one exists, also costs $O(1)$
    rounds and $O(\sqrt{m})$ messages.
    If $u$ and $v$ weren't matched to each other, or if $(u,v)$ was inserted, there is no need to look for a match to $u$ or $v$.
    The edge insertion or deletion might cause reorientation of edges;
    by~\cref{lem:message-round-complexity-orient}, reorienting the edges, if
    necessary, costs at most $O(1)$ rounds and messages.

\end{proof}
\fi

\section{\texorpdfstring{$3/2$-Approximate Maximum Cardinality Matching}{}}\label{sec:MCM}

In this section, we provide a distributed algorithm that uses $O(\sqrt{m})$ messages and $O(\log \Delta)$ rounds in the CONGEST model
to maintain a $3/2$-approximate maximum cardinality matching (MCM) in the input graph under edge insertions and deletions.
Our algorithm is
based on the sequential algorithm of~\cite{NS13}. The main challenge we must surmount in the distributed
setting is the fact that vertices do not know $\mcur$, the current number of edges in the graph.
Unfortunately, unlike the case with MM and $(\Delta + 1)$-coloring, it is no
longer sufficient to bound the number of outgoing edges of a vertex using the
edge orientation algorithm. Thus, in this section, we show a technique to
differentiate a \ld vertex from a \hd vertex using information obtained from the
$2$-hop neighborhood of a vertex without knowledge of the global $\mcur$.
As in~\cref{sec:simple}, we define for convenience $m = \mcur$. The message
complexity is given in terms of $\mavg$ since we use amortized complexity in
this case. The round complexity is given in terms of $\mcur$ since it is
given in worst case complexity.

Let $G=(V,E)$ be an arbitrary graph and let $M$ be an arbitrary matching for $G$. The edges of $M$ are called \emph{matched} edges and the \emph{unmatched} edges are the remaining edges $E \setminus M$.
An \emph{augmenting path} with respect to $M$ is a path whose edges alternate between edges in
$M$ and edges in $E \setminus
M$ which starts and ends on different \emph{free} vertices (i.e.\ vertices which are not
matched).
It is well-known that if $G$ does not have an augmenting path
of length $3$, then $M$
is a $3/2$-approximate MCM \cite{HK73}.
The natural algorithmic idea is to exclude all augmenting paths of length at most $3$ from the graph.

Ideally, we would like to use the edge orientation technique to efficiently maintain a $3/2$-approximate MCM as used for MM and $(\Delta + 1)$-coloring.
We know how to use the edge orientation
technique to efficiently maintain a maximal matching. However, to determine whether or not there is an augmenting
path of length $3$ in the graph, and if so to find it, the vertices adjacent to the edge update must
have updated information not only about their neighbors, but also about their
neighbors' neighbors, which makes the edge orientation technique insufficient for
solving this problem.
To find an augmenting path, if exists,
we first partition (using the procedure explained below) the vertices into \ld and \hd vertices based on the
threshold of $\Theta(\sqrt{m})$ (\hd vertices have degree greater than $\sqrt{2m}$ and \ld vertices
have degree less or equal to $\sqrt{2m}$).
We ensure
that \hd vertices $w$ would not go through all their neighbors
each round to find an augmenting path.
We do this by using an
algorithm that finds a \emph{surrogate} for each \hd vertex $w$ that becomes free. A \emph{surrogate} is a vertex $v'$ that is matched to a neighbor $v$ of $w$, such that the degree of $v'$ is at most $\sqrt{2m}$.
The key observation that was made in \cite{NS13} is the following: \emph{each unmatched \hd vertex can always
find a free neighbor or surrogate in the first
$O(\sqrt{m})$ neighbors it queries}.
Given this fact, we are able to look only at as many neighbors
of $w$ as necessary to find a surrogate. However, since we are working in the distributed setting we don't know the current value of $m$.
Therefore, the algorithm works as follows:
for a vertex $v$ that became free during an edge update,
in each round we ``guess'' the
number of neighbors we would like to check by successively \emph{doubling our number
of neighbors to check} (starting with $1$ neighbor), until we find a free
neighbor or surrogate or there
are no more neighbors to check.
Later on we show that this algorithm won't check
more than $O(\sqrt{m})$ neighbors although the algorithm may not have a good estimation on the value of $m$.
Using this algorithm we maintain the fact that high degree vertex
will \emph{always} be matched after an update that is incident to
it.
Furthermore, assuming there were no augmenting paths of length $3$ before the update,
after the update only vertices that are incident to the edge update or vertices
that became free during the update can be part of new augmenting paths of length
$3$.

For a vertex that incident to an edge update, if it is matched after the update then it cannot be the start or the end
of an augmenting path of length $3$, this is due to the fact that an augmenting path must start and end with unmatched vertices.
This means that a high-degree vertex that is incident to an edge update
will not be the start or the end of any augmenting path during this update unless it
becomes low-degree. Hence, we only need to look
for augmenting paths that start at free low-degree vertices.
This leads to a natural procedure: for a free low-degree vertex $u$, we need to
look at all its matched neighbors, e.g.\ $v$, and ask $v$ whether they have a
mate $v'$ which
has a neighbor, e.g.\ $w$, that is free.
For any such path $(u, v, v', w)$, we remove from the matching the edge $(v,v')$ and add instead the edges $(u, v)$ and $(v', w)$. It is easy to verify (e.g.\ \cite{NS13}) that this eliminates all augmenting paths that
start or end at $u$.

We describe this algorithm in detail below. This algorithm allows us
to obtain our desired result:

\begin{theorem}\label{lem:message-round-complexity-mcm}
    Starting from an empty graph, there exists a dynamic,
    distributed \congest $(3/2)$-approximate MCM algorithm
    (\cref{alg:maximum-matching-deletion}) that
    maintains a $(3/2)$-approximate MCM in the graph in $O(\sqrt{\mavg})$ amortized message complexity
    and $O(\log \Delta)$ worst-case round complexity.
\end{theorem}

\iflong
\myparagraph{Data Structures} Each vertex $v$ maintains the following information
in its local memory:

\begin{enumerate}
    \item A variable $mate_v$ that keeps its mate in the match (if it has one, otherwise it keeps $\varnothing$).
    \item A variable $d_v$ that keeps its degree.
    \item A linked list $N_v$ that maintains all its neighbors.
    \item A counter $f_v$ of the number of its free neighbors.
    \item A linked list $F_v$ that maintains all its free neighbors.
\end{enumerate}
\begin{algorithm}
\caption{$3/2$-approximate MCM: Edge insertion $(u, v)$}\label{alg:maximum-matching-insertion}
    \SetAlgoLined
    \LinesNumbered
    \KwResult{A $3/2$-approximate MCM in the graph.}
    \textbf{Input: } An edge insertion $(u, v)$.\\
        \If{$u$ and $v$ are both free}{Match($u,v$).\\ Add edge $(u,v)$ to the matching.\\
    $u$ notifies all its neighbors that it is matched.\\
    $v$ notifies all its neighbors that it is matched.\\}
    \ElseIf{$u$ is free}{Find\_Mate($u$,$v$)}\If{$v$ is free}{Find\_Mate($v$,$u$)}
\end{algorithm}

\begin{algorithm}
\caption{$3/2$-approximate MCM: Edge deletion $(u, v)$}\label{alg:maximum-matching-deletion}
    \SetAlgoLined
    \small
    \LinesNumbered
    \KwResult{$3/2$-approximate MCM in the graph.}
        \textbf{Input: } An edge deletion $(u, v)$.\\
            \For{$z\in\{u,v\}$}{
    \If{$u$ and $v$ were matched to each other}{
        \If{$z$ has at least one free neighbor}{
            $z$ chooses a free neighbor $w$ arbitrarily.\\
            Match($z,w$).\\ Add edge $(z,w)$ to the matching.\\
            $w$ notifies all its neighbors that it is matched.\\}
        {\Else{//Makes a call to~\cref{alg:maximum-matching-surrogate}.\\$w'$ = Surrogate($z$)\\
        \If{$w' \neq \varnothing$\\ //In this case, it is guaranteed that $z$ is matched and $w'$ is free.\\}{Match\_Surrogate($w'$)\\}
        \Else{$z$ notifies all its neighbors that it is free.\\
    $z$ changes its mate: $mate_z \leftarrow \varnothing$.\\Aug-path($z$)}}}}
        \ElseIf{$z$ is free\\ //We want to make $z$ a matched vertex in case it is \hd.\\}{$w'$ = Surrogate($z$)\label{line:find-surrogate}\\ \If{$w' \neq
            \varnothing$}{
                Match\_Surrogate($w'$)\\
                $z$ informs all its neighbors that it is matched.\\
        }}}
        \end{algorithm}

\begin{algorithm}
    \caption{Find\_Mate$(u, v)$}\label{alg:find-mate}
    \SetAlgoLined
    \LinesNumbered
    \KwResult{Find a mate to vertex $u$ if it exists.}
    \textbf{Input: } A vertex $u$ that is free
    and a vertex $v$ that is $u$'s neighbor and is matched.\\
    $u$ asks $v$ for $v$'s mate. $v' \leftarrow mate_v$.\\
        \If{$v'$ has a free neighbor which is not $u$\label{line:mate-aug-path}}{$v'$ chooses a free neighbor
    $x$ arbitrarily.\\ Match$(u,v)$.\\ Match$(v',x)$.\\ $u$ notifies all its
neighbors that it is matched.\\ $x$ notifies all its neighbors that it is
matched.} \Else{//Makes a call to~\cref{alg:maximum-matching-surrogate}.\\$w'$ = Surrogate($u$)\label{line:surrogate-mate}.\\
    \If{$w' \neq \varnothing$\\ //In this case, $u$ is matched and $w'$ is free.\\}
    {$u$ notifies all its neighbors that it is matched.\\
Match\_Surrogate($w'$)}}
\end{algorithm}

\begin{algorithm}
\caption{Match$(u,v)$}\label{alg:maximum-matching-match}
    \SetAlgoLined
    \LinesNumbered
    \KwResult{Match between two vertices $u$ and $v$}
    \textbf{Input: } Two vertices $u$ and $v$ such that the edge $(u,v)$ is in the graph ($u$ and $v$ might be free or matched before the call to this procedure, so updating their neighbors that they became matched is taken care of in the procedures that make the call to this algorithm).\\
    update the mate of $u$ to be $v$\\
    update the mate of $v$ to be $u$\\
\end{algorithm}

\begin{algorithm}
    \caption{Aug-path$(u)$}\label{alg:aug-path}
    \SetAlgoLined
    \LinesNumbered
    \KwResult{Find an augmenting path starting from $u$ if it exists.}
    \textbf{Input: } A vertex $u$ that is free and does not
    have a free neighbor.\\
    $u$ sends a message to all of its neighbors, notifying that it is looking for an
    augmenting path.\\
    \For{each vertex $w$ that received a message from $u$ in the previous round}
    {$w$ sends a message to its mate $w'$ and asks if it has a free neighbor.\\
    \If{$w'$ replies back that it has a free neighbor}
    {$w$ sends a message to $u$ that $w'$ is an option for an augmenting path.}}
    \If{$u$ receives back an option for an augmenting path}{$u$ chooses one of the options arbitrarily (say $w'$) and sends back to $w$ (the neighbor of $w'$ that is connected to $u$) that $w'$ is chosen.\\
    Match($u,w$)\\
    $u$ notifies all its neighbors that it is matched.\\
    $w'$ chooses a free neighbor $x$ arbitrarily.\\
    Match($w',x$)\\ $x$ notifies all its neighbors that it is matched.}

\end{algorithm}

\begin{algorithm}
\caption{Surrogate$(u)$}\label{alg:maximum-matching-surrogate}
    \SetAlgoLined
    \LinesNumbered
    \KwResult{Find a surrogate for $u$ if exists}
    \textbf{Input: } A vertex $u$\\
    Initialize $i=1$\\
    \While{$u$ did not find a surrogate and $i\le 2\log(deg(u))$}{
    $u$ sends a message to $\floor*{\sqrt{2^i}}$ arbitrary neighbors, notifying that it looks for a surrogate.\\
    \For{each vertex $w$ that received a message from $u$ in the previous round}{$w$ sends a message to its mate $w'$ and asks for its degree.\\  \If{$w'$ replies back that $deg(w') \le \floor*{\sqrt{2^i}}$}{$w$ sends a message to $u$ that $w'$ is a candidate for a surrogate}}
    \If{$u$ receives back a candidate for a surrogate}{$u$ chooses one of the candidates arbitrarily (say $w'$) and sends back to $w$ (the neighbor of $w'$ that is connected to $u$) that $w'$ is chosen\\
    Match($u,w$)\\
    $w'$ notifies all its neighbors that it is free.\\
    $w'$ changes its mate: $mate_{w'} \leftarrow \varnothing$.\\
    //In this case the output is a surrogate $w'$ which is now free.\\
    \Return $w'$}
    \Else{$i = i + 1$}}
    \Return $\varnothing$
\end{algorithm}

\begin{observation}
After~\cref{alg:maximum-matching-surrogate} is called on $u$, if the return value of this algorithm is a surrogate $w'$ of $u$, then $u$ is matched and $w'$ is free.
\end{observation}

\begin{algorithm}
\caption{Match\_Surrogate$(u')$}\label{alg:match-surrogate}
    \SetAlgoLined
    \LinesNumbered
    \KwResult{Find a match for $u'$ if exists.}
    \textbf{Input: } A vertex $u'$ which is a surrogate to a vertex $u$, in particular $u'$ is free.\\
    \If{$u'$ has a free neighbor}{$u'$ chooses a free neighbor $w'$ arbitrarily\\ Match ($u',w'$)\\ $w'$ notifies all its neighbors that it is matched\\}{\Else{Aug-path($u'$)}\label{line:find-aug-deletion}}
\end{algorithm}

\subsection{Analysis}

\begin{lemma}\label{obs:surrogate}
Assume~\cref{alg:maximum-matching-surrogate} is called with $u$ as its input (i.e., for finding surrogate($u$)). If $u$ is not matched after the call to this algorithm then $deg(u) \le \sqrt{2m}$.
\end{lemma}

\begin{proof}
As proved in \cite{NS13}, if a vertex $u$ has more than $\sqrt{2m}$ neighbors,
then at least one of them has a mate that can serve as a surrogate. A surrogate
of $u$ is a vertex $w'$ that has degree $\deg(w') \le \sqrt{2m}$ and is a mate of
one of $u$'s neighbors.
Assume by contradiction that $u$ has no surrogate but has more than $\sqrt{2m}$
neighbors. Since each of $u$'s neighbors has a different mate, there are more
than $\sqrt{2m}$ vertices that are mates of neighbors of $u$. By our assumption,
none of them can serve as a surrogate; by the definition of a surrogate, it
means that all of them have degree $> \sqrt{2m}$. Hence, the sum of the degrees
of all these vertices exceeds $\sqrt{2m} \cdot \sqrt{2m} = 2m$. Since the sum of
the degrees in the graph is $2m$ we get a contradiction.
We say that vertex $w'$ is a \textit{potential surrogate of $u$} if $w'$ is the
mate of some neighbor $w$ of $u$.
In~\cref{alg:maximum-matching-surrogate} we check on each iteration $i$ of the
loop for each potential surrogate $w'$, if $\deg(w') \le \floor*{\sqrt{2^i}}$. If
we go through all of $u$'s potential surrogates and did not find a surrogate, it
means that $\deg(u) \le \sqrt{2m}$, otherwise, we have at least $\log(2m)$
iterations of the loop since the loop goes up to iteration $2\log(\deg(u))$ which
    in this case is at least $2\log(\sqrt{2m}) = \log(2m)$. At iteration
    $\floor{\log(2m)}$ we would check if $\deg(w') \le
    \floor*{\sqrt{2^{\log(2m)}}} = \sqrt{2m}$ for each potential surrogate $w'$,
    and by our proof above it is not possible that all of the potential
    surrogates would have degree greater than $\sqrt{2m}$. If $\deg(u) >
    \sqrt{2m}$, we must find a surrogate after at most $\log(2m)$ iterations, as
    explained above. Therefore, we get that if $u$ is not matched after this
    algorithm then $\deg(u) \le \sqrt{2m}$.
\end{proof}

\begin{invariant}\label{inv:high-degree}
For a vertex $u$ that is adjacent to an edge update $(u,v)$, if $\deg(u) > \sqrt{2m}$ then after the update $u$ must be matched.
\end{invariant}

\begin{lemma}\label{lem:high-deg-matched}
\cref{inv:high-degree} is maintained by our algorithm after each update.
\end{lemma}

\begin{proof}
Suppose an edge update $(u,v)$ occurred such that $\deg(u)>\sqrt{2m}$.
\paragraph{If $(u,v)$ was deleted:} In that case if $u$ is matched and its mate
is not $v$ then after the update it is still matched and there is nothing to do.
If $u$ was matched to $v$, then we must find it a new mate. If $u$ has a free
neighbor $w$ then we match it with $w$. If not then we find a surrogate to $u$
by calling~\cref{alg:maximum-matching-surrogate}.
By~\cref{obs:surrogate}, the algorithm finds a surrogate (and therefore a
match) to $u$ since $\deg(u) > \sqrt{2m}$. Thus, $u$ becomes matched after the
update.
\paragraph{If $(u,v)$ was inserted:}
If $u$ is matched then the update doesn't affect it and it stays matched.
If $u$ is free, then we first check if $u$ can be
matched to $v$ (either if $v$ is free or if $u$ and $v$ are part of an
augmenting path). If $u$ can't be matched to $v$
then~\cref{alg:maximum-matching-surrogate} is called on $u$.
By~\cref{obs:surrogate}, since $\deg(u) > \sqrt{2m}$, $u$ will be matched after
running this algorithm.
\\\\
In both cases, $u$ is matched after the update.
\end{proof}

\begin{lemma}\label{lem:aug-path}
If~\cref{alg:aug-path} (Aug\_path($u$)) is called with a vertex $u$ as its input, then if there
exists an augmenting path of length $3$ such that $u$ is one of its
endpoints,~\cref{alg:aug-path} will find it.
\end{lemma}

\begin{proof}
The algorithm is called on a vertex $u$ that became free during some update and
has no free neighbors. The only way for an augmenting path of length $3$
to start or end with $u$ is for one of $u$'s matched
neighbors $v$ to have a mate $v'$ which has a free neighbor $w$.
During~\cref{alg:aug-path},
$u$ checks with all of its matched neighbors' mates to see if any has a free
neighbor. If such a free vertex $w$ exists, then $u$ and $v$ become
matched together and $v'$ and $w$ become matched together ($v$ is no longer matched to $v'$).
Since $u$ is now matched, it cannot be the start or the end of any augmenting
paths of length $3$.
\end{proof}

\begin{lemma}
    \cref{alg:maximum-matching-deletion} and~\cref{alg:maximum-matching-insertion} maintain
    a $(3/2)$-approximate maximum matching in the graph after each edge update.
\end{lemma}

\begin{proof}
We prove this lemma via induction on the update step that after each update step, there are no augmenting paths of length of at most $3$ with respect to the current matching, which implies that the maintained matching is a $(3/2)$-approximate maximum matching. In the base case, the graph is empty
and a $(3/2)$-approximate maximum matching exists trivially.
We assume as our induction hypothesis that at the $k$-th update step, there are no augmenting paths of length
$3$. We now prove this for the $(k + 1)$-st step.

Suppose that there is an edge update between the vertices $u,v$ for the $(k +
1)$-st step.
\paragraph{If $(u,v)$ was deleted from the graph:}
Since the algorithm is symmetric with respect to $u$ and $v$, we prove
that the lemma holds for $u$ and so it would also hold for $v$.
\begin{enumerate}
    \item If $u$ and $v$ weren't matched: then the deletion of the edge wouldn't
        affect the matching. However, since we need to maintain the invariant
        (\cref{inv:high-degree})
        that if $\deg(u) > \sqrt{2m}$ then it should be matched after the update,
        we search for a surrogate to $u$ as described
        in~\cref{line:find-surrogate} in~\cref{alg:maximum-matching-deletion}.
        If $\deg(u) > \sqrt{2m}$ then by~\cref{obs:surrogate} a surrogate must exist; otherwise, if $u$ does not have any free neighbor
        and it does not have a surrogate, then we know that $\deg(u) \leq
        \sqrt{2m}$.
        If we find a surrogate $u'$, then $u'$ becoming free
        might create augmenting paths of length $3$ (where $u'$
        is one of the endpoints in the augmenting paths).
        Therefore, if $u'$ doesn't have a free
        neighbor, an augmenting path that starts with $u'$ is looked for as described in~\cref{alg:match-surrogate} in~\cref{line:find-aug-deletion}.
        By~\cref{lem:aug-path}, if such a path exists then we will find it.
        Note that if $u'$ has a free neighbor $w$, they become matched. After
        becoming matched, $u'$ can't be part of an augmenting path since by the
        maximality of the matching, $w$ could not have had a free neighbor
        (otherwise, $w$ would have been matched).
    \item If $u$ and $v$ were matched, by~\cref{alg:maximum-matching-deletion},
        we are looking for a surrogate or a mate for $u$.
    \begin{enumerate}
        \item If $u$ did not find any mate, by
            \cref{obs:surrogate}, $\deg(u) \leq \sqrt{2m}$
            and $u$ doesn't have a free neighbor.
            We still might find an augmenting path that $u$ is one of
            its endpoints, so~\cref{alg:aug-path} is called with
            $u$ as its input. By~\cref{lem:aug-path}, if there is an augmenting
            path of length at most $3$ where $u$ is one of its
            endpoints,~\cref{alg:aug-path} must find it. Note that any new
            augmenting path must contain one of $u$ or $v$ as an endpoint, no other augmenting path exists by our induction
            hypothesis (if $u$ or $v$ is not part of the augmenting path, then
            the augmenting path existed before this update which is not
        possible, and if $u$ and/or $v$ are part of an augmenting path but are
        not one the endpoints of the path,
        then the maximality of the matching is not preserved).
        \item If we do find a match to $u$ then there is no augmenting path that
        $u$ is part of. First, assume that $u$ had a free neighbor $w$ and therefore was matched to it. In this case, an augmentig path that includes $u$ can exist only if $w$ has a free neighbor and this is not possible by the maximality of the matching.
        If $u$ doesn't have a free neighbor then $u$ must have a surrogate $u'$. In~\cref{alg:match-surrogate} we check if $u'$ has a free
        neighbor and if so we match $u'$ with its free neighbor. If it doesn't find a free neighbor then we check for an augmenting path that starts or ends with $u'$. Again, by~\cref{lem:aug-path}, if there is an augmenting path of length $3$ that has $u'$ as one of its endpoints,~\cref{alg:aug-path} must find it. Since
        the only vertex that became and remained free during this update is $u'$, there can't be another augmenting path of length $3$ in the graph by our induction hypothesis.
        Again, if $u$ and/or $u'$ are matched, then there can't be an augmenting path that $u$ or $u'$ are part of.
    \end{enumerate}
    We have shown that following any edge deletion,
    there is no augmenting path
    of length at most $3$ in the graph.
\end{enumerate}
\paragraph{If $(u,v)$ was inserted to the graph:}
\begin{enumerate}
    \item If $u$ and $v$ are both free then we match them. After that we can't have any augmenting path of length at most $3$ that contains $(u, v)$ in the middle since that would imply that both $u$ and $v$ had a free neighbor before the update; By our induction hypothesis, we know we had a maximal matching before the update, thus this is not a possible case. Also, trivially, $u$ and $v$ can't be one of an augmenting path's endpoints, since they are matched.
    \item If one of $u$ and $v$ is free, w.l.o.g. assume $u$: then a new augmenting path might occur between $u$, $v$ and $v$'s mate (say $v'$).
    By~\cref{alg:find-mate}, we first check if $v'$ has a
    free neighbor; if so, we have an augmenting path and therefore remove the edge $(v,v')$ from the matching, match $u$
    with $v$ and match $v'$ with its free neighbor. After that we don't have
    any other option for an augmenting path of length $3$, since if we do, then
    this path already existed before this update, in contradiction to our induction hypothesis.
    Since there was a maximal matching before the update, $u$ doesn't have a
    free neighbor after the update and the maximality is preserved. However,
    after this update $u$ might be a high degree vertex, therefore we look for a
    surrogate for $u$ as described in~\cref{alg:find-mate}
    in~\cref{line:surrogate-mate}. The surrogate $w'$ of $u$ (if exists), might
    have a free neighbor or might be the start of an augmenting path of length
    $3$, therefore, we first match $w'$ with a free neighbor if exists and if
    not we call~\cref{alg:aug-path} on $w'$. By~\cref{lem:aug-path} if there is
    such a path, we will find it. As we noted above, if $w'$ is matched to a free neighbor, it can't be part of an augmenting path by the maximality of the matching.
\end{enumerate}
    We get that also in the insertion case there is no augmenting path
    of length $3$ in the graph.
    We completed the induction step, which means that there are no augmenting paths of length at most $3$, and therefore we maintain a valid $(3/2)$ approximate maximum matching in the graph after each update.
\end{proof}

\begin{observation}\label{obs:surrogate-deg}
    If $u'$ is a surrogate of a vertex $u$ that was found
    during~\cref{alg:maximum-matching-surrogate}, then $\deg(u') \le \sqrt{2m}$.
\end{observation}

\begin{proof}
By~\cref{alg:maximum-matching-surrogate} if we found the surrogate $u'$ at
iteration $i$, then its degree must be $\deg(u') \le \floor*{\sqrt{2^i}}$. As
explained in the proof of~\cref{obs:surrogate}, if we find a surrogate we must
find it after at most $\log(2m)$ iterations, which means that $i \le \log(2m)$
which implies that $\deg(u') \le \floor*{\sqrt{2^{\log(2m)}}} = \sqrt{2m}$.
\end{proof}

\begin{observation}\label{obs:aug-path}
    Only vertices $v$ with degree $\deg(v) \le \sqrt{2m}$ can be an input
    to~\cref{alg:aug-path} as used by our algorithm.
\end{observation}

\begin{proof}
In~\cref{alg:find-mate} or~\cref{alg:match-surrogate} we
call~\cref{alg:aug-path} only with a surrogate as its input. The surrogate $u'$ can have degree at most $\sqrt{2m}$ by ~\cref{obs:surrogate-deg}.
In~\cref{alg:maximum-matching-deletion} we first call~\cref{alg:maximum-matching-surrogate} on a vertex $u$ that was incident to the edge deletion, and only if a surrogate for $u$ hasn't been found, we call~\cref{alg:aug-path}. By~\cref{obs:surrogate}, a vertex that is not matched after the call of~\cref{alg:maximum-matching-surrogate} must have degree $\le \sqrt{2m}$.
\end{proof}

\begin{observation}\label{obs:iterations}
    In~\cref{alg:maximum-matching-surrogate}, the while loop must end after at most $2\log(\Delta)$ iterations.
\end{observation}

\begin{proof}
In every iteration that we didn't find a surrogate for a vertex $u$, $i$ is incremented by $1$; the loop terminates either when we find a surrogate or $i$ reaches $2\log(\deg(u)) \le 2\log(\Delta)$, so at most $O(\log{\Delta})$ iterations of the while loop are done.
\end{proof}

We are now ready to prove ~\cref{lem:message-round-complexity-mcm}.
\begin{proof}
We will show that each procedure takes $O(\sqrt{\mavg})$ amortized messages and at
most $O(\log{\Delta})$ rounds in the worst case.
\paragraph{Finding a surrogate (\cref{alg:maximum-matching-surrogate}):}
The while loop in the algorithm takes at most $O(\log{\Delta})$ rounds: by~\cref{obs:iterations} the number of iterations of the while loop is at most $O(\log{\Delta})$.
In each iteration there are at most $O(1)$ rounds, so the total number of rounds for the loop is $O(\log{\Delta})$.
The rest of the algorithm takes at most $O(1)$ rounds and therefore the total number of rounds in the worst case is $O(\log{\Delta})$.
As for the number of messages, in iteration $i$ the number of messages is at
most $O(\sqrt{2^i})$. Overall, since, in each iteration, $i$ increments by $1$
and the number of iterations is at most $2\log(\Delta) \le 2\log(m)$, the total
number of messages in the while loop is $\sum_{i=1}^{\log{m}}\sqrt{2^i} =
O(\sqrt{m})$. In the rest of the algorithm the number of messages is at most
$O(1)$ and therefore the total number of messages is $O(\sqrt{m})$.

\paragraph{Finding an augmenting path (\cref{alg:aug-path}):}
It is easy to see that all the operations in the algorithm take $O(1)$ rounds.
It is easy to verify that the number of messages is $O(\deg(u)+\deg(x))$ where $u$ is the input of the
algorithm and $x$ is the free vertex that is part of the augmenting path, if such path exists. By~\cref{obs:aug-path}, the degree of a vertex that is an input to
this algorithm is at most $O(\sqrt{m})$, so the total number of messages is
$O(\sqrt{m} + \deg(x))$. We have no control over the degree of $x$, hence we
next bound the \emph{amortized} number of messages sent due to this algorithm.
If $x$ has $\deg(x) \le c\sqrt{2m}$ for any constant $c$ then we are done;
henceforth, assume that $\deg(x)>c\sqrt{2m}$ for some constant $c>1$.
First, note that $x$ was free before this
algorithm and became matched
during this algorithm. By~\cref{inv:high-degree} we know that a vertex with
degree $> \sqrt{2m}$ that is adjacent to an edge update would be matched after
the update, therefore if $x$ has degree $> \sqrt{2m}$
and is free at the beginning of the update, it had degree $\le \sqrt{2m}$ after
processing the previous update adjacent to it.
Observe that $x$ can change its high-degree/low-degree designation
without being adjacent to an edge update only if the number of
the edges $m$ in the graph changes (some of the updates might be adjacent to $x$, but its degree during these updates is at most $\sqrt{2m}$, otherwise, by~\cref{inv:high-degree}, $x$ is supposed to be matched).
For any vertex $x$ where the degree of $x$ changes from $\sqrt{2m}$ to $> c\sqrt{2m}$ without edge updates adjacent to it,
at least $\Omega(m)$ updates need to occur.
After those $\Omega(m)$ updates, the number of vertices like $x$, that became
high degree without edge updates adjacent to them and remained free is at most
$\sqrt{2m}$ (this is because the number of vertices with degree $> \sqrt{2m}$ in
the graph is at most $\sqrt{2m}$), so if the cost of handling each such vertex
is $O(m)$, the total cost of handling all of these vertices is $O(m\sqrt{m})$ messages.
However, after $\Omega(m)$ updates, we can get a total of $\Omega(m\sqrt{m})$ coins by charging $\Theta(\sqrt{m})$ extra coins in each update.
Therefore, we charge extra $O(\sqrt{m})$ coins in each update, and get an amortized bound of $O(\sqrt{m})$ on the number of messages.

\paragraph{Handling edge insertion $(u,v)$
(\cref{alg:maximum-matching-insertion}):} If $u$ and $v$ are both free then
matching
them takes $O(1)$ rounds and messages and notifying all of their neighbors about
their
new status takes $O(1)$ rounds and $O(\deg(u)+\deg(v))$ messages. If only one of
them is free, w.l.o.g. $u$, we first check if there is an augmenting path that
includes $u$ and $v$ (as described in~\cref{alg:find-mate}
in~\cref{line:mate-aug-path}). To check if such augmenting path exists costs
$O(1)$ rounds and messages. If such augmenting path exists, we must match $u$
with $v$ and the mate of $v$ with a free vertex that we found (denote by $x$).
That costs $O(1)$ rounds and $O(\deg(u)+\deg(x))$ messages. If there isn't an
augmenting path that $u$ and $v$ are part of, we then try to find a surrogate to
$u$, which costs $O(\log{\Delta})$ rounds and $O(\sqrt{m})$ messages as analyzed
above. If
$u$ has a surrogate $u'$ then notifying all the neighbors of $u$ that it is
matched costs $O(1)$ rounds and $O(\deg(u))$ messages. Then, $u'$ also checks if
it has a free neighbor $w'$; if so, matching them and notifying all the
neighbors of $w'$ costs $O(1)$ rounds and $O(\deg(w'))$ messages. If $u'$ doesn't have a
free neighbor then we call~\cref{alg:aug-path} with $u'$ as its input; by~\cref{obs:surrogate-deg} and as
analyzed above this costs $O(1)$ rounds and $O(\sqrt{m})$ amortized messages.

Overall, the total cost of this algorithm is at most $O(\log{\Delta})$ rounds and
$O(\deg(u)+\deg(v)+\deg(x)+\deg(w')+\sqrt{m})$ messages.
Since $u, v, x, w'$ were free before this
algorithm and became matched
during this algorithm, we can analyze the running time of this algorithm as analyzed above (in the analysis of~\cref{alg:aug-path}), and get that the total cost of this algorithm would be at most
$O(\log{\Delta})$ rounds and $O(m)$ messages, and if we amortized the messages over all
updates, it would result in amortized $O(\sqrt{m})$ messages.

\paragraph{Handling edge deletion $(u,v)$ (\cref{alg:maximum-matching-deletion}):}
We analyze the round and message complexities of this algorithm with respect to one of the endpoints of this edge, w.l.o.g $u$; the argument for $v$ is symmetric.
If $u$ and $v$ were matched, then checking if $u$ has a free neighbor and,
if so, matching it costs $O(1)$ rounds and messages. $u$'s new match $w$
must update all its neighbors and that costs $O(\deg(w))$ messages.
If $u$ doesn't have a free neighbor then finding a surrogate to $u$ costs
$O(\log{\Delta})$ rounds and $O(\sqrt{m})$ messages. If $u$ has a surrogate $z'$,
then finding a free neighbor $w'$ to $z'$ and match it would cost $O(1)$ rounds
and messages. Notifying all neighbors of $w'$ that $w'$ is matched costs
$O(\deg(w'))$ messages. If $z'$ doesn't have a free neighbor then we
call~\cref{alg:aug-path} on $z'$; this costs $O(1)$ rounds and $O(\sqrt{m})$ amortized
messages as analyzed above. If $u$ doesn't have a surrogate then notifying all of $u$'s neighbors
that it is free costs $O(1)$ rounds and $O(\sqrt{m})$ messages using~\cref{obs:surrogate}.
Then, calling~\cref{alg:aug-path} on $u$ costs $O(1)$ rounds and $O(\sqrt{m})$ amortized
messages (using the analysis above of~\cref{alg:aug-path}).
If $u$ and $v$ weren't matched but $u$ was free, we do a similar process as
analyzed above, and the total cost of this process would be $O(\log{\Delta})$ rounds
and $O(\deg(w') + \sqrt{m})$ messages.

Overall, the total cost of this algorithm is at most $O(\log{\Delta})$ rounds and
$O(\deg(w) + \deg(w') + \sqrt{m})$ messages. Note that $w$ and $w'$ are both free before the update and became matched during the update.
As analyzed in the cost of~\cref{alg:aug-path}, the cost of
those vertices can be amortized to $O(\sqrt{m})$ messages per update, and
therefore the total cost of this algorithm is at most $O(\log{\Delta})$ in the worst case rounds and
$O(\sqrt{m})$ amortized messages.
\end{proof}
\fi

\section{Maximal Independent Set}\label{sec:MIS}
Building on the techniques used in the previous sections, we now present our main
algorithm for dynamic, distributed MIS. Similar to our algorithm for MCM, our
algorithm for MIS also needs to partition vertices into \ld and \hd vertices
without knowledge of $m$. However, instead of looking at the two-hop
neighborhood as in our algorithm for MCM, our algorithm for MIS instead looks at
a subset of vertices in the \emph{$6$-hop} neighborhood to determine whether
each updated vertex is \ld or \hd. Since the procedure for determining this
information is much more complicated, we dedicate an entire section to
its explanation (\cref{sec:restart}).

Our MIS algorithm is first analyzed with respect to the \emph{maximum number of edges},
denoted by $\mmax$,
that exist in the graph at any point in time, and later analyzed with respect to the
\emph{average number of edges}, $\mavg$. 
The latter analysis is much more challenging, since we do not assume that the
graph is connected,
hence vertices do not have up-to-date estimates of the current number $m$ of edges.
To carry out this analysis, we assume that edge updates are tagged with a global timestamp;
only the two vertices adjacent to the edge update can read the timestamp.
We assume that the timestamps are given by a global running number, where each timestamp designates the number of the current update step. Of course, since the update sequence can be arbitrarily long, the update steps (and timestamps) could become prohibitively large, so the system is allowed to periodically reset the timestamps in order to make sure that each timestamp can be represented via $O(\log(n))$ bits. We do not address here the technical details behind such periodic resets of  timestamps, as this may change from one system to another; this optimization lies within the responsibility of the designers of such systems.
(The same assumption was made also in the work of \cite{AOSS18}, even though
their work also made an additional connectivity assumption.)

In this section, we present our main deterministic algorithm that maintains
an MIS under edge insertions and deletions.
Our deterministic distributed algorithm is inspired at a high-level
by the sequential algorithm given in~\cite{GuptaK18} that partitions
the vertices into \emph{high-degree} and \emph{low-degree} vertices,
while giving priority to the low-degree vertices to be included in the MIS.
Because each vertex does not have access to the precise value of $m$ (neither
$\mmax$ nor $\mavg$), instead we partition the vertices based on information
from the local neighborhood of each vertex. We update the degree designations of
each vertex as its local neighborhood changes. We provide our full algorithm in
the next few sections.

\subsection{High-Degree/Low-Degree Partitioning}\label{sec:main}

First, we provide some necessary characteristics and invariants maintained by
vertices in our algorithm.

\paragraph{High-Degree and Low-Degree Vertices}
Our algorithm ensures that all vertices in the input graph are always
partitioned into
\emph{\hd} and \emph{\ld} vertices, denoted by $\vhd$
and $\vld$, respectively.
We first provide an intuitive definition of these two concepts
and then provide the formal definition as it relates to our algorithm.

\paragraph{Intuitive definition based on~\cite{GuptaK18}}
In the sequential algorithm of~\cite{GuptaK18},
each vertex $v \in V$ that has degree $deg(v) > m^{2/3}$ in $G$
(where $m$ is the current number of edges in the graph) is
labeled as a \hd vertex; otherwise,
it is labeled as a \ld vertex. Since, in the centralized
setting, the algorithm has access to
the current number of edges
in the graph, such a definition suffices for this setting.

However, in the distributed setting, a vertex does not know the current
number of edges in the graph, as described in~\cref{c2}.
Thus, we must perform the partition
differently in the distributed model.

\paragraph{High-degree/\ld partitioning in the distributed setting}
We provide the partitioning algorithm in terms of $\mmax$ in this section
and update it accordingly for $\mavg$ in \cref{sec:m-avg}.

Each vertex $v$ stores a counter $deg'(v)$ indicating the degree it thinks is the
\emph{movement bound} for moving from $\vld$ to $\vhd$
or vice versa.
If a vertex $v$ has degree $deg(v) > deg'(v)$
then it labels itself \hd; otherwise,
it labels itself \ld.
We initialize all $deg'(v)$ values to be
$deg'(v) = 2$ for all $v \in V$ in the beginning
when $deg(v) = 0$ and the graph is empty. We describe
how to update $deg'(v)$ in~\cref{sec:high-degree}.
Intuitively, we show that $deg'(v)$
approximates $\mmax^{2/3}$.
When we consider $\mavg$ in~\cref{sec:m-avg},
$deg'(v)$ will not always approximate $\mavg^{2/3}$ and therefore we instead
use the timestamp of the current update to determine when we need to update a
vertex's degree designation (see
\cref{sec:m-avg} for more details).

Let $G_H = (V_H, E_H)$ be the subgraph induced by
the \hd vertices in $G$ and $G_L = (V_L, E_L)$ be
the subgraph induced by the \ld vertices in $G$.
In our distributed setting $G_H$ and $G_L$ are
composed of vertices which currently (in the present round) consider
themselves to be \hd and \ld, respectively.

\subsection{Algorithm Overview}
We consider low-degree vertices and high-degree vertices separately.
The general theme of our algorithm is that low-degree vertices are
given priority to be in the MIS. They do not care about whether a
high-degree neighbor is in the MIS.
In other words, a \ld vertex adds itself to the MIS
if and only if it has no \ld neighbors in the MIS.
On the other hand, a \hd vertex removes itself
from the MIS whenever a \ld neighbor adds itself
to the MIS. Thus, \ld vertices and \hd vertices follow
different algorithms for adding themselves to
the MIS. We provide such algorithms
in detail in~\cref{sec:low-degree} and~\cref{sec:high-degree}.
However, with insertions and deletions of edges, $\mmax$
may change.
We provide a \emph{restart procedure} that allows vertices to reassign
their degree designations when $\mmax$ changes by enough.
The restart procedure for each vertex checks after every update incident to it
to determine
whether it is a \ld or a \hd vertex.
Finally, each vertex $v$ in the graph maintains a counter $c_v$ that
indicates the number of $v$'s \ld neighbors that are in the MIS.

More specifically, our algorithm contains the following procedures (described
briefly here and expanded upon in the following sections):

\begin{enumerate}
    \item Edge updates between two vertices, such that one of which is low-degree, are
        processed following the algorithms in~\cref{sec:low-degree}. There are
        two possible scenarios:
        \begin{enumerate}
            \item If an edge update causes a low-degree vertex $v$'s
                counter to become $0$, $c_v = 0$, then $v$ must add itself to
                the MIS. Then, it must inform its high-degree neighbors that it
                was added to the MIS. All high-degree neighbors are processed
                according to~\cref{alg:low-deg-enters}.
            \item If an edge insertion occurs between two low-degree vertices
                both in the MIS, then one of them, $v$, removes itself from the MIS
                and informs all neighbors. Then, $v$'s low-degree neighbors are
                processed using~\cref{alg:low-neighbor} and $v$'s high-degree
                neighbors are
                processed using~\cref{sec:low-deg-left}.
        \end{enumerate}
    \item Edge updates between two vertices, such that both of them are high-degree, are
        processed in~\cref{sec:high-degree}. There are two scenarios:
        \begin{enumerate}
            \item If any update causes a high-degree vertex to have no neighbors
                in the MIS, it adds itself to the MIS.
            \item If any such update causes any high-degree vertex $v$ to leave the
                MIS, then it must call~\cref{alg:high} to determine which of its
                neighbors should enter the MIS (with $v$ as the leader).
        \end{enumerate}
    \item A \emph{restart} procedure given by~\cref{alg:clean vertices}
        is called before handling high-degree vertices. This restart procedure
        ensures that the \emph{number of high-degree} vertices remains bounded
        by $\mmax^{1/3}$.~\label{item:high-restart}
    \item If a low-degree vertex $v$'s degree, $deg(v)$, exceeds some internally
        maintained
        threshold, $deg'(v)$, then $v$ changes its degree designation to
        high-degree. (Note that $v$ can become low-degree again
        via~\cref{item:high-restart}.)
\end{enumerate}

As defined above, a vertex $v$ is \hd if $deg(v) > deg'(v)$.\\
Our algorithm maintains the following invariants:

\begin{invariant}\label{inv:low}
    If a \hd vertex $u$ is in the MIS then none of its \ld neighbors
    $w \in N_{low}(u)$ have $c_w = 0$.
\end{invariant}

\begin{invariant}\label{inv:restart}
    Throughout the execution, $deg'(v) < 4m_{max}^{2/3}$ for all $v \in V$.
\end{invariant}

\cref{inv:low} ensures that a \hd vertex is in the MIS if and only if none of
its \ld vertices can enter the MIS. \cref{inv:restart} helps us maintain
that vertices designated as \ld have degree
$O\left(\mmax^{2/3}\right)$.

\subsection{Updates on Low-Degree Vertices}\label{sec:low-degree}

We first describe our algorithm on low-degree vertices.
Specifically, we describe what happens when an edge
insertion or deletion
occurs between two vertices where at least
one of these two vertices is a low-degree vertex.

Since the graph is initially empty (contains no edges), all
vertices are initially \ld. When a vertex which is \hd becomes
\ld and vice versa, it immediately notifies all its neighbors of
its new designation. When a low-degree vertex becomes high-degree,
it further needs to ensure that each of its \ld neighbors are in
the MIS if they can be. In other words, it needs to ensure that
none of its low-degree neighbors are depending on it to be in the MIS.
We describe this procedure in detail and
prove later on that this procedure is not too costly.

\iflong
\else
The full detailed procedure for low-degree vertices can be found in~\cref{app:low-deg}.
For the sake of  space, we describe the \emph{hard case} when an edge insertion occurs
between two low-degree vertices both of which are in the MIS. Then, one low-degree vertex
must remove itself from the MIS and inform its neighbors that it removed itself from the MIS.
Its low-degree neighbors must add themselves to the MIS if they have no low-degree neighbors in the MIS.
These low-degree neighbors perform the algorithm given in~\cref{alg:low-neighbor}.

\paragraph{Low-Degree Vertex Exits the MIS}
The vertex $v$ that removed itself from the MIS designates itself as the leader and coordinates a schedule
for its low-degree neighbors to enter the MIS one-by-one in some sequential order.
If a low-degree neighbor enters the MIS, it sends a message to all its neighbors that it entered the MIS.
Then, neighbors which occur later on in the schedule do not enter the MIS if an earlier neighbor entered.
We can afford to pay for the rounds of communication and messages in the amortized sense due to the
observation that \emph{at most two low-degree vertices leave the MIS after any edge update}.

We show how to handle high-degree neighbors of $v$ in the next section.
\fi

Upon an edge insertion, each vertex first sends its newest neighbor
an $O(1)$-bit message indicating whether it is in the MIS \emph{and}
whether it is \ld or \hd.

The algorithms we describe below follow this general intuition.

\paragraph{Edge Insertion between Two Low-Degree Vertices}
We denote by $(u,v)$ the edge that was inserted between two
initially \ld vertices.
If after this insertion w.l.o.g. $deg(u) > deg'(u)$,
then $u$ re-designates itself as a \hd vertex
and notifies all its neighbors.

Any vertex, w.l.o.g. $u$, that becomes \hd after the insertion performs
the following procedure.
If $u$ is in the MIS, each of $u$'s neighbors, $w \in N(u)$,
decreases its counter $c_w$ by one;
if any \ld neighbor
$w$ now has $c_w = 0$, then $w$ sends a message with $O(1)$
bits to $u$ informing $u$ that its $c_w$ is $0$. Let $W$
be the set of \ld neighbors $w$ of $u$ that have a $c_w=0$ and therefore want to add themselves
to the MIS. If $|W| > 0$, then $u$ removes itself from the MIS.
$u$ then determines an arbitrary sequential order for the vertices in $W$ to add
themselves to the MIS. $u$ sends each $w \in W$ a number using $O(\log n)$
bits indicating $w$'s order number.
Since the rounds are synchronous, each $w \in W$ adds itself
to the MIS sequentially in
this order if it does not have any \ld neighbors in the MIS.
If any $w \in W$ becomes dominated by another \ld neighbor before its
turn, it informs $u$ using an $O(1)$-bit message and $u$ updates
the turn numbers of all vertices after $w$ in the order. Pseudocode
for this algorithm is provided in~\cref{alg:low-neighbor}.

\begin{algorithm}
    \small
\caption{Low-degree neighbors entering MIS.}\label{alg:low-neighbor}
    \LinesNumbered
    \SetAlgoLined
    \KwResult{Neighbors $w \in N_{low}(u)$ add themselves to the MIS if none of
    their low-degree neighbors are in the MIS.}
    \textbf{Input: } Given $u$ which is a \ld vertex which removed
    itself from the MIS
    or which is a \ld vertex that is in the MIS and became \hd.\\
    $u$ sends $O(1)$-bit message to each of its neighbors to decrease their counter by $1$.\\
    \For{each $w \in N_{low}(u)$ (\ld neighbor of $u$)}
    {$w$ decrements its counter $c_w$ by 1.\\
        \If{$c_w = 0$}{$w$ sends $O(1)$-bit message to $u$
        indicating it wants to enter the MIS.}
    }
    Let $W$ be the set of \ld neighbors of $u$ that want to enter
    the MIS. $u$ determines an arbitrary sequential
    order for $w \in W$ to be added to the MIS.\\
    \For{each $w \in W$}{
            $u$ sends to $w$ a $O(\log n)$-bit message indicating its order.\\
            \If{no neighbor $x \in N_{low}(w)$ is in the MIS}{
                $w$ adds itself to the MIS.\\
                                                $w$ informs all neighbors it is in the MIS. \\
                Each neighbor $x \in N(w)$
                increments $c_{x}$ by $1$.\\
                High-degree neighbors of $w$ run
                the algorithm provided in~\cref{alg:low-deg-enters}.
        }
            \If{$w$ has a \ld neighbor which enters the MIS before its turn}{
                    $w$ informs $u$ using $O(1)$-bit message.\\
                    $u$ sends all $w' \in W$ after $w$ in their sequential
                    order a new turn number.\\
            }
    }
    \If{u left the MIS}{\For{each $w' \in N_{high}(u)$ (\hd neighbor of $u$)}{
        Perform the algorithm detailed in~\cref{sec:high-degree}.
    }}
\end{algorithm}

After~\cref{alg:low-neighbor} has finished running,
all of $u$'s \hd neighbors $w \in N_{high}(u)$ which do
not have any neighbors in the
MIS, perform the procedure detailed in~\cref{sec:high-degree}.

After updating the degree designation, the algorithm
continues assuming $(u,v)$ is an edge insertion
between a \ld vertex and a \hd vertex (described
below). If both $u$ and $v$ have $deg(u) > deg'(u)$ and $deg(v) > deg'(v)$,
and are both in the MIS,
then we perform~\cref{alg:low-neighbor} for both of them
and continue the algorithm as if $(u,v)$ is an edge insertion
between two \hd vertices
(described in~\cref{sec:updates-high-deg}).

If, however, $(u, v)$ is inserted between two low-degree
vertices and neither of which becomes \hd, we perform the following
algorithm depending on whether neither, both, or one of the
two vertices are in the MIS:
\begin{itemize}
    \item \textbf{At most one is in the MIS.} $u$ or $v$ change their counter
        $c_u$ and $c_v$, respectively, accordingly.
    \item \textbf{Both are in the MIS.}
        In this case, both $u$ and $v$ have informed
        each other that they are \ld and in the MIS.
        One of $u$ or $v$ (the one with the lower ID), w.l.o.g. assume $u$,
        removes itself from the MIS
        and sends to all its neighbors $N(u)$ an $O(1)$-bit
        message that it is no longer in the MIS.
        Then, it proceeds to follow the algorithm outlined
        in~\cref{alg:low-neighbor}.

We also need to take care of the high-degree neighbors
of $u$ after it removed itself from the MIS
and the \hd neighbors of the low-degree vertices $w \in N_{low}(u)$
which have added themselves to the MIS. We take care of these \hd
neighbors separately in~\cref{sec:high-degree}.
\end{itemize}

\paragraph{Edge Deletion between Two Low-Degree Vertices}
If neither $u$ or $v$ is in the MIS, nothing happens.
If w.l.o.g. $u$ is in the MIS and $v$ is not in the MIS,
$v$ decrements $c_v$ by $1$.
If $c_v = 0$ 
it adds
itself to the MIS and informs all neighbors.
Otherwise, it does nothing.
Again, if $v$ adds itself to the MIS, we have to take care of the high-degree
neighbors of $v$ which may be in the MIS; we describe this procedure
in~\cref{alg:low-deg-enters} (with $v$ as the leader).

\paragraph{Edge Insertion between One Low-Degree and One High-Degree Vertex}
Suppose $u$ is the low-degree vertex and $v$ is the high-degree vertex,
if $deg(u) > deg'(u)$ after the insertion, $u$ becomes \hd and
performs~\cref{alg:low-neighbor}. In this case, we treat the edge insertion
as an edge insertion between two \hd vertices and handle such an insertion
in~\cref{sec:updates-high-deg}.
Now suppose $u$ does not change its degree designation after the insertion.
If $u$ or $v$, or both, are not in the MIS, then, nothing needs to be done except for changing the counters.
If both $u$ and $v$ are in the MIS, then $v$ needs to perform the procedure
outlined in~\cref{alg:low-deg-enters}; this is the procedure
for a \hd vertex in the MIS when one of its \ld neighbors enters the MIS.
Obtaining this information
about $u$ requires $O(1)$ messages and $O(1)$ rounds of communication.

\paragraph{Edge Deletion between One Low-Degree and One High-Degree Vertex}
Suppose $u$ is the low-degree vertex and $v$ is the high-degree vertex.
First, if after the deletion $deg(v) \leq deg'(v)$, $v$ moves itself to $V_L$
and update all its neighbors, if $v$ is not in the MIS and has only \hd neighbors in the MIS, it must enter the MIS and all its \hd neighbors must perform the algorithm detailed in~\cref{sec:high-degree}; the update then continues as an edge deletion
between two \ld vertices.

Otherwise, if $u$ is in the MIS and $v$ has no more
low-degree neighbors in the MIS (i.e.\
$c_v = 0$ after the deletion),
it needs to perform the procedures outlined in~\cref{sec:low-deg-left}
when a \hd vertex is not in the MIS and no longer has any \ld neighbors in the MIS.
If this is not the case,
it does not need to do anything.
Note that it is never the case that $v$ was in the MIS
and $c_u = 0$ before the deletion by~\cref{inv:low}.

We now describe our algorithm for \hd vertices which
ensures that all high-degree vertices which have recent changes
in their low-degree neighbors add or remove themselves from the MIS
as necessary. We describe two related, but different procedures
that use as black boxes static, distributed MIS algorithms as subroutines.
One of the two procedures is conceptually simpler but obtains
worse dependence in the round and message complexity;
the other is conceptually more complex but obtains better
dependence.
By using our restart procedure (\cref{sec:restart})
for determining our degree threshold for each
vertex, a high-degree vertex can afford to inform all its high-degree
neighbors whenever it adds itself to the MIS. Thus, all high-degree
vertices know whether \emph{any} of its neighbors are in the MIS.
(This is in contrast to low-degree vertices which only know if their low-degree
neighbors are in the MIS). There are two cases high-degree vertices must handle
when dealing with updates that cause their low-degree neighbors to enter
or leave the MIS.

The detailed algorithm for handling all cases involving high-degree vertices is
given in~\cref{sec:high-degree}. We again describe the \emph{hardest case} when
several low-degree vertices enter the MIS and the high-degree neighbors must
leave the MIS.

\paragraph{Several High-Degree Neighbors Leave MIS} When one or more high-degree
neighbor must leave the MIS, we must find the set of high-degree vertices that
no longer have neighbors in the MIS and add them to the MIS. We perform this
procedure by waking up \emph{all} high-degree neighbors of high-degree vertices
that left the MIS. Then, we run some \emph{static, distributed MIS algorithm} on
the high-degree vertices that are awake.
As the static, distributed MIS algorithm is costly in terms of number of
messages, we must not call it on too many vertices.
To ensure that we run such an algorithm
on a small enough number of high-degree vertices, namely $O(m^{1/3})$ such
vertices, we call our restart procedure (detailed in the next section). During
our restart procedure, some high-degree vertices may become low-degree and we do
not run the static MIS algorithm on such vertices. We include the pseudocode for
our restart procedure in~\cref{alg:find-subgraph}. We define $v$ to be the
low-degree vertex which left the MIS, $L$ to be the set of low-degree
neighbors of $v$ which enter the MIS, $U'$ to be the set of high-degree
neighbors of each $w \in L$ which are in the MIS, and $U$ to be the set of
high-degree neighbors of each $u \in U'$. We use these definitions
in~\cref{alg:find-subgraph}.

By noting that our neighborhood of affected vertex has diameter $6$, we can use a more complicated procedure~\cref{alg:high} to obtain better message and round bounds by $O(\log^3n)$. 
We include this more complicated procedure in~\cref{sec:high-degree}.

\iflong
\subsection{Handling High-Degree Vertices}\label{sec:high-degree}

\subsubsection{Low-Degree Neighbor Enters the MIS}\label{alg:low-deg-enters}
We assume an edge insertion between a \ld vertex $v$
and a \hd vertex $u$ falls under this category if $v$ is in the MIS.
A low-degree vertex can also enter the MIS if it moves from $V_H$ to
$V_L$, or if one of its \ld neighbors left the MIS due to an update.

Suppose $u$ is a \hd vertex that satisfies~\cref{inv:low}
and is in the MIS.
If a low-degree neighbor $v \in N_{low}(u)$
enters the MIS, then $u$ must leave the MIS and inform all its high-degree
neighbors. We use one of two procedures to determine the
set of high-degree vertices that must enter the MIS
due to $u$ leaving the MIS. For convenience,
we denote the set of \hd vertices which want to enter
the MIS by $V_H'$.

There are several situations which may cause a \ld vertex to enter the MIS.
Furthermore, our algorithms described below rely on several characteristics of
the sets of vertices that are affected by the change. Specifically we define
these roles:

\begin{enumerate}
    \item \textbf{Leader $v$}: The leader $v$ coordinates the addition into the MIS of \hd vertices that want to enter the MIS, so that no collisions occur.
    \item \textbf{Set $U$}: The set of \hd vertices that want to \emph{enter}
        the MIS.
    \item \textbf{Set $U'$}: The set of \hd vertices that want to \emph{leave}
        the MIS.
    \item \textbf{Set $L$}: The set of \ld neighbors of $v$ that want to
        \emph{enter}
        the MIS after the current update.
\end{enumerate}

We set the above roles as follows for each of the below situations. We
consider the roles \emph{after} performing the restart procedure. This
means that no vertices switch from $V_L$ to $V_H$ or vice versa after being
assigned to the below roles.

\begin{enumerate}
    \item \textbf{Edge insertion $(u, v)$ where $u$ is \ld and $v$ is \hd and
        $u, v$ are in the MIS}: $u$ is the leader, $U' =
        \left\{v\right\}$, $U$ is the set of \hd neighbors in $N_{high}(v)$
        which have no neighbors in the MIS, $L = \emptyset$.
    \item \textbf{$v$ leaves $V_H$ and moves to $V_L$, then enters the MIS}:
        $v$ is the leader, $U'$ is the set of \hd neighbors in
        $N_{high}(v)$ which are in the MIS, $U$ is the set of \hd neighbors
        $N_{high}(w)$ of vertices $w \in U'$ which have no neighbors in the MIS,
        $L = \emptyset$.
    \item \textbf{Edge insertion $(u, v)$ where both $u$ and $v$ are \ld and $u,
        v$ are in the MIS}: w.l.o.g. $v$ leaves the MIS. $v$ is the leader,
        $U$ consists of the \hd neighbors in $N_{high}(v)$ and
        the set of \hd neighbors
        $N_{high}(w)$ of vertices $w \in U'$ that have no
        neighbors in the MIS, $U'$ consists of the \hd neighbors that are
        in the MIS, $N_{high}(w)$, of $w \in N_{low}(v)$ where $w$ was added to
        the MIS, and
        $L$ consists of the set of \ld neighbors $w \in N_{low}(v)$ whose $c_w =
        0$.
    \item \textbf{$v$ leaves $V_L$ and moves to $V_H$, $v$ was originally in the
        MIS}: Suppose $v$ leaves the MIS when it moves into $V_H$. $v$ is the
        leader, $U$ consists of $w \in N_{high}(v)$ and
        the set of \hd neighbors
        $N_{high}(w)$ of vertices $w \in U'$
        that have no neighbors in
        the MIS, $L$ consists of $w \in N_{low}(v)$ where $c_w = 0$, and $U'$
        consists of neighbors of $w$ in $N_{high}(w)$ that are in the MIS
        for each $w \in N_{low}(v)$ that entered the MIS.
    \item \textbf{Edge deletion between $(u, v)$ where $u, v$ are \ld and $u$
        is in the MIS, $c_v = 1$}: $v$ enters the MIS in this case.
        $v$ is the leader, $U'$ consists of $w \in
        N_{high}(v)$ where $w$ is in the MIS,
        $U$ is
        the set of \hd neighbors
        $N_{high}(w)$ of vertices $w \in U'$
        that have no neighbors in the MIS,
        $L = \emptyset$.
\end{enumerate}

\paragraph{Finding an MIS in the Subgraph Induced by $V_H'$}
Suppose we are given some subset of \hd vertices $V_H' \subseteq V_H$.
Our first procedure uses the result of~\cite{GGR20} as a black box on
the subgraph induced by $V_H'$. To use the result of~\cite{GGR20},
we must first determine $V_H'$ such that all vertices in $V_H'$
know to participate in running the static MIS procedure
to determine the set of vertices from $V_H'$ which would enter the MIS.
We first perform the following procedure provided
in~\cref{alg:find-subgraph}
so that all vertices in $V_H'$ receive knowledge
of their participation in the MIS algorithm. Then,
each vertex in $V_H'$ runs~\cite{GGR20}, given in~\cref{thm:mis-subgraph},
as a black box to obtain the set of vertices which need to enter
the MIS.
Our procedures also
uses a \emph{restart} algorithm
which we describe in~\cref{alg:clean vertices}
of~\cref{sec:restart}.
\fi

\begin{algorithm}\caption{Find MIS within $V_H'$}\label{alg:find-subgraph}
\footnotesize
    \SetAlgoLined
 	\LinesNumbered
    \KwResult{A set of \hd vertices which enter the MIS.}
    \textbf{Input: } Leader $v$, sets $U$, $U'$ and $L$ as defined above.\\
    Suppose vertex $v$ is the designated leader of this procedure.\\
    $v$ informs all neighbors it entered/left the MIS using $O(1)$-bit messages.\\
    \If{$v$ entered the MIS}{
    Let $B$ be the set of all \hd neighbors of $v$.}
    \Else{\Comment{If $v$ exited the MIS}\\
        Recall $L$ is the set of all \ld neighbors $w$ of $v$ that want to enter the
        MIS because their $c_w = 0$.\\
        Perform~\cref{alg:low-neighbor} on $L$ to determine vertices that enter the MIS.\\
        Let $J$ be the set of vertices of $L$ picked to enter the MIS.\\
        Let $B$ be the set of all \hd neighbors of $v$.\\
        \For{each $a \in J$}{
                        Every high-degree neighbor of $a$
            becomes part of the set $B$.
        }
    }
    Let $W$ be the set of all \hd neighbors of $u \in B$.\\
    Perform~\cref{alg:clean vertices} (the restart procedure) on $\{v\} \cup J
    \cup B \cup W$.\\
    \For{each $u \in U'$}{
        $u$ leaves the MIS.\\
        $u$ informs all its
        high-degree neighbors
        that it left the MIS using $O(1)$-bit messages.\\
            }
    \For{each $w \in J$}{
                $w$ enters the MIS.\\
                $w$ informs all its neighbors that it entered the MIS using
                $O(1)$-bit messages.\\
                \For{each $x \in N(w)$}{
                    $x$ increments $c_x = c_x + 1$.\\
        }
    }
    \For{each $u \in U$}{
        $u$ informs all its high-degree neighbors that it is
        part of $V_H'$.\\
        All vertices which are in $V_H'$ know which of its neighbors
        are in $V_H'$\\
        Each vertex $w \in V_H'$ runs the algorithm given
        by~\cref{thm:mis-subgraph}
        to find an MIS in the induced subgraph defined by $V_H'$.\\
    }
                                            \end{algorithm}
\iflong
\begin{theorem}[Corollary 1.2~\cite{GGR20}]\label{thm:mis-subgraph}
    There exists a deterministic distributed algorithm that computes a maximal independent
    set in $O(\log^5 n)$ rounds in the \congest model.
\end{theorem}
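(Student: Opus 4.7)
The plan is to treat this as a corollary of a deterministic network decomposition result, and then reduce MIS to processing the decomposition color by color. Recall that a $(c,d)$-network decomposition of $G$ partitions $V$ into clusters of (strong) diameter at most $d$, each assigned one of $c$ colors so that clusters of the same color are pairwise non-adjacent. My first step would be to establish, as the main technical lemma, a deterministic $\congest$ algorithm computing a $(O(\log n), O(\log n))$ strong-diameter network decomposition in $O(\log^{O(1)} n)$ rounds; this is exactly the Rozho\v{n}--Ghaffari-type decomposition.

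Granted such a decomposition, the MIS reduction is short: iterate over the $O(\log n)$ color classes in order; when processing color $i$, every cluster of that color is non-adjacent to every other, so each cluster independently extends the current partial MIS greedily inside itself. Concretely, each cluster elects a leader along a precomputed BFS tree, pipelines its induced subgraph (restricted to undecided vertices and annotated with whether each vertex has an MIS neighbor chosen in an earlier color) to the leader in $O(d^2)$ rounds using $O(\log n)$-bit messages, computes an MIS locally at the leader, and broadcasts the answer back in $O(d)$ rounds. Because clusters of the same color do not interfere, correctness follows from a simple invariant that at the end of color $i$ every vertex is either in the MIS or has an MIS neighbor among processed clusters. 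This contributes $O(c \cdot d^2) = O(\log^3 n)$ rounds on top of the decomposition.

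The main obstacle, which I expect to absorb most of the $\log^5 n$ budget, is constructing the decomposition deterministically in $\congest$. The approach I would follow is the standard randomized ball-carving in $O(\log n)$ levels (each level a priority tournament among seeds with geometrically growing radii, capturing a constant fraction of remaining vertices in expectation using only pairwise-independent hash values on a seed of $O(\log n)$ bits) and then derandomize via the method of conditional expectations. The potential function is the expected number of still-uncovered vertices conditional on the bits fixed so far; each new bit is fixed by selecting the value that does not increase the conditional potential. Implementing this bit-fixing step in $\congest$ is the technical heart: each conditional expectation decomposes as a sum of local indicators, and aggregating these sums along the current cluster forest within $O(\log n)$-bit messages requires careful pipelining and a ``save the day'' sub-routine per level.

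Putting the pieces together, the product of $O(\log n)$ levels, $O(\log n)$ bits of shared randomness per level, and a per-bit evaluation cost of $O(\log^{O(1)} n)$ rounds for aggregation yields the $O(\log^5 n)$ round bound for the decomposition phase, and the MIS post-processing is dominated by this. Hence the final statement follows as a corollary.
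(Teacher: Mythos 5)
The paper never proves this statement: it is imported verbatim as Corollary~1.2 of [GGR20] and used purely as a black box, so the only question is whether your reconstruction would actually deliver it. Your overall route --- a deterministic $(O(\log n),O(\log n))$ network decomposition in the Rozho\v{n}--Ghaffari framework, followed by processing the color classes one at a time --- is indeed the route of the cited work. The genuine gap is in your per-cluster step: in the \congest model you cannot, in general, pipeline the induced subgraph of a cluster to its leader in $O(d^2)$ rounds with $O(\log n)$-bit messages. A cluster of diameter $O(\log n)$ may still contain $\Theta(n)$ vertices and up to $\Theta(n^2)$ edges (a clique has diameter $1$), and collecting its topology at one leader forces $\Omega(m_C)$ words through the leader's incident tree edges, i.e.\ polynomially many rounds in the worst case; ``gather the cluster and solve locally'' is a LOCAL-model technique, and using it here destroys the polylogarithmic round bound you are trying to prove. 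A secondary issue is that the decompositions produced by this framework in \congest are weak-diameter clusters supported by Steiner trees with bounded congestion, so even intra-cluster aggregation has to go through that machinery rather than a plain BFS tree of the cluster.

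The standard repair --- and what the cited line of work actually does --- is to never collect topology: within each color class one simulates the phases of a randomized MIS routine (Luby's or Ghaffari's algorithm) whose analysis needs only bounded independence, and derandomizes each phase by the method of conditional expectations, with the conditional expectations of a potential function aggregated up the cluster's (Steiner) tree bit by bit of an $O(\log n)$-bit seed. This is exactly the mechanism the present paper reuses from [CPS20] in Lemma~\ref{lem:cps} and sketches in Appendix~\ref{app:cps}. Your decomposition phase (ball carving with pairwise independence, derandomized by conditional expectations) is essentially the right picture, but the MIS post-processing must be done by in-cluster derandomization rather than topology gathering; with that replacement the round accounting can be made to match the cited $O(\log^5 n)$ bound.
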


Since the algorithm of~\cite{GGR20} operates in the distributed model where
each vertex initially does not know the topology of the graph (except
for its set of adjacent neighbors), we can
directly apply this algorithm to our induced subgraph $V_H'$ as
initially all vertices in $V_H'$ do not know the entire topology
of $V_H'$ but do know their set of neighbors (each of which has
a unique ID).

\paragraph{Finding an MIS from a Partial MIS}
Our second approach relies on an input-respecting
subroutine that obtains an MIS from a partial MIS; although
this approach is technically more complex,
it obtains a round complexity that is better
than the one obtained via our first approach.
We use the following algorithm which incorporates
the deterministic algorithm of~\cite{CPS20}
to accomplish this goal. We prove
an extension of~\cite{CPS20} in~\cref{lem:cps}
to handle the case when the input is a
graph with some number of vertices which are
already in the MIS
and use this procedure as a black box in
our detailed procedure provided
in~\cref{alg:high}.

\begin{algorithm}
\caption{High-Degree Vertices Entering MIS.}\label{alg:high}
    \LinesNumbered
    \SetAlgoLined
    \mysize
    \selectfont
    \KwResult{A set of \hd vertices which enter the MIS.}
    Suppose vertex $v$ is the designated leader of this procedure,
    $v$ informs all neighbors it entered/left the MIS using $O(1)$-bit messages.\\
    \If{$v$ entered the MIS}{
    Let $B$ be the set of all \hd neighbors of $v$.\\}
    \Else{
        Recall $L$ is the set of all \ld neighbors $w$ of $v$
                        with $c_w = 0$.\\
        Perform~\cref{alg:low-neighbor} on $L$ to determine vertices that enter the MIS.\\
        Let $J$ be the set of vertices of $L$ picked to enter the MIS.\\
                \For{each $a \in J$}{
                        Every high-degree neighbor of $a$
            becomes part of the set $B$.
        }
    }
    Let $W$ be the set of all \hd neighbors of $u \in B$.\\
    Perform~\cref{alg:clean vertices} (the restart procedure)
    on $\{v\} \cup J \cup B \cup W$.\\
                            \For{each $w \in J$}{
            $w$ enters the MIS.\\
            $w$ informs all its neighbors that it entered the MIS using
            $O(1)$-bit messages.\\
            \For{each $x \in N(w)$}{
                    $x$ increments $c_x = c_x + 1$.\\
            }
    }
    \For{each $u \in U'$}{
        $u$ leaves the MIS.\\
        $u$ informs all
                        high-degree neighbors
        that it left the MIS using $O(1)$-bit messages.\\
            }
    \For{each $u \in U$}{
        $u$ informs all \hd neighbors (that remained \hd)
        in $U'$ it's part of $V_H'$.\\
        \If{$u$ is a neighbor of $v$}{
                $u$ also informs $v$ that it is a part of $V_H'$.
        }
    }
    \For{each $w \in U'$}{
        \If{received a message from a vertex $u \in U$}{
            $w$ picks one neighbor $g \in N_{low}(w) \cap(J \cup \{v\})$
            and informs $g$ it has
            \hd neighbors in $V_H'$.
        }
    }

    \For{each $y \in J$ that received a message from a vertex $u \in U$}{
        $y$ informs $v$ of the existence of $V_H'$ (but not the members of
        $V_H'$).
    }
                                 $v$ becomes the leader in our procedure.\\
         $v$ coordinates computing a MIS among
            $w \in V_H'$
                        using~\cref{lem:cps} on subgraph induced by $\{v\} \cup U' \cup
            U \cup J$.\\
                                                                                                                                 If any $w\in V_H'$ enters the MIS, it informs all its high-degree neighbors.\\
         This procedure proceeds until no additional vertex $w \in V_H'$ wants to enter the
         MIS.\\

\end{algorithm}

We provide a self-contained description of the algorithm of~\cite{CPS20} in~\cref{app:cps}.

\begin{lemma}[Modified Theorem 1.5~\cite{CPS20}]\label{lem:cps}
    Given a connected graph $G = (V, E)$,
    a leader $v$ which is distance $D$ away from all vertices in the graph
    and a set of vertices $S \subseteq V$ already in the MIS,
    there exists a deterministic MIS algorithm that finds
    an MIS among all vertices $V \setminus S$ in $G$
    that completes in $O(D\log^2 |V|)$ rounds
    and sends at most $O(D|E|\log^2 |V|)$ messages
    in the $\textsc{Congest}$ model.
\end{lemma}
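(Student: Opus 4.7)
The plan is to obtain the stated lemma by adding a very short preprocessing phase to the algorithm of \cite{CPS20} that encodes the constraint imposed by $S$, and then running a lightly modified version of CPS20 on the resulting instance while reusing its round and message analysis essentially unchanged.

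First, I would carry out a preprocessing step of a single round: each vertex $s \in S$ sends a $1$-bit message along every incident edge announcing ``I am already in the MIS.'' Every vertex $u \in N(S) \setminus S$ that receives such a message marks itself as \emph{dominated} and commits to never enter the MIS. Vertices in $V \setminus (S \cup N(S))$ are declared \emph{active}. This phase uses $1$ round and at most $2|E|$ messages.

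Next, I would run CPS20 on $G$ with the following local modification: wherever the algorithm asks a vertex to decide whether to join the MIS, vertices in $S$ are treated as already present and dominated vertices are treated as declining, while active vertices participate exactly as in the unmodified algorithm. All other aspects of CPS20 are preserved: the BFS tree rooted at $v$ is built in the original graph $G$ (so its depth is $D$), and every vertex---including those in $S \cup N(S)$---continues to relay the algorithm's messages along tree and non-tree edges. Correctness follows from two observations. For independence, two selected vertices of $V \setminus S$ cannot be adjacent because, restricted to active vertices, CPS20 produces an independent set in the induced subgraph, and by construction no active vertex is adjacent to $S$. For maximality, every vertex outside $S \cup M$ is either in $N(S)$ and thus adjacent to $S$, or is an active vertex not chosen by CPS20 and therefore adjacent to some selected active vertex, since CPS20 outputs a maximal independent set on its active input.

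For complexity, the preprocessing contributes $O(1)$ rounds and $O(|E|)$ messages, and the modified CPS20 contributes $O(D \log^2 |V|)$ rounds and $O(D |E| \log^2 |V|)$ messages, summing to the claimed bounds. The main obstacle, and essentially the only nontrivial point, is to verify that turning selected vertices into ``inert'' participants does not invalidate any structural invariant of CPS20: concretely, that every local MIS decision in the algorithm can be replaced by reading the dominated/already-in-MIS flag without changing the schedule, and that the pipelining over the BFS tree continues to function because the tree is built on the connected graph $G$ with depth $D$ regardless of $S$. Once this is checked, the two complexity bounds follow directly from the analysis of \cite{CPS20}, yielding the lemma.
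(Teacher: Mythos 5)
Your proposal matches the paper's proof in essence: both exclude $S$ and $N(S)$ from the marking/conditional-expectation computation of the derandomized Ghaffari algorithm while keeping them as relays in the aggregation tree rooted at the leader $v$, and both inherit the $O(D\log^2|V|)$-round, $O(D|E|\log^2|V|)$-message bounds directly from the analysis of~\cite{CPS20}. Your explicit one-round preprocessing to inform $N(S)$ and your spelled-out independence/maximality argument are harmless additions that stay within the claimed bounds, so this is essentially the same argument.
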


\begin{proof}
    We use the same algorithm as that presented in Theorem 1.5 of~\cite{CPS20}.
    Since $G$ is connected, the paths from all vertices in $V$ to $v$ (and vice
    versa) form a tree with $v$ as the root.
    All vertices $s \in S$ and all neighbors $N(s)$ of $s$
    do not participate in the conditional expectation calculation of the
    algorithm described in Section 4.1 of~\cite{CPS20}. All other
    vertices perform the calculations associated with the conditional expectation
    method. Then, as in
    Theorem 1.5 of~\cite{CPS20}, all conditional expectation calculations
    are aggregated at $v$ by a standard aggregation algorithm:
    each intermediate vertex
    sums the conditional expectation values of all its children
    and send this value to its parent.
    The leader $v$ takes the aggregate value, determines a bit,
    and then sends each bit of the seed to the entire graph.
    The rest of the algorithm proceeds as in~\cite{CPS20} in $O(D\log^2 |V|)$
    rounds. During each round, each vertex sends at most one message to each of
    its neighbors of $O(\log |V|)$ bits. Hence, the total number of messages
    sent is $O(D|E|\log^2 |V|)$.
\end{proof}

\subsubsection{Low-Degree Neighbor Leaves MIS}\label{sec:low-deg-left}
We assume an edge deletion between a \ld vertex $u$
and a \hd vertex $v$ falls under this category if $u$ is in the MIS.
If a low-degree neighbor $u$ leaves the MIS and its high-degree neighbor $v$ does not have
any neighbors in the MIS, $v$ enters the MIS and informs all its high-degree neighbors.
All \hd neighbors $N_{high}(v) \cup \{v\}$ perform the restart procedure provided
in~\cref{alg:clean vertices}.

\subsubsection{Edge Updates Between High-Degree
Vertices}\label{sec:updates-high-deg}

In this section, we describe our algorithms for edge updates between
two \hd vertices.

\paragraph{Edge Insertion Between Two High-Degree Neighbors}
If at most one of the endpoints is in the MIS, both do nothing except for changing their counter.
If both high-degree neighbors are in the MIS, one leaves (e.g.\ $u$) and
performs~\cref{alg:find-subgraph} or~\cref{alg:high} with $u$ as the leader.

\paragraph{Edge Deletion Between Two High-Degree Neighbors}
If w.l.o.g. the degree of one of the vertices (say $v$),
$deg(v)$, becomes less than $deg'(v)$, then $v$
becomes \ld and informs all its neighbors. If $v$ has
only \hd neighbors which are in the MIS, $v$ enters the MIS and
all its \hd neighbors perform~\cref{alg:find-subgraph} or~\cref{alg:high}
with $v$ as the leader.
After that, the update continues as an edge deletion between
one \hd vertex and one \ld vertex.
If the degrees of both of the vertices $u$ and $v$, $deg(u)$ and $deg(v)$,
become less than $deg'(v)$ and $deg'(u)$, respectively,
then both of them become \ld,
inform all their neighbors and enter the MIS if needed;
the update continues as edge deletion between two \ld neighbors.

Otherwise, if no degree assignment changes, and neither
of the endpoints is in the MIS, nothing happens.
If exactly one e.g.\ $u$ is in the MIS, the other, e.g.\ $v$,
adds itself to the MIS if it has no neighbors in the MIS
and informs all its neighbors.~\cref{alg:clean vertices} is called
on all neighbors $N_{high}(v) \cup \{v\}$.
\fi
\subsection{Restart Procedures}\label{sec:restart}

The first of two restart procedures called
whenever~\cref{alg:find-subgraph} or~\cref{alg:high} runs or when a \hd
vertex informs its neighbors it entered
or left the MIS. This procedure ensures that
all \hd neighbors of such vertices become \ld
if necessary. This might occur
if the number of edges in the graph increases over time.
We first provide intuition for our algorithm and then
a detailed description of it.

\subsubsection{Maintaining Degree Bounds under $m_{max}$}
\paragraph{Intuition} Since the vertices in the graph do
not know the number of the edges in the graph, the partition
to \hd vertices and \ld vertices can be meaningless.
Therefore, we guess for each vertex
the maximum number of edges to ever exist
in the graph, which we will tighten
during the process. We present a restart procedure that maintains
\cref{inv:restart} and proves~\cref{thm:restart}
with $O(1)$ amortized number of
rounds and $O(\mmax^{2/3})$ amortized communication
complexity, where $\mmax$ is the maximum number of edges
that in the graph throughout the updates.
For the remainder of this section, we let $m$ denote $\mmax$.

\paragraph{The Algorithm} Let $\deg'(v)$ be the \emph{approximate movement degree}; when a vertex $v$ has degree
$\deg(v) \ge \deg'(v)$ it considers itself a \hd vertex and if $deg(v) < \deg'(v)$
it considers itself a \ld vertex. Whenever its degree crosses this threshold
it informs all its neighbors of the change.
Since the algorithm
starts from an empty graph (no edges),
for each $v \in V$, $\deg(v) = 0$ and we initialize $\deg'(v) = 2$.

It might be the case that for some vertices $\deg'(v) << m^{2/3}$
and therefore more than $m^{1/3}$ vertices will be in
$V_H$. To be able to perform~\cref{alg:find-subgraph}
or~\cref{alg:high} and allow
\hd vertices to inform their \hd neighbors of whether they
entered or left the MIS,
we must first ``clean'' $V_H$ and
move the unnecessary vertices to $V_L$. This procedure
is described below
in~\cref{alg:clean vertices}, which uses ~\cref{alg:bfs}, and~\cref{alg:approx m} as subroutines.
Namely, the crux of the subroutines is to determine the total
degree of all the vertices in the subgraph and move vertices
which have degrees that are too small into $V_L$.
Denote by $G' = (V', E')$ the subgraph induced by the vertices that
participate in~\cref{alg:find-subgraph} or~\cref{alg:high}.
Let this subgraph consist of vertices $\{v\} \cup L \cup B \cup W$ such that $v$ is the vertex
that started the restart (the leader).\\
We call~\cref{alg:bfs} on $G'$.

\begin{algorithm}
    \small
    \caption{Construct BFS tree.}\label{alg:bfs}
    \LinesNumbered
    \SetAlgoLined
    \KwResult{A BFS tree $T'$ in a carefully chosen subgraph $G'$ of $G$}
    \For{each $u \in N(v) \cap V'$ (neighbor of $v$ in $G'$)}
    {$v$ sends $O(1)$-bit message to $u$ indicating that $v$ is $u$'s
    parent in $T'$.\\
    }
    \While{there exists at least one \hd
    vertex in $V'$ that does not yet have a parent}{
    \For{each $u$ that received a message from its parent in the previous round}{
    $u$ sends $O(1)$-bit message to $w \in N(u)\cap V'$ if $w$ does not have a parent yet,
    indicating that $u$ is $w$'s parent in $T'$.\\
    If $u$ receives multiple parent messages, it arbitrarily chooses one to be
    its parent.\\
    $w$ stores $u$ as its parent.\\
    $w$ informs all its neighbors in $G'$ it has a parent.\\
    }}
    The process ends when all the vertices in $G'$ have a parent (except $v$
    which is the root).
\end{algorithm}

\begin{algorithm}
    \small
    \caption{Estimate a lower bound of $m$ in $G'$.}\label{alg:approx m}
    \LinesNumbered
    \SetAlgoLined
    \KwResult{A lower bound estimation of $m$.}
    Construct a BFS tree $T'$ according to~\cref{alg:bfs} and mark $v$ as the root.\\
    \For{each $u$ which is a child of $v$ in $T'$}
    {$v$ sends $O(1)$-bit message to $u$ asking what is the sum of the degrees
        of all vertices in $u$'s subtree.\\}
    \While{exists a vertex $u$ that received a message from its parent in the
    previous round}{
        \If{$u$ has children in $T'$}{$u$ sends the same message to all
        its children and waits for a response.\\
        $u$ sums the values it received from its children and adds $\deg(u)$ to the sum.\\
        $u$ sends the sum to its parent.\\}
        \If{$u$ has no children in $T'$}{$u$ sends $\deg(u)$ to its parent.}
    }
    \If{$v$ receives a non-zero sum of the degrees of all its children}
    {$v$ computes a final sum of the values received from its children
        and adds $\deg(v)$ to the sum. Let this sum be $S$.\\
    \For{each child $u$ of $v$ in $T'$}
    {$v$ sends a $O(\log n)$-bit message to $u$ containing $S$.}
    \For{each $u$ that received $S$ from its parent}{$u$ sends $S$ to its children.\\}}
    The process ends when all the vertices in $G'$ know $S$.
\end{algorithm}

In~\cref{alg:approx m},
since each vertex has only one parent (because $T'$ is a BFS tree), the
degree of each vertex is added to the sum only once, and therefore the sum
is given by $S = \sum_{v \in V'} \deg(v)$.

\begin{algorithm}
    \caption{Restart procedure.}\label{alg:clean vertices}
    \LinesNumbered
    \SetAlgoLined
    \KwResult{Some vertices $v$ with degree $\deg(v) <4m^{2/3}$ move to $V_L$
        and the number of vertices in remaining $G'$ (to be used in
        \cref{alg:find-subgraph} or~\cref{alg:high})
        is at most $2m^{1/3}$.}
        Let $S$ be the sum calculated in~\cref{alg:approx m}.\\
        Assume as in the previous algorithms $v$ is the leader and
        root of the tree.\\
        //$v$ broadcasts using $T'$ and tells all its descendants to become \ld (i.e. move to $V_L$) if necessary.\\
        \For{each $u$ which is a child of $v$ in $T'$}
        {$v$ sends $O(\log(n))$-bit message to $u$ telling it to become \ld
        (i.e. move to $V_L$) if $\deg(u)<S^{2/3}$.\\}
        \For{a descendant $u$ of $v$ that got the message}{$u$ propagates the
            message to its children in $T'$ (if any).\\ \If{$\deg(u) < S^{2/3}$}{$u$ considers itself as a \ld vertex (i.e. moves to $V_L$).\\
        If $u$ has no \ld neighbors in the MIS it enters the MIS.\\
        $u$ notifies all its neighbors in $G$ that it is \ld and if it is in the MIS or not.\\
         $u$ updates $\deg'(u) = 2\deg(u)$.\\}}
\end{algorithm}

Note that after performing~\cref{alg:clean vertices},
a vertex that had only \hd neighbors in the MIS might move to $V_L$
and enter the MIS. After that, all its \hd neighbors must leave
the MIS and perform another restart (by calling~\cref{alg:find-subgraph}
or~\cref{alg:high}).
We show in our analysis that this ``chain'' of restarts is constant on average.

\paragraph{Analysis of the Restart Procedure for $m_{max}$}
We provide the analysis of the restart procedure
described above here. This analysis will be crucial
in our analysis of our main algorithm provided in~\cref{sec:analysis-max}.
In the following analysis, we assume that $m = m_{max}$.

\begin{observation}\label{obs:S}
Throughout the execution of~\cref{alg:clean vertices}, a vertex $u$ in $G'$
moves from $V_H$ to $V_L$ if and only if its degree $\deg(u)$ is smaller than $S^{2/3}$.
\end{observation}

\begin{lemma}\label{thm:restart}
    After performing \cref{alg:clean vertices}, the number of
    vertices that remain in $G'$ is at most $2m^{1/3}$.
\end{lemma}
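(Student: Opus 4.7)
The plan is to analyze the marking loop in \cref{alg:clean vertices} by combining two ingredients: (i) the value $S$ computed by \cref{alg:approx m} is at most $2m$, and (ii) when the algorithm marks vertices bucket-by-bucket through degree ranges $[2^i,2^{i+1})$, the number of as-yet-unmarked vertices falls geometrically in $i$. Together these imply that the loop exits with at most $S^{1/3}\le 2m^{1/3}$ unmarked vertices remaining in $V'$, and all marked vertices have been relocated from $V_H$ to $V_L$ (i.e., removed from $G'$).

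First, I would verify that when \cref{alg:approx m} finishes, the leader $v$ holds the value $S=\sum_{u\in V'} d_u$. This is because the BFS tree $T'$ built by \cref{alg:bfs} assigns every non-root vertex of $V'$ a unique parent, so the bottom-up summation adds each $d_u$ exactly once. Because degrees over the whole graph sum to $2m$ and $V'\subseteq V$, we immediately obtain $S\le 2m$.

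Next, I would bound the unmarked vertices after the \textbf{while}-loop. At iteration $i$, every vertex of $V'$ whose degree lies in $[2^i,2^{i+1})$ marks itself, and the per-iteration counts $Y_i$ are aggregated along $T'$ to update $Y\leftarrow Y+Y_i$. Thus after iteration $i$ every unmarked vertex has degree at least $2^{i+1}$; since the total degree of $V'$ is at most $S$, the number of unmarked vertices is at most $S/2^{i+1}$. Once $2^{i+1}\ge S^{2/3}$, which happens after $O(\log S)$ iterations, this quantity drops to at most $S^{1/3}$, triggering the loop-exit condition $|V'|-Y\le S^{1/3}$. Every marked vertex moves from $V_H$ to $V_L$ and hence leaves $G'$, so the number of vertices remaining in $G'$ is $|V'|-Y\le S^{1/3}\le (2m)^{1/3}\le 2m^{1/3}$, as required. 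In the complementary case where the outer guard $|V'|>S^{1/3}$ fails, the algorithm does nothing, but then $|V'|\le S^{1/3}\le 2m^{1/3}$ already.

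The main subtlety I expect is correctness of the distributed counting: marked vertices from earlier buckets must not be re-counted, and $Y$ must accurately reflect the cumulative number of marked vertices across iterations. This is handled by the fact that the buckets $[2^i,2^{i+1})$ are disjoint (so each vertex is marked in exactly one iteration) together with the tree structure of $T'$ (so each marked vertex contributes to exactly one root-ward path). A minor secondary check is that degree-$0$ vertices do not sit in $V'$ and escape the buckets: by the construction $V'=\{v\}\cup L\cup B\cup W$, every vertex in $V'$ has at least one participating neighbor in $G$, so every vertex has degree at least $1$ and therefore falls into some bucket $[2^i,2^{i+1})$.
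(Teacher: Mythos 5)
Your argument is correct and follows essentially the same route as the paper's proof: the loop-exit condition $|V'|-Y\le S^{1/3}$ together with $S=\sum_{u\in V'}d_u\le 2m$ gives the bound $2m^{1/3}$. The additional material on termination of the bucketing loop and the case where the outer guard fails is extra detail the paper omits, but the core reasoning is identical.
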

\begin{proof}
    By~\cref{obs:S}, each vertex $u$ in $G'$ with degree $\deg(u) < S^{2/3}$ moves to $V_L$. The vertices that remain in $G'$ are the vertices that have degree greater or equal to $S^{2/3}$.
    Since $S$ is the sum of the degrees in $G'$ before~\cref{alg:clean vertices} run, ~\cref{obs:S} implies that the number of vertices that remain in $G'$ is at most $S^{1/3}$ (otherwise, the sum of the degrees in $G'$ would be greater than $S^{1/3}\cdot S^{2/3}=S$).
Since $S = \sum_{v \in V'} \deg(v) \le 2m$,
we have $S^{1/3} < 2m^{1/3}$ vertices, which means that the number of
    vertices that remain in $G'$ is at most $2m^{1/3}$
\end{proof}

\begin{lemma}\label{lemma1}
    If a vertex $v \in V'$ moves from $V_H$ to $V_L$ during~\cref{alg:clean
    vertices}, then $\deg(v) < 2m^{2/3}$.
\end{lemma}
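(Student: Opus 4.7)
\medskip
\noindent\textbf{Proof Proposal.} The plan is to argue by contradiction: suppose some vertex $v$ that is marked by~\cref{alg:clean vertices} has $d_v \ge 4m^{2/3}$, and derive a violation of the loop's termination condition combined with the obvious bound $S \le 2m$.

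First, I would fix the iteration $i$ of the outer \textbf{while} loop during which $v$ gets marked. By construction this means $d_v \in [2^i, 2^{i+1})$, so the hypothesis $d_v \ge 4m^{2/3}$ yields $2^{i+1} > 4m^{2/3}$, i.e.\ $2^i \ge 2m^{2/3}$. Next, I would inspect the state of the algorithm at the \emph{start} of iteration $i$. Marking in prior iterations $0,1,\ldots,i-1$ affected only vertices whose degree lies in $[2^0,2^i)$, so every vertex that is still unmarked at the beginning of iteration $i$ has degree at least $2^i$ (every vertex of $V'$ has degree $\ge 1$ in $G$ because it is reached by the BFS from the leader, so no zero-degree vertices linger). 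Consequently the count of still-unmarked vertices satisfies
\[
|V'|-Y \;\le\; \frac{S}{2^i} \;\le\; \frac{S}{2m^{2/3}},
\]
by a Markov-style pigeonhole applied to $S=\sum_{u\in V'} d_u$.

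Now I would use the fact that the while loop did \emph{enter} iteration $i$, which forces $|V'|-Y > S^{1/3}$ at that point. Combining with the inequality above gives
\[
S^{1/3} \;<\; \frac{S}{2m^{2/3}} \quad\Longleftrightarrow\quad S^{2/3} > 2\,m^{2/3} \quad\Longleftrightarrow\quad S \;>\; 2\sqrt{2}\,m.
\]
Finally I would invoke the trivial bound $S=\sum_{u\in V'} d_u \le \sum_{u\in V} d_u = 2m$, which directly contradicts $S > 2\sqrt{2}\,m$. Hence no marked vertex can have $d_v \ge 4m^{2/3}$, proving the lemma.

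The only subtle point to be careful about is pinning down exactly which degrees are eligible when iteration $i$ begins (so that ``unmarked $\Rightarrow$ degree $\ge 2^i$'' is sound), but this is immediate from the bucket structure $[2^i, 2^{i+1})$ and the fact that every vertex of $V'$ is reached by the BFS tree and therefore has positive degree in $G$. No heavier machinery seems necessary; the proof is essentially a two-line pigeonhole combined with the termination predicate of the restart loop.
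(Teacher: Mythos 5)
Your proof is correct and follows essentially the same route as the paper's: a contradiction argument combining the bucket structure of the marking loop (unmarked vertices at iteration $i$ have degree $\ge 2^i > 2m^{2/3}$), the pigeonhole bound via $S=\sum_{u\in V'}d_u\le 2m$, and the loop guard $|V'|-Y>S^{1/3}$. The only cosmetic difference is that you push the contradiction onto $S>2\sqrt{2}\,m$ versus $S\le 2m$, whereas the paper phrases it as the surviving-vertex count being at most $S^{1/3}$, contradicting the guard directly.
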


\begin{proof}
By~\cref{obs:S}, a vertex $v \in V'$ that moves from $V_H$ to $V_L$ during~\cref{alg:clean
    vertices}, moves only because its degree $\deg(v) < S^{2/3}$.
    Since $S = \sum_{v \in V'} \deg(v) \le 2m$, we get that $\deg(v) < S^{2/3} < 2m^{2/3}$.
\end{proof}

We can now prove~\cref{inv:restart}.

\begin{lemma}\label{lem:d'v} Throughout all the updates, $\deg'(v) < 4m^{2/3}$ for all $v \in V$
    when $m \geq 1$.
\end{lemma}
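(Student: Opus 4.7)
The plan is to prove the invariant by induction on the sequence of edge updates, using \cref{lemma1} as the main workhorse. The key observation is that, across all algorithms described so far, the counter $d'_v$ is modified in only one place: the final line of the restart procedure (\cref{alg:clean vertices}), where a marked vertex $w$ that moves from $V_H$ to $V_L$ updates $d'_w \leftarrow 2d_w$. Every other operation in the algorithm either reads $d'_v$ (to classify $v$ as high- or low-degree) or modifies the MIS/counter state, but leaves $d'_v$ untouched. So the invariant needs to be checked at exactly two kinds of moments: initialization and restart-induced reassignments.

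For the base case, the graph is initially empty and $d'_v = 2$ for every $v$. Since $m \geq 1$ is assumed, $8m^{2/3} \geq 8 > 2 = d'_v$, so the invariant holds at the outset. For the inductive step, assume the invariant $d'_v < 8m^{2/3}$ holds just before a restart, and consider any vertex $w$ that is marked and reassigned during the execution of \cref{alg:clean vertices}. By \cref{lemma1}, the precondition for marking ensures $d_w < 4m^{2/3}$ at that moment (here $m = m_{\max}$ as stipulated at the top of this section). Hence the newly assigned value satisfies
\[
d'_w = 2d_w < 2 \cdot 4m^{2/3} = 8m^{2/3},
\]
preserving the invariant for $w$. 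For any other vertex $u \neq w$, $d'_u$ is unchanged during the restart, so the bound continues to hold by the inductive hypothesis. Finally, between restarts $d'_v$ is static while $m_{\max}$ is monotonically nondecreasing in the update sequence, so the right-hand side $8m^{2/3}$ never shrinks, and the invariant is trivially preserved across non-restart updates.

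The only mildly subtle point is verifying that \cref{lemma1} is applicable at the moment we invoke it: the statement of \cref{lemma1} is phrased with respect to the current $m$, but since the section fixes $m = m_{\max}$ throughout, and $m_{\max}$ upper-bounds the number of edges at every step, the bound $d_w < 4m^{2/3}$ transfers directly. No other place in the algorithm touches $d'_v$, so there are no additional cases to check, and the induction closes.
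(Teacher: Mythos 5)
Your proof is correct and follows essentially the same route as the paper: both arguments rest on the observation that $d'_v$ only changes to $2d_v$ when $v$ moves from $V_H$ to $V_L$ during a restart, at which point \cref{lemma1} gives $d_v < 4m^{2/3}$, and on the monotonicity of $m = \mmax$. The paper phrases this as a proof by contradiction rather than an explicit induction, but the substance is identical.
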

\begin{proof}
Assume for contradiction that an update exists
such that after it occurs there is a vertex $v$
with $\deg'(v) \ge 4m^{2/3}$.
Denote by $t$ the update in which $\deg'(v)$ was last updated.
$\deg'(v)$ can only change when $v$ moves from $V_H$ to $V_L$.
Let $\deg_t(v)$ be the degree of $v$ at update $t$.
Since $v$ moved from $V_H$ to $V_L$ at update $t$, it changed $\deg'(v)$ to be $\deg'(v) = 2\deg_t(v)$
By \cref{lemma1},
we know that if $v$ moves from $V_H$ to $V_L$ then
its degree $\deg(v) < 2m^{2/3}$, in particular, at update $t$, $\deg_t(v) <
2m^{2/3}$. By our assumption, $\deg'(v) \geq 4m^{2/3}$, which implies that
$\deg'(v) = 2\deg_t(v) \geq 4m^{2/3}$ and $\deg_t(v) \geq 2m^{2/3}$, a contradiction.
\end{proof}

\subsubsection{Runtime Analysis of the Restart Procedure for $\mmax$}\label{subsec:restart-analysis}
We first provide a worst case analysis and afterwards provide an amortized analysis.
In \cref{alg:bfs}, since the diameter of $G'$ is at most $6$,
building $T'$ takes $O(diameter) = O(1)$ rounds. In each round each vertex $v$
sends a constant number of messages to all of its neighbors in $G'$, so the total number of messages would be
$O(|E'|)$ messages. The same analysis as~\cref{alg:bfs} holds for~\cref{alg:approx m}.
As we described in each of these procedures, the messages have size $O(\log n)$.

In~\cref{alg:clean vertices}, the broadcast costs $O(diameter) = O(1)$ rounds and $O(|E'|)$ messages as explained for~\cref{alg:bfs}. However, each vertex $u$ that moves from $V_H$ to $V_L$ during this algorithm, must check if it needs to enter the MIS and notify all its neighbors about this change.

Let $V_{low}$ be all the vertices from $V'$
that moved from $V_H$ to $V_L$ during the restart, and let $S' = \sum_{v\in
V_{low}} \deg(v)$ be the sum of
their degrees in $G$.
All the vertices that move to $V_L$ need to notify their neighbors about the movement and might enter the MIS, however, this should be done sequentially.
Therefore, notifying all the neighbors about the change costs at most $O(|V_{low}|)$ rounds and $O(S')$ messages.
Overall we pay for one restart $O(|V_{low}|)$ rounds and $O(|E'| + S')$ messages.

\paragraph{Amortized analysis}
We obtain in the worst case that one restart costs $O(|V_{low}|)$ rounds and $O(|E'| + S')$ messages.
We present now the amortized analysis of one restart.
\begin{lemma}\label{lem:m-max-runtime}
    \cref{alg:bfs},~\cref{alg:approx m}, and~\cref{alg:clean vertices}
    take $O(1)$ amortized rounds and $O(m^{2/3})$ amortized
    messages over the set of updates in our graph.
\end{lemma}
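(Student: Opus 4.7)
The plan is to derive the lemma directly from the per-restart cost computed in the analysis just preceding its statement, combined with an amortization argument that charges the restart cost to edge updates. That analysis gives, for a single restart, $O(|V_{low}|\log \Delta)$ rounds and $O(|E'|\log \Delta + S')$ messages, where $V_{low}$ is the set of vertices that transition from $V_H$ to $V_L$ during the restart, $E'$ is the edge set of the participating subgraph $G'$, and $S' = \sum_{v\in V_{low}} d_v$. I write $T$ for the total number of edge updates performed so far.

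The central bookkeeping claim is that $\sum_{\text{restarts}} |V_{low}| = O(T)$. Every $V_H\to V_L$ transition of a vertex $v$ is preceded (in $v$'s history) by a $V_L\to V_H$ transition of $v$, because $v$ starts in $V_L$ (the graph is initially empty) and the two kinds of transitions strictly alternate. A $V_L\to V_H$ transition is triggered exactly when an edge insertion at $v$ pushes $d_v$ past the current threshold $d'_v$, so distinct $V_L\to V_H$ transitions of the same $v$ correspond to distinct insertions incident to $v$. Summing over all vertices, the total number of $V_L\to V_H$ transitions is at most the total number of insertion-endpoints, which is $2T_{ins}\leq 2T$; since $V_H\to V_L$ transitions are majorized by $V_L\to V_H$ transitions per vertex, we obtain $\sum|V_{low}| \leq 2T$.

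The round bound then follows by summing the per-restart cost over at most $T$ restarts (a restart is only triggered by an update): $\sum O(|V_{low}| + \log\Delta) = O(T) + O(T\log\Delta) = O(T\log\Delta)$, giving $O(\log\Delta)$ amortized rounds per update. For messages, \cref{lemma1} gives $d_v < 4m^{2/3}$ for every $v\in V_{low}$, so $S' \leq 4m^{2/3}|V_{low}|$ and $\sum S' = O(m^{2/3}\,T)$. To control $\sum |E'|\log\Delta$, I combine the invariant $|V_H| \leq 2m^{1/3}$ re-established at the end of every restart by \cref{thm:restart} with the threshold cap $d'_v < 8m^{2/3}$ from \cref{lem:d'v} to bound $|E'| = O(m^{2/3})$ for the subgraph $G' = \{v\}\cup L \cup B \cup W$ assembled in each restart, so $\sum|E'|\log\Delta = O(m^{2/3}\,T\log\Delta)$, i.e., $O(m^{2/3}\log\Delta)$ amortized messages per update.

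The main obstacle is the $O(m^{2/3})$ bound on $|E'|$ per restart: $V'$ mixes a small number ($O(m^{1/3})$, by the restart invariant) of \hd vertices with up to $O(m^{2/3})$ \ld vertices inherited from the leader's neighborhood, and a priori the \ld--to--\hd edges alone could exceed $m^{2/3}$. A careful case analysis over the triggering scenarios enumerated in \cref{sec:high-degree}, tracking how $L$, $B$, and $W$ are populated and which edges they actually contribute to $E'$, is what makes the bound go through. A secondary subtlety is that one restart can spawn a cascade of further restarts when a vertex just moved to $V_L$ enters the MIS and displaces its \hd neighbors; this cascade must be absorbed into the $|V_{low}|$ count, so that the charging argument above already accounts for it.
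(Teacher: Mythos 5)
Your amortization of the $S'$ and $|V_{low}|$ terms is sound and genuinely different from the paper's: you match each $V_H\to V_L$ demotion to a distinct earlier $V_L\to V_H$ promotion, and each promotion to a distinct insertion endpoint, so $\sum_{\text{restarts}}|V_{low}|\le 2T$, and together with the cap $d_v<4m^{2/3}$ of \cref{lemma1} this gives $\sum S' = O(m^{2/3}T)$. The paper instead uses a local coin-charging argument exploiting $d'_v=2d_v$ (so $v$ must absorb $\Omega(d_v)$ adjacent updates before it can rejoin $V_H$, each paying $O(\log\Delta)$, with an advance-payment trick for a vertex's final restart); your global counting is arguably cleaner for this piece.

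The gap is in the remaining message bound, which you yourself flag as ``the main obstacle'' and then defer to an unspecified ``careful case analysis'': the claim $|E'|=O(m^{2/3})$ per restart (beyond the $S'$ you charged) is simply false in general, and no case analysis over the triggering scenarios rescues it. The subgraph $G'$ contains the set $L$ of the leader's \ld neighbors slated to enter the MIS; $|L|$ can be $\Theta(m^{2/3})$ with each member of degree $\Theta(m^{2/3})$, so the edges of $E'$ incident to $L$ alone can be $\Theta(m^{4/3})$ in a single restart. The paper's proof handles this by a decomposition plus cross-charge that your proposal is missing: write $|E'|=|E'(V'\setminus L)|+\sum_{a\in L}d_a$, charge $\sum_{a\in L}d_a$ to the amortized cost of those vertices entering the MIS (\cref{lem:insertion-low}, which in turn rests on \cref{lem:const} and the empty-start, all-in-MIS potential), and only then bound $|E'(V'\setminus L)|\le O(m^{2/3})+S'$ using \cref{thm:restart} (at most $2m^{1/3}$ \hd vertices survive the cleaning, hence $<4m^{2/3}$ edges among them) together with the leader's degree bound from \cref{lem:d'v}. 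Without that cross-charge your bound $\sum|E'|\log\Delta=O(m^{2/3}T\log\Delta)$ does not follow. A secondary flaw: absorbing cascaded restarts ``into the $|V_{low}|$ count'' cannot work, since a chained restart with empty $V_{low}$ still costs $\Theta(\log\Delta)$ rounds and up to $\Theta(m^{2/3}\log\Delta)$ messages; the paper bounds the length of such chains per update via \cref{lem:const} in the lemma that follows, so deferring it is fine, but your proposed mechanism for it is not.
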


\begin{proof}
    As defined above, $V_{low}$ is the set of all the vertices from $V'$
    that moved from $V_H$ to $V_L$ during \cref{alg:clean vertices}
    and $S'$ is the sum of their degrees in $G$ and $L$
    is the set of \ld vertices that enter the MIS.
    $|E'| = |E(V' \setminus L)| + \sum_{a \in L}\deg(a)$. By construction of $L$, all
    \ld vertices in $L$ are those that \emph{enter the MIS}. Thus, we can charge
    the cost of $\sum_{a \in L} \deg(a)$ to the cost of adding $L$ to the MIS. This
    cost is computed later in~\cref{sec:analysis-max}
    by~\cref{lem:insertion-low}.

    Then, $|E(V' \setminus L)| \le m^{2/3} + \sum_{v\in V_{low}} \deg(v) = m^{2/3} + S'$.
    This inequality holds due to the following argument.
    Recall that $G'$ consists of the vertex sets $\{v\} \cup L \cup B \cup W$
    and all edges in the induced subgraph containing these vertices. If $v$ is
    \ld, then $\deg(v) = O(m^{2/3})$; if $v$ is \hd, then it only communicates
    with other \hd vertices, so its degree in $G'$ is $< m^{1/3} + |\{(v, w) | (v, w) \in E \wedge w \in
    V_{low}\}|$. Since
    $B$ and $W$ only contain \hd vertices (including those from $V_{low}$),
    the number of vertices $|B \cup W| - |V_{low}| < 2m^{1/3}$ according to
    \cref{alg:clean vertices}, and the number of induced edges between vertices
    in $B \cup W$ is $< 4m^{2/3} +  \sum_{v\in V_{low}} \deg(v)\leq 4m^{2/3} + S'$.
    So the total number of messages
    $O(|E(V' \setminus L)| + S') = O(m^{2/3} +S')$. Therefore, we only
    need to account for the extra cost of $O(\sum_{v\in V_{low}} \deg(v))
    = O(S')$ messages and $O(|V_{low}|)$ rounds.
    We can amortize this cost via the following charging
    argument. Each $v\in V_{low}$ changes $deg'(v)$ to be $2\deg(v)$
    when the transition occurs, so it would take at least $O(\deg(v))$
    updates \emph{that are adjacent to $v$} until $v$
    moves back to $V_H$. Since each update is adjacent to at most two vertices,
    in each such update before
    $v$ moves back to $V_H$, we
    can afford to pay $O(1)$ coins for the number of rounds and messages
    incurred during the last restart process $v$ participated in.
    If $v$ never returns to $V_H$,
    we can use the fact that this is its last restart
    and pay for it in advance by paying another extra
    $O(1)$ coins
    on each edge insertion adjacent to $v$ before the
    \emph{previous} restart.
    To conclude, we need to
    pay $O(1)$ amortized rounds and $O(m^{2/3})$ amortized messages on the restart process.
\end{proof}

We now analyze how we can pay for a chain of restarts.
\iflong
\begin{lemma}\label{lem:const}
During each update, the number of \ld vertices that leave the MIS is constant.
\end{lemma}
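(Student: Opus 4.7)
My plan is to prove the bound by exhaustively enumerating all the places in the algorithm where a vertex is removed from the MIS and showing that exactly one specific scenario---arising only from the initial edge update itself---can remove a vertex that is \ld at the moment of its removal, and that the entire subsequent cascade of restarts removes only \hd vertices. Throughout, I will interpret a ``\ld vertex leaving the MIS'' as a vertex whose current designation is \ld at the instant it exits $M$.

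First I would do a case analysis on the initial update, using the algorithm descriptions of \cref{sec:low-degree}, \cref{sec:high-degree}, and \cref{sec:updates-high-deg}. The only case in which the removed vertex is \ld at the time of removal is the insertion of an edge $(u,v)$ where both endpoints are \ld, both are in the MIS, and neither endpoint transitions to \hd as a result of the insertion; here exactly one of $u,v$ (WLOG $u$) is removed. In every other case, the vertex that leaves the MIS is \hd at the instant it leaves: if $u$ (or $v$) transitions to \hd as part of the update and then removes itself via \cref{alg:low-neighbor}, this happens after the re-designation; and the ``two \hd in MIS'' cases of \cref{sec:updates-high-deg} remove \hd vertices by definition. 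Deletions never remove vertices from the MIS.

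Next I would analyze the cascade of follow-up work triggered by the initial update, which runs through \cref{alg:low-neighbor}, \cref{alg:find-subgraph} or \cref{alg:high}, and the restart procedure \cref{alg:clean vertices} (possibly with further restarts chained by the note preceding the lemma). The key observations are: (i) \cref{alg:low-neighbor} only lets \ld neighbors whose counter has just become $0$ \emph{enter} the MIS; no vertex is removed from the MIS by this subroutine. (ii) In \cref{alg:find-subgraph} and \cref{alg:high}, the vertices that leave the MIS are exactly those in the set $U'$, which by construction consists only of \hd vertices, together with the \hd vertices displaced by the input-respecting MIS run inside the induced subgraph on $V_H'$. (iii) In \cref{alg:clean vertices}, a marked vertex $w$ either stays in the MIS (if it was already in) or enters it (if $c_w=0$); it is never removed.

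The one subtlety that requires justification---and the likely main obstacle---is showing that the transitions from \hd to \ld inside the restart do not indirectly force some existing \ld MIS vertex out of the MIS. This reduces to showing that whenever $w$ transitions from \hd to \ld, the new \ld-neighborhood of $w$ remains conflict-free with the current MIS. If $w$ was in the MIS as \hd, by MIS independence it has no \ld MIS neighbor, so after transitioning it is a \ld MIS vertex whose \ld neighbors are all outside the MIS; meanwhile \cref{inv:low} guarantees that every such \ld neighbor has $c \ge 1$, so the only effect is that each of those counters increases by $1$, which cannot push any \ld vertex out. If $w$ was not in the MIS, it enters only when $c_w=0$, so again no adjacent \ld MIS vertex exists to conflict with. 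Combining the case analysis of the initial update (at most one \ld removal) with the cascade analysis (no additional \ld removal) yields the claimed $O(1)$ bound.
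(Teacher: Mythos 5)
Your proof is correct and follows essentially the same route as the paper's: an exhaustive enumeration of the ways a vertex can be evicted from the MIS, concluding a constant per-update bound. The one divergence is interpretive. The paper counts a vertex as a \ld vertex leaving the MIS if it was \ld before the update, \emph{including} the case where the update pushes it over its threshold so that it moves to $V_H$ and then removes itself via \cref{alg:low-neighbor}; this gives the paper's bound of at most $2$ per update. You classify those removals as \hd removals (the vertex is \hd at the instant it exits) and arrive at the tighter bound of $1$. Either reading yields the claimed constant, and your case analysis explicitly identifies the $V_L \to V_H$ transition removals anyway, so nothing is missing---though it is worth also stating the bound under the paper's broader reading, since that is the count the downstream charging arguments (e.g., \cref{lem:insertion-low}, and the $\mavg$ analogue \cref{lem:avg-const}, whose own proof lists the threshold-crossing case) actually use. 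Your explicit verification that the cascade never evicts a currently-\ld MIS vertex---\cref{alg:low-neighbor} only adds, $U'$ in \cref{alg:find-subgraph}/\cref{alg:high} contains only \hd vertices and the black-box call is input-respecting, and the $V_H \to V_L$ transitions in \cref{alg:clean vertices} only add vertices to the MIS, with \cref{inv:low} and independence ruling out induced conflicts---is left implicit in the paper and is a welcome addition. One minor overstatement: deletions can trigger MIS removals (a \hd neighbor of a newly entering vertex must leave), just never of a \ld vertex, so ``deletions never remove vertices from the MIS'' is too strong as phrased, but this does not affect your count.
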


\begin{proof}
    A vertex $v \in V_L$ can leave the MIS in two cases:
    \begin{itemize}
        \item $v$ is in the MIS, and after the update $\deg'(v) < \deg(v)$,
            $v$ moves to $V_H$.
        \item both $u \in V_L$ and $v$ are in the MIS before the update,
            and the update is the insertion of $(u,v)$.
    \end{itemize}
    In both cases at most $2$ vertices move to $V_H$ and might leave the MIS. If the vertices stay in $V_L$, then at most $1$ leave the MIS.
    Since those are the only relevant cases, we get that the number of \ld vertices that leave the MIS is constant.
\end{proof}
\fi

\begin{restatable}{lemma}{mmaxrestartcost}
All restarts that happen during an update costs $O(1)$
amortized rounds and $O(m^{2/3})$ amortized messages.
\end{restatable}

\iflong
\begin{proof}
    After the first restart, only a vertex the moved from $V_H$ to $V_L$ and had only
    \hd neighbors in the MIS can cause the occurrence of another restart.
    By~\cref{lem:const}, the number of \ld vertices that leave the MIS in an
    update is constant.

    Since the algorithm starts from an empty graph
    and all vertices are initially in the MIS, when the vertex leaves the MIS
    we can pay another $O(1)$ rounds and $O(m^{2/3})$ messages for the restart procedure it might cause.
    Note that the rest of the restart's cost (which is $O(S')$ messages and
    $O(V_{low})$ rounds) is payed as we described above in~\cref{lem:const}
    for one restart.
    This works since in each such restart in the chain,
    only \ld vertices that
    \emph{did not participate in a previous restart} for this
    update participate. Formally, if given a chain of restarts
        $r_1,r_2,..,r_k$ and the corresponding sets of vertices that move from $V_H$
    to $V_L$, $V^1_{low}, V^2_{low}, ...,
    V^k_{low}$,
        then for any pair of $i, j$ where
    $1 \le i < j \le k, V^i_{low}\cap V^j_{low} = \emptyset$.
    Therefore, the payment for each such restart is given by different vertices.

    To conclude, we get that the cost of a chain of restarts is
    $O(1)$ amortized rounds and $O(m^{2/3})$ amortized messages.
\end{proof}
\fi

\subsubsection{Analysis of the Main Algorithm under
$m_{max}$}\label{sec:analysis-max}

\iflong
In this section, we provide the analysis of our main algorithm
which uses the analysis of our restart procedures detailed above.
\else
    We provide the full analysis of our main algorithm under $m_{max}$ in~\cref{app:max-analysis}. The analysis of our main
    algorithm follows straightforwardly from~\cref{inv:low}
    and~\cref{inv:restart} and the analysis of our restart procedures above. We
    obtain the following theorem with respect to $m_{max}$.
\fi
\iflong
\begin{lemma}\label{lem:deletion-low}
    Given an edge deletion between two low-degree vertices $(u, v)$
    where w.l.o.g. $u$ is in the MIS and $c_v = 0$ after the deletion,
    $v$ adds itself to the MIS
    after $O(1)$ rounds of communication
    and after sending $O(m_{max}^{2/3})$ messages.
\end{lemma}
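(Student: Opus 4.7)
The plan is to decompose the lemma into its round-complexity and message-complexity claims and handle each by directly tracing the actions $v$ is prescribed to execute by the low-degree subprotocol from \cref{sec:low-degree} in response to an edge deletion incident to an MIS vertex. The only nontrivial ingredient will be to bound $v$'s degree at the moment it announces its MIS membership; for this I will invoke the invariant on $d'_v$ established by \cref{lem:d'v}. I will also be careful to scope out the potential cascade on $v$'s \hd neighbors, whose cost is analyzed separately via the restart machinery in \cref{subsec:restart-analysis}.

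For the round complexity, I would observe that the deletion synchronously wakes up the two endpoints, and that within a constant number of rounds the following elementary local actions are completed: in one round, $u$ (which is in the MIS) informs $v$ that its MIS contribution is being removed (equivalently, $v$ decrements the counter $c_v$ using the information it already stores about which \ld neighbors are in the MIS); in the next round, upon detecting $c_v=0$, $v$ unicasts its new MIS-membership bit to each of its neighbors and instructs them to increment their counters. No further coordination is needed for $v$ itself to enter the MIS, so the total number of synchronous rounds is $O(1)$.

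For the message complexity, I would bound $d_v$ at the moment $v$ acts. Since $v$ is a \ld vertex by hypothesis, we have $d_v \le d'_v$, and by \cref{inv:restart}, whose validity is proved in \cref{lem:d'v}, every vertex satisfies $d'_v < 8\, m_{max}^{2/3}$ throughout the update sequence. Consequently the notifications $v$ sends to its neighbors total at most $8\, m_{max}^{2/3} = O(m_{max}^{2/3})$ unicast messages, each of $O(1)$ bits, which matches the claimed bound.

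The main subtlety I anticipate is not in the arithmetic but in correctly delimiting what the lemma charges: once $v$ enters the MIS, any \hd neighbor of $v$ that was in the MIS must leave, and this may further trigger a cascade via \cref{alg:low-deg-enters} (or \cref{alg:high}) with $v$ as leader, together with a restart as in \cref{alg:clean vertices}. The lemma is only responsible for the cost incurred through the moment $v$ joins the MIS; the subsequent \hd reconfiguration and restart costs are accounted for separately by \cref{lem:m-max-runtime} and the chain-of-restarts lemma, so they should not be double-counted here.
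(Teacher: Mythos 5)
Your proof is correct and follows essentially the same route as the paper: bound $d_v \le d'_v < 8m_{max}^{2/3}$ via \cref{inv:restart} (proved in \cref{lem:d'v}), so $v$'s notification to its neighbors costs $O(m_{max}^{2/3})$ messages and $O(1)$ rounds. Your remark that the subsequent \hd cascade and restart costs are charged to the separate lemmas rather than to this one also matches how the paper accounts for them.
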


\begin{proof}
    By~\cref{inv:restart}, $deg'(v) < 4m_{max}^{2/3}$ for vertex $v$.
    Since $v$ is low-degree, $deg(v) \leq deg'(v)$. Thus, if $c_v = 0$,
    $v$ adds itself to the MIS and informs all neighbors.
    Since $v$ has at most $deg(v)$ neighbors, this requires
    at most $O(m_{max}^{2/3})$ messages which are
    $O(1)$-bit each and $O(1)$ rounds of
    communication.
\end{proof}

\begin{lemma}\label{lem:insertion-low}
    Given an edge insertion between low-degree vertices $(u, v)$
    where both are in the MIS,~\cref{alg:low-neighbor} requires $O(1)$
    amortized rounds and $O(m_{max}^{2/3})$ amortized messages using
    $O(\log n)$-bit messages.
\end{lemma}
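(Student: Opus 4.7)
The plan is to decompose the cost of a single invocation of \cref{alg:low-neighbor} and then bound the total ``recourse'' of low-degree MIS entries by an amortization argument over the update sequence.

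First I would tally the work of one invocation. The leader $u$ sends an $O(1)$-bit decrement message to each of its at most $d_u$ neighbors, which by \cref{inv:restart} is $O(m_{max}^{2/3})$, and then receives at most $d_u$ one-bit responses that identify the set $W \subseteq N_{low}(u)$; this stage takes $O(m_{max}^{2/3})$ messages in $O(1)$ rounds. The dispatch of turn numbers to the vertices of $W$ costs another $O(|W|) = O(m_{max}^{2/3})$ messages in $O(1)$ rounds. The genuinely heavy portion is the sequential entry step: each $w \in W$ that is added to the MIS broadcasts its new status to its at most $d_w \le 8 m_{max}^{2/3}$ neighbors (again by \cref{inv:restart}), costing $O(m_{max}^{2/3})$ messages and one extra round per such $w$, plus an additional $O(|W|^2) = O(|W| \cdot m_{max}^{2/3})$ messages and $O(|W|)$ rounds to handle pre-emption notifications and turn renumbering. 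Hence one invocation costs at most $O(m_{max}^{2/3} \cdot |W|)$ messages and $O(|W|)$ rounds in the worst case, on top of an additive $O(m_{max}^{2/3})$ overhead for the initial broadcast.

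Next I would show that $\sum |W|$ summed over all updates is $O(T)$, where $T$ is the length of the update sequence. Every $w \in W$ is a low-degree vertex newly entering the MIS, so it is precisely a recourse event for the low-degree part of the solution. The argument mirrors the recourse analysis implicit in Assadi et al.\ \cite{AOSS18}: each re-entry of a low-degree vertex into the MIS is charged to the event that previously evicted it, and the at-most-one first entry of each vertex is charged to the first edge insertion incident to it. Because \cref{lem:const} bounds by a constant the number of low-degree vertices that leave the MIS during any single update, the total number of eviction events over the sequence is $O(T)$, so $\sum |W| = O(T)$ and amortized $|W| = O(1)$.

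Combining the two ingredients, the amortized message bound becomes $O(m_{max}^{2/3}) + O(1) \cdot O(m_{max}^{2/3}) = O(m_{max}^{2/3})$, and the amortized round bound becomes $O(1) + O(1) = O(1)$, as claimed. I expect the main obstacle to be making the charging scheme for $\sum |W| = O(T)$ rigorous: because several vertices of $W$ may enter the MIS in a single update while only a constant number of vertices ever leave per update, the scheme must ensure that each eviction is charged only once and that every new entry identifies a distinct prior eviction (or initial-state event) to pay for it. My plan is to maintain, per low-degree vertex $w$, one credit that is deposited whenever $w$ leaves the MIS (or when the first edge incident to $w$ is inserted) and is withdrawn the next time $w$ is re-added; by \cref{lem:const} at most $O(1)$ credits are deposited per update, which yields the claimed amortized bounds.
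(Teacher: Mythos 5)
Your overall structure (per-entry cost bounded by \cref{inv:restart}, entries charged to prior evictions bounded by \cref{lem:const}, empty initial graph so every entry is preceded by an eviction) is the same as the paper's. But the middle step of your amortization contains a genuine error: you claim that every $w \in W$ ``is a low-degree vertex newly entering the MIS'' and hence that $\sum |W| = O(T)$. Neither holds. $W$ is the set of \emph{candidates} whose counter drops to zero, not the set of vertices that actually enter; your own cost tally acknowledges this, since preempted vertices never enter. Moreover, a single low-degree vertex leaving the MIS can place up to $d_u = \Theta(m_{max}^{2/3})$ candidates into $W$ in one update (all neighbors whose only low-degree MIS neighbor was $u$), and if those candidates form a clique only one of them enters while the rest are preempted. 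This situation can be re-created in essentially every update (evict the one vertex that entered, by \cref{lem:const} still only $O(1)$ evictions per update), so $\sum |W|$ can be $\Theta(T \cdot m_{max}^{2/3})$ rather than $O(T)$. Your eviction-credit scheme deposits one credit per eviction and withdraws it only at an actual re-entry, so it cannot pay for all of $W$; plugged into your per-invocation bound of $O(m_{max}^{2/3}\cdot|W|)$ messages and $O(|W|)$ rounds, the argument as written only yields amortized $O(m_{max}^{4/3})$ messages.

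The repair is to separate candidate costs from entry costs, which is what the paper's proof implicitly does: the $O(m_{max}^{2/3})$ per-vertex neighbor-notification cost is incurred only by vertices that \emph{actually enter}, and it is the number of actual entries (not candidates) that is amortized $O(1)$ via \cref{lem:const} and the empty-start argument. The candidate-proportional work in \cref{alg:low-neighbor} is cheap: each $w\in W$ exchanges $O(1)$ messages with the leader (decrement response, turn number, possible preemption report), for $O(|W|) \le O(d_u) = O(m_{max}^{2/3})$ messages per invocation by \cref{inv:restart}, and each renumbering batch of at most $|W|$ messages is triggered by an actual entry, so it can be charged to that entry. Similarly, the number of executed turns (hence rounds) is $O(1)$ plus the number of actual entries, not $O(|W|)$. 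With this correction your decomposition gives exactly the paper's bounds; note also that the paper explicitly defers the cost of the high-degree follow-up (\cref{alg:find-subgraph} or \cref{alg:high}) triggered by these entries to \cref{lem:high-deg-computation} and \cref{cor:high-2-runtime}, a caveat your write-up should include since \cref{alg:low-neighbor} invokes those procedures.
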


\begin{proof}
    The correctness of the procedure follows from the observation
    that only vertices which do not have low-degree
    neighbors in the MIS are added
    to the MIS.
    We prove the amortized message complexity using a similar charging
    argument to that given in~\cite{AOSS18}. By ~\cref{lem:const}, the number of \ld vertices that leave the MIS in an update is constant.
    In addition, by ~\cref{inv:restart}, every time a \ld vertex enters the MIS, it needs at most $O(1)$ rounds and $O(m_{max}^{2/3})$ messages to notify all its neighbors. Since the algorithm starts with an empty graph and all the vertices are in the MIS, each time a vertex leaves the MIS it pays another $O(1)$ rounds and $O(m_{max}^{2/3})$ messages to its next insertion to the MIS.
    Therefore, the total cost of~\cref{alg:low-neighbor} is $O(1)$
    amortized rounds and $O(m_{max}^{2/3})$ amortized messages.
    Note that the insertion of \ld vertices to the MIS could cause
    \cref{alg:find-subgraph} or~\cref{alg:high} to run.
    Their costs are analyzed below in~\cref{lem:high-deg-computation}
    and~\cref{cor:high-2-runtime}.
                                                                            \end{proof}

We move onto the runtimes of edge updates between a \hd vertex and a \ld vertex.

\begin{lemma}\label{lem:high-deg-computation}
    \cref{alg:high} requires $O(\log^2 n)$ amortized rounds and
    $O(\mmax^{2/3}\log^2 n)$ amortized messages.
\end{lemma}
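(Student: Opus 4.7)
The plan is to decompose the total cost of Algorithm~\ref{alg:high} into three pieces: (i)~the restart subroutine invoked on $\{v\}\cup L\cup B\cup W$, (ii)~the call to the modified CPS procedure of Lemma~\ref{lem:cps} on the subgraph induced by $\{v\}\cup U'\cup U\cup L$, and (iii)~the bookkeeping messages sent by $v$ and by vertices entering or leaving the MIS to notify their (high-degree) neighbors. Step~(i) is already accounted for: by Lemma~\ref{lem:m-max-runtime} it contributes $O(\log\Delta)=\tO(1)$ amortized rounds and $O(m^{2/3}\log\Delta)=\tO(m^{2/3})$ amortized messages, which is absorbed in the desired bound.

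For step~(ii), I would first verify the two inputs needed by Lemma~\ref{lem:cps}: the diameter bound $D$ and the edge count $|E|$ of the subgraph on which CPS is invoked. The leader $v$ lies within one hop of every vertex in $U'\cup L$ and within two hops of every vertex in $U$ (since $U$ consists of high-degree neighbors of vertices in $U'\cup\{v\}$), so $D=O(1)$. For the edge count, after Algorithm~\ref{alg:clean vertices} has been executed on $\{v\}\cup L\cup B\cup W$, Lemma~\ref{thm:restart} guarantees that at most $2m^{2/3} / m^{1/3} = O(m^{1/3})$ high-degree vertices remain in $V'$; therefore $|U\cup U'|=O(m^{1/3})$, and the induced edges among high-degree vertices number at most $O(m^{2/3})$. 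Edges incident to $v$ and to vertices of $L$ are counted in the same way as in the proof of Lemma~\ref{lem:m-max-runtime}: the cost of edges from $L$ to their neighbors is charged to the insertion of $L$ into the MIS (see Lemma~\ref{lem:insertion-low}), while $v$ itself is low-degree and contributes $O(m^{2/3})$ edges. Plugging $D=O(1)$ and $|E|=O(m^{2/3})$ into Lemma~\ref{lem:cps} yields $O(\log^2 n)$ rounds and $O(m^{2/3}\log^2 n)$ messages per invocation.

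For step~(iii), each message of the form ``I entered/left the MIS'' sent by a vertex $w$ to its high-degree neighbors costs $O(|N_{high}(w)|)=O(m^{1/3})$, which is dominated by the above. Whenever a \ld vertex is inserted into or removed from the MIS in the course of Algorithm~\ref{alg:high}, the informing cost is amortized against the charging arguments already established in Lemma~\ref{lem:const} and Lemma~\ref{lem:insertion-low}: the number of \ld vertices leaving the MIS per update is $O(1)$, and an empty starting graph lets us pre-pay $O(m^{2/3}\log^2 n)$ coins per such leaving-event to cover both the current restart and any subsequent one in the restart chain.

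The main obstacle, in my estimation, is not the asymptotics of Lemma~\ref{lem:cps} itself but rather the careful accounting of edges in the CPS input subgraph: in particular, arguing that the $O(m^{2/3})$ bound on edges continues to hold even when $|L|$ is large (which can occur when $v$ moves from $V_H$ to $V_L$ and many of its former neighbors join the MIS). Here I would reuse the charging scheme from the proof of Lemma~\ref{lem:m-max-runtime}, namely separating $|E'(V'\setminus L)|\le O(m^{2/3}+S')$ from the edges touching $L$, and deferring the cost of the latter to Lemma~\ref{lem:insertion-low}. Once this bookkeeping is in place, the three pieces combine to give the claimed $O(\log^2 n)$ amortized rounds and $O(\mmax^{2/3}\log^2 n)$ amortized messages.
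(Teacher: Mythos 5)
Your proposal is correct and follows essentially the same route as the paper's proof: charge the restart to Lemma~\ref{lem:m-max-runtime}, charge the cost of inserting $L$ to Lemma~\ref{lem:insertion-low}, use Lemma~\ref{thm:restart} to bound the surviving high-degree vertices by $O(m^{1/3})$ (hence $O(m^{2/3})$ induced edges), and invoke Lemma~\ref{lem:cps} with constant distance to the leader to obtain $O(\log^2 n)$ rounds and $O(\mmax^{2/3}\log^2 n)$ messages. The only minor discrepancy is your hop-count estimate (the paper allows distance up to $4$ from the leader, since $U'$ may itself sit two hops from $v$), but this does not affect the $O(1)$-diameter conclusion.
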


\begin{proof}
    The first set of steps of the procedure builds the graph $G' = (V', E')$
    which undergoes the restart procedure given by~\cref{alg:clean vertices}.
    We can charge building this graph to the cost of the restart procedure.
    Then, by~\cref{lem:m-max-runtime}, the cost of performing the restart is
    $O(1)$ amortized rounds and $O(\mmax^{2/3})$
    amortized messages.

    The next set of steps adds \ld vertices in $L$ to the MIS.
    By~\cref{lem:insertion-low}, these steps require $O(1)$ amortized rounds
    and $O(m^{2/3})$ amortized messages.
        Then, vertices in $U'$ inform their \hd neighbors that they are leaving the
    MIS.
    By~\cref{thm:restart}, the number of vertices that
    remain in $G'$ after the restart is at most $O(m_{max}^{1/3})$.
    Since $U'$ consists of \hd that remained \hd after the restart, the cost of
    $U'$ informing all its \hd neighbors that it left the MIS is $O(1)$ rounds
    and $O(\mmax^{2/3})$ messages.
            
    The remaining part of~\cref{alg:high} finds an MIS for the set of vertices
    in $U$.
    First,
                finding $V'_H$ requires $O(1)$ rounds and $O(m^{2/3})$ messages. This is
    due to the fact that the number of edges in the induced subgraph is
    $O(m^{2/3})$ and the diameter is $O(1)$.
    Then, the last $4$ steps of the remaining part of the
    algorithm can be performed in
$O(D\log^2 n)$ rounds where $D$ is the distance to the
leader by a modified version of Theorem 1.5 of~\cite{CPS20} (\cref{lem:cps}).
Since the distance to the leader in this case is $\leq 4$,
the number of rounds necessary
by our algorithm is $O(\log^2 n)$. Finally, the number of messages sent by
the vertices is upper bounded by high-degree vertices which enter the MIS and must inform
its high-degree neighbors. Since each high-degree vertex attempts
to enter the MIS at most $O(\log^2 n)$ times and enters the MIS at most once,
the number of messages sent by the high-degree vertices is bound by $O(m^{2/3}\log^2 n)$, equal
to the number of edges between all high-degree vertices times $O(\log^2 n)$.
Because both~\cite{CPS20} and~\cite{GGR20} are implemented in
the \textsc{Congest} model, our algorithm also can be run in small bandwidth.
\end{proof}

\begin{corollary}\label{cor:high-2-runtime}
\cref{alg:find-subgraph} requires $O(\log^5 n)$ amortized rounds and
$O(\mmax^{2/3}\log^5 n)$ amortized messages.
\end{corollary}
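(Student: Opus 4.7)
The plan is to mirror the proof of Lemma~\ref{lem:high-deg-computation} almost verbatim, with the only genuine difference being the final MIS-computation phase, where Theorem~\ref{thm:mis-subgraph} replaces Lemma~\ref{lem:cps}. I would decompose Algorithm~\ref{alg:find-subgraph} into four phases: (i) assembly of the graph $G' = (V', E')$; (ii) admission of the \ld set $L$ into the MIS; (iii) eviction of $U'$ from the MIS together with notification of its \hd neighbors; and (iv) the static MIS computation on the subgraph induced by $V_H'$.

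Phases (i)--(iii) are identical to the corresponding phases of Algorithm~\ref{alg:high}, so their costs transfer immediately. Phase (i) is charged to the restart procedure, contributing $O(\log\Delta)$ amortized rounds and $O(\mmax^{2/3}\log\Delta)$ amortized messages by Lemma~\ref{lem:m-max-runtime}. Phase (ii) costs $O(1)$ amortized rounds and $O(\mmax^{2/3})$ amortized messages by Lemma~\ref{lem:insertion-low}. For phase (iii), Theorem~\ref{thm:restart} leaves at most $2\mmax^{1/3}$ vertices in $V'$ after the restart, and since every vertex in $U'$ communicates only with its \hd neighbors in the induced subgraph, all notifications cost $O(1)$ rounds and $O(\mmax^{2/3})$ messages in total.

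The crucial phase is (iv). After the restart there are at most $O(\mmax^{1/3})$ vertices in $V'$, hence at most $O(\mmax^{2/3})$ edges in the induced subgraph $G[V_H']$. Invoking Theorem~\ref{thm:mis-subgraph} as a black box on $G[V_H']$ completes in $O(\log^5 n)$ rounds in the \congest model; since every round sends at most one $O(\log n)$-bit message per edge in each direction, the total message cost of phase (iv) is $O(|E(V_H')| \log^5 n) = O(\mmax^{2/3}\log^5 n)$. Summing over the four phases, the dominating terms come from phase (iv), yielding the required $O(\log^5 n)$ amortized rounds and $O(\mmax^{2/3}\log^5 n)$ amortized messages.

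The main subtlety to get right is that Theorem~\ref{thm:mis-subgraph} must be applicable here: vertices in $V_H'$ need to know which of their neighbors also belong to $V_H'$ before the subroutine begins. This is precisely why the final loop over $u \in U$ in Algorithm~\ref{alg:find-subgraph} explicitly has each such $u$ inform its \hd neighbors of its membership in $V_H'$; after this $O(1)$-round exchange the induced subgraph $G[V_H']$ is distributedly known in the sense required by~\cite{GGR20}, and the static \congest MIS algorithm can be applied without any modification. No further obstacle arises, and the accounting above completes the proof.
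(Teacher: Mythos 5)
Your proposal is correct and follows essentially the same route as the paper, which simply observes that \cref{alg:find-subgraph} differs from \cref{alg:high} only in substituting the black box of \cref{thm:mis-subgraph} for \cref{lem:cps} and otherwise reuses the accounting of \cref{lem:high-deg-computation}. Your phase-by-phase costs (restart, admission of $L$, eviction of $U'$, and the $O(\log^5 n)$-round static MIS on the $O(\mmax^{2/3})$-edge induced subgraph) match the intended argument.
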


The proof of the above corollary is almost identical to the proof
of~\cref{lem:high-deg-computation}.
Since~\cref{alg:find-subgraph} is a subroutine of~\cref{alg:high} (except for
the usage of~\cite{GGR20}), the proof of
the above corollary immediately follows.

\begin{lemma}\label{lem:edge-insert-high-low}
    Given an edge insertion, $(u, v)$,
    between a \ld vertex $u$ and a \hd vertex $v$,
            restoring an MIS in the graph requires
    $O(\log^2 n)$ amortized rounds and $O(m_{max}^{2/3}\log^2 n)$ amortized
    messages.
\end{lemma}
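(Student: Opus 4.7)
The plan is to enumerate the relevant cases of the edge insertion $(u,v)$ (with $u \in V_L$ and $v \in V_H$) exactly as they are dispatched in Section~\ref{sec:low-degree}, and to invoke the previously established amortized cost bounds for each branch. The initial step in every case is an $O(1)$-round exchange of $O(\log n)$-bit messages between $u$ and $v$ so that each learns the other's degree designation and MIS membership, contributing $O(1)$ rounds and $O(1)$ messages.

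First I would handle the subcase in which the insertion pushes $u$ over its threshold, i.e.\ $d_u > d'_u$. Here $u$ re-designates itself as \hd and executes~\cref{alg:low-neighbor} to allow its \ld neighbors to repair the MIS; by~\cref{lem:insertion-low} this contributes $O(1)$ amortized rounds and $O(m_{max}^{2/3})$ amortized messages. The update then continues as an insertion between two \hd vertices, which at worst invokes~\cref{alg:high}, for which~\cref{lem:high-deg-computation} supplies the claimed $O(\log^2 n)$ amortized rounds and $O(m_{max}^{2/3}\log^2 n)$ amortized messages.

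Next I would handle the subcase in which $u$ remains \ld after the insertion. If at most one of $u, v$ is in the MIS, the only work is to adjust the counters $c_u$ or $c_v$, which is $O(1)$ rounds and $O(1)$ messages. If both $u$ and $v$ are in the MIS then, because \ld vertices have priority, $v$ must leave the MIS and serve as the leader of~\cref{alg:high} with $U' = \{v\}$ and $U$ the set of \hd neighbors of $v$ that now have no MIS neighbor (matching the first trigger situation enumerated in~\cref{alg:low-deg-enters}); by~\cref{lem:high-deg-computation} this again costs $O(\log^2 n)$ amortized rounds and $O(m_{max}^{2/3}\log^2 n)$ amortized messages.

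The main subtlety I expect is checking that the amortization from the several invoked lemmas composes correctly when multiple branches fire during a single update (e.g.\ when $u$ both changes its degree designation and triggers a high-degree repair). Both charging schemes deposit $O(\log \Delta)$ credits per adjacent update on vertices that change MIS status or swap between $V_L$ and $V_H$, and~\cref{lem:const} guarantees that at most a constant number of such status changes occur per update. Consequently, taking the maximum over the cases and summing the amortized contributions still yields $O(\log^2 n)$ amortized rounds and $O(m_{max}^{2/3}\log^2 n)$ amortized messages, establishing the lemma.
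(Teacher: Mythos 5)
Your proposal is correct and follows essentially the same route as the paper: a case analysis on MIS membership, with the only nontrivial case (both $u$ and $v$ in the MIS) dispatched to \cref{alg:high} and charged via \cref{lem:high-deg-computation}; the paper simply defers the subcase where $u$ crosses its degree threshold to the two-high-degree-vertices analysis (\cref{lem:hd-insert}) rather than treating it inline as you do. One small detail: in the both-in-MIS case the paper's trigger situation makes the \ld vertex $u$ (which stays in the MIS) the leader of \cref{alg:high}, not $v$, though this does not affect the stated cost bounds.
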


\begin{proof}
    Suppose, first, that $u$ is not in the MIS, then, regardless of whether $v$
    is in the MIS, nothing happens. Then, suppose $u$ is in the MIS. If $v$ is
    not in the MIS, nothing happens. However, if $v$ is in the MIS, then we need
    to run~\cref{alg:high} with $u$ as the leader. Computing $U$ requires $O(1)$
    rounds and $O(\mmax^{2/3})$ messages since $v$ just needs to send a
    message to each of its high-degree neighbors that it left the MIS. The
    vertices $w \in N_{high}(v)$
    that do not have neighbors in the MIS compose $U$. Then,
    by~\cref{lem:high-deg-computation},~\cref{alg:high}
    runs in $O(\log^2 n)$ rounds
    and $O(\mmax^{2/3}\log^2 n)$ messages.
\end{proof}

\begin{lemma}\label{lem:edge-del-high-low}
    Given an edge deletion, $(u, v)$, between a \ld vertex $u$ and a \hd vertex
    $v$, restoring an MIS in the graph requires $O(1)$ amortized rounds
    and $O(\mmax^{2/3})$ amortized messages.
\end{lemma}

\begin{proof}
    If $u$ is not in the MIS, nothing happens. If $u$ is in the MIS, then $v$
    needs to check whether it has any neighbors remaining after the deletion
    that are in the MIS. If not, we need to add $v$ to the MIS and inform all
    its \hd neighbors. We first perform~\cref{alg:clean vertices} to restart
    $N_{high}(v) \cup \{v\}$. This requires $O(1)$ amortized rounds and
    $O(\mmax^{2/3})$ amortized messages by~\cref{lem:m-max-runtime}.
    For $v$ to inform all its
    \hd neighbors after the restart requires $O(1)$ rounds and $O(\mmax^{2/3})$
    messages.
\end{proof}

Finally, we give the runtimes of edge insertions and deletions between two \hd
vertices.

\begin{lemma}\label{lem:hd-insert}
    Given an edge insertion, $(u, v)$, between two \hd vertices, restoring an
    MIS in the graph requires $O(\log^2 n)$ amortized rounds and
    $O(\mmax^{2/3}\log^2 n)$ amortized messages.
\end{lemma}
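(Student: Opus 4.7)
The plan is a short case analysis that reduces directly to the work already charged in the proof of \cref{lem:high-deg-computation}. When an edge $(u,v)$ is inserted between two \hd vertices, the algorithm description in \cref{sec:updates-high-deg} gives two cases, based on whether $u$ and $v$ are in the MIS: if at most one of them is in the MIS, the only work is for $u$ and $v$ to exchange an $O(1)$-bit message and update their counters $c_u, c_v$ accordingly, which costs $O(1)$ rounds and $O(1)$ messages, well within the claimed bounds. So the interesting case is when both $u$ and $v$ are in the MIS.

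In that case, one of them (say $u$) leaves the MIS and assumes the role of leader, and we invoke \cref{alg:high} (or equivalently \cref{alg:find-subgraph}) with $u$ as the leader, $U' = \emptyset$, $L = \emptyset$, and $U$ equal to the set of \hd neighbors $w \in N_{\mathrm{high}}(u)$ with no neighbor currently in the MIS (the ones that may need to be promoted). I would note that the setup of the sets $U, U', L$ here is exactly analogous to one of the five scenarios enumerated in \cref{alg:low-deg-enters}, so the preconditions of \cref{alg:high} are met.

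The bulk of the work is then absorbed by \cref{lem:high-deg-computation}: building $G'$ and running the restart procedure of \cref{alg:clean vertices} costs $O(\log\Delta)$ amortized rounds and $O(\mmax^{2/3}\log\Delta)$ amortized messages by \cref{lem:m-max-runtime}; once the restart guarantees (via \cref{thm:restart}) that the surviving high-degree subgraph has at most $2\mmax^{1/3}$ vertices and hence $O(\mmax^{2/3})$ induced edges, the input-respecting MIS subroutine of \cref{lem:cps} finishes in $O(D\log^2 n)=O(\log^2 n)$ rounds (since the constructed BFS tree has constant diameter) and sends $O(\mmax^{2/3}\log^2 n)$ messages; finally, any \hd vertex newly entering the MIS informs its surviving \hd neighbors in $O(1)$ rounds and $O(\mmax^{2/3})$ messages per entry, which is also covered by the same bound since each \hd vertex enters the MIS at most once per invocation.

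The main technical point to be careful about is that no new charging argument is needed: all the amortization has already been performed in \cref{lem:m-max-runtime} and \cref{lem:high-deg-computation}, and the only thing one has to check is that the edge insertion scenario legitimately fits the framework of \cref{alg:high}. In particular, one must verify that having $L=\emptyset$ and $U'=\{v\}$ is a valid instantiation and that \cref{inv:low} and \cref{inv:restart} are preserved after $u$ leaves the MIS and $U$ is processed; both follow immediately because $u$ had no \ld neighbors relying on it (since $v$ was a \hd neighbor) and because the restart step is invoked before the static MIS computation. Summing the two bounds gives $O(\log^2 n)$ amortized rounds and $O(\mmax^{2/3}\log^2 n)$ amortized messages, as claimed.
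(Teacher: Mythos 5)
Your proposal is correct and takes essentially the same route as the paper: a two-case analysis in which the only nontrivial case (both endpoints in the MIS) is handled by letting the vertex that leaves the MIS act as the leader for \cref{alg:high}, with the entire cost absorbed by \cref{lem:high-deg-computation}. One small slip worth fixing: your closing paragraph asserts $U'=\{v\}$, contradicting your earlier (correct) instantiation $U'=\emptyset$ --- since the other endpoint stays in the MIS it must not be placed in $U'$, though this detail does not affect the complexity bound.
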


\begin{proof}
    If at most one of $u$ or $v$ is in the MIS, nothing happens. Otherwise,
    one of the two vertices must leave the MIS. w.l.o.g. suppose $v$ leaves the MIS.
    We must run~\cref{alg:high} with $v$ as the leader. This requires $O(\log^2
    n)$ amortized rounds and $O(\mmax^{2/3}\log^2 n)$ amortized messages
    by~\cref{lem:high-deg-computation}.
\end{proof}

\begin{lemma}\label{lem:hd-delete}
    Given an edge deletion, $(u, v)$, between two \hd vertices, restoring an MIS
    in the graph requires $O(1)$ amortized rounds and $O(\mmax^{2/3})$ amortized
    messages.
\end{lemma}

\begin{proof}
    If none of $u$ and $v$ is in the MIS, nothing happens. If w.l.o.g. $u$ is in
    the MIS, then if $v$ does not have any neighbors in the MIS, it must enter
    the MIS. We run~\cref{alg:clean vertices} on $\{v\} \cup N_{high}(v)$ which
    costs $O(1)$ amortized rounds and $O(\mmax^{2/3})$ amortized messages
    by~\cref{lem:m-max-runtime}. Then, $v$ informs all its \hd neighbors it is
    in the MIS in $O(1)$ round and $O(\mmax^{1/3})$ messages.
\end{proof}

\cref{lem:insertion-low,lem:deletion-low,lem:edge-del-high-low,lem:edge-insert-high-low,lem:hd-delete,lem:hd-insert} immediate give the following theorem.
\fi

\begin{restatable}{theorem}{mmaxfinal}\label{thm:mmax-main}
    There exists a deterministic algorithm in the \congest model that maintains
    an MIS in a graph $G = (V, E)$ under edge insertions/deletions in $O(\log^2
    n)$ amortized rounds and $O(\mmax^{2/3}\log^2 n)$ amortized messages per
    update.
\end{restatable}
\subsection{\texorpdfstring{Extending from $m_{max}$ to the Average Number of Edges $m_{avg}$}{}}\label{sec:m-avg}

Thus far, our message and round complexity have been in terms of the
maximum number of edges present in the graph at any time.
In this section, we extend our results to the case of the
average number of edges in the graph $m_{avg}$, such that $m_{avg}$ is the
average number of edges throughout \emph{all} of the update sequence.
We do this by dividing the update sequence into \emph{phases},
such that if the number of edges in the beginning of phase $i$ is
denoted by $m_i$ then the length of phase $i$ is
$m_i^{2/3}$ updates (if $m_i^{2/3} < 1$ we assume we have one update at this phase).
In each phase, we obtain $\tilde{O}(1)$ amortized rounds
and $\tilde{O}(m_i^{2/3})$ amortized messages.
This result, using H\"{o}lder's inequality, will imply
that the amortized message complexity is $\tilde{O}(m_{avg}^{2/3})$ (see ~\cref{thm:holder} for more details).

To obtain this bound, we assume that
each edge insertion/deletion is tagged
with a \emph{timestamp} indicating the number of edge
updates (including itself) that have occurred since
the beginning. We assume
that when an edge update $(u, v)$ occurs, both $u$ and $v$
receive the timestamp $t_{(u, v)}$ associated with
the update. Using this information, we modify our algorithms
above for handling updates in the $m_{avg}$ case.

Each vertex $v$ stores two additional pieces of information, $S_v$
indicating the number of edges it thinks are in the graph
and $t_v$ indicating the timestamp of the last time $S_v$ was updated.
Initially $S_v$ is set to $0$ when the graph is empty.
If an edge insertion adjacent to $v$ occurs, and after the update $\deg(v)>S_v$, $v$ updates $S_v$ such that $S_v \geq \deg(v)$ is always satisfied; $t_v$ is updated
with the timestamp of the edge insertion that caused $S_v$ to be updated.
$S_v$ is \emph{not} updated
when an edge deletion that is adjacent
to $v$ occurs (if no restart algorithm is called). $S_v$ is also updated with a new value
when~\cref{alg:clean vertices} runs on $v$. When~\cref{alg:clean vertices}
runs on $v$, $v$ stores $S_v = S$ to be its new estimate of the number
of edges in its sub-component and $t_v$ to be the timestamp of the update
that caused~\cref{alg:clean vertices} to be called.

\paragraph{Updated main algorithm (based upon \cref{sec:main} with some minor changes}
Consider an edge update $(u,v)$ (insertion or deletion) such that
at least one vertex from $u$ and $v$ is \ld, w.l.o.g. say $u$ is \ld. In this
case, if $u$ needs to send a message to its neighbors, $u$ first
checks whether $t_{(u,v)} - t_u > S_u/4$ where $t_{(u,v)}$ is
the update timestamp it received. If $t_{(u,v)} - t_u > S_u/4$
then $u$ henceforth considers itself a \hd vertex. Therefore,
$u$ notifies all its neighbors about its movement to $V_H$
and performs \cref{alg:low-neighbor} if necessary. If $u$ was in
the MIS, every neighbor of $u$ that wakes up (say $w$) and needs
to enter the MIS, receives $t_{(u,v)}$ from $u$ and first checks
whether $t_{(u,v)} - t_w > S_w/4$ (notifying all of $u$'s
neighbors the timestamp requires
$O(\log n)$-bit messages assuming the timestamps resets to $0$ after
sufficiently many updates). If so, it follows the same
process as $u$. Note that more than one vertex can move
from $V_L$ to $V_H$ in a single update. However, we will prove
later that we can afford to pay for those movements.
Once a vertex moves to $V_H$, it can return to $V_L$ only by a restart,
which means that even if $w$ moved to $V_H$ but $\deg(w) \le deg'(w)$,
$w$ will not return to $V_L$ until a restart.
The rest of the process occurs as we described in the main algorithm.

\subsubsection{Updated restart procedure}
Our restart procedure works like the restart procedure given in~\cref{sec:restart}
except for the follow change: each vertex that
moves from $V_H$ to $V_L$, say $w$, not only updates $deg'(w)$ but
also saves $S_w \leftarrow S$ and updates $t_w \leftarrow t_j$, where
$t_j$ is the current update's timestamp.
Since we changed the definition of $m$, we also need to
update \cref{thm:restart} and \cref{lemma1} to be the following.

\begin{invariant}\label{inv:restart-low-deg}
    Any vertex adjacent to an edge update that happens during phase $i$ and is low-degree
    after our updated restart has degree $O(m_{i}^{2/3})$.
\end{invariant}

\begin{invariant}\label{inv:restart-high-deg}
    Any vertex adjacent to an edge update that happens during
    phase $i$ and is high-degree after our updated
    restart has at most $O(m_{i}^{1/3})$ high-degree vertices in its
    neighborhood when participating in~\cref{alg:high}.
\end{invariant}
\iflong
\begin{lemma}\label{lem:remain-avg}
After the updated \cref{alg:clean vertices} ends (at phase $i$), the number of vertices that
remain in $G'$ is at most $2m_i^{1/3}$.
\end{lemma}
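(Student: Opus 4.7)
The plan is to essentially replay the proof of the original \cref{thm:restart}, but with $m_i$ (the number of edges at the start of phase $i$) taking the role that $m_{max}$ previously played. The only new ingredient needed is a bound on the current number of edges in the graph at any moment during phase $i$, which in turn yields a bound on the quantity $S = \sum_{v \in V'} d_v$ that drives the termination condition of \cref{alg:clean vertices}.

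First, I would bound the dynamic number of edges $m$ throughout phase $i$. By the definition of the phases, phase $i$ begins with $m_i$ edges and lasts for $m_i^{2/3}$ updates. Since each update changes the edge count by at most one, the number of edges at any point during phase $i$ satisfies $m \le m_i + m_i^{2/3} \le 2 m_i$ (assuming $m_i \ge 1$; the degenerate case $m_i < 1$ is handled by the convention that the phase has a single update).

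Next, I would repeat the argument from the $m_{max}$ version of the lemma. Since $V' \subseteq V$ and each edge contributes at most $2$ to the sum of degrees,
\[
S \;=\; \sum_{v \in V'} d_v \;\le\; \sum_{v \in V} d_v \;=\; 2m \;\le\; 4 m_i,
\]
and hence $S^{1/3} \le (4 m_i)^{1/3} \le 2 m_i^{1/3}$. The updated restart procedure keeps the same termination condition as before: \cref{alg:clean vertices} runs until $|V'| - Y \le S^{1/3}$. Since $Y \ge 0$ throughout, the number of vertices that remain in $G'$ at termination is at most $S^{1/3} \le 2 m_i^{1/3}$, which is precisely the claim.

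The only potentially delicate point is verifying that the bound $m \le 2 m_i$ is actually applicable at the moment \cref{alg:clean vertices} is invoked, i.e., that the invocation lies inside phase $i$. This is immediate from the phasing convention: by the definition of phase boundaries, any restart triggered during phase $i$ must occur within its $m_i^{2/3}$ updates, and so the bound on the current number of edges holds. Given this, no further modification to the earlier argument is needed, and the lemma follows.
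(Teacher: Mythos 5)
Your proof is correct and follows essentially the same route as the paper's: bound the current edge count by $m \le m_i + m_i^{2/3} \le 2m_i$ within phase $i$, deduce $S \le 2m \le 4m_i$, and conclude from the termination condition $|V'| - Y \le S^{1/3}$ that at most $S^{1/3} \le 2m_i^{1/3}$ vertices remain. No substantive differences from the paper's argument.
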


\begin{proof}
As in \cref{thm:restart}, the number of vertices that remain in $G'$
is at most $S^{1/3}$. Note that $S = \sum_{v \in V'} \deg(v) \le 2m$
where $m$ is the current number of edges. Since we can bound $m$
by $m \le m_i + m_i^{2/3} \le 2m_i$, we obtain $S \le 2m \le 4m_i$
and therefore $S ^{1/3} \le 4^{1/3}m_i^{1/3} \le 2m_i^{1/3}$
\end{proof}

Using~\cref{lem:remain-avg}, we can prove our~\cref{inv:restart-high-deg} below.

\begin{lemma}\label{lem:inv-high-deg}
    \cref{inv:restart-high-deg} holds after all updated restarts.
\end{lemma}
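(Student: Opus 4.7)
The plan is to derive the invariant directly from \cref{lem:remain-avg}, using the fact that \cref{alg:high} is always preceded by an invocation of the updated \cref{alg:clean vertices} on the same induced subgraph $G' = (V',E')$. First I would fix an edge update occurring during phase $i$ and a vertex $v$ adjacent to that update which, after the restart, is still high-degree and is about to participate in \cref{alg:high}. By inspecting \cref{alg:high}, the only high-degree neighbors of $v$ that are relevant in that execution are those lying in $V'$: vertices outside $V'$ are either low-degree, or are high-degree vertices that have not been woken up and therefore do not interact with $v$ in the course of the MIS computation performed by \cref{alg:high} in this update.

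Next I would invoke \cref{lem:remain-avg}, which guarantees that after the updated \cref{alg:clean vertices} terminates, the number of vertices remaining in $G'$ is at most $2m_i^{1/3}$. Consequently, the number of high-degree neighbors of $v$ participating with $v$ in \cref{alg:high} is at most $|V'| - 1 \le 2m_i^{1/3} = O(m_i^{1/3})$, which is exactly the statement of \cref{inv:restart-high-deg}.

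The one place that requires a little care, and which I view as the main obstacle, is justifying that vertices which \emph{moved} from $V_H$ to $V_L$ during the restart are no longer counted as ``high-degree neighbors'' of $v$ in the sense of \cref{inv:restart-high-deg}. For this I would appeal to the explicit step in \cref{alg:clean vertices} where every marked vertex $w$ notifies all its neighbors in $G$ that it has become low-degree before \cref{alg:high} proceeds; thus when \cref{alg:high} executes, $v$'s local view correctly classifies such $w$ as low-degree and does not treat it as part of its high-degree neighborhood. Combined with the observation that the only newly high-degree vertices that could have appeared are the ones adjacent to the current update (which are already counted inside $V'$ by construction of $G'$), this closes the argument and yields the claimed $O(m_i^{1/3})$ bound.
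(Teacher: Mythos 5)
Your argument is correct and follows essentially the same route as the paper: the paper's proof likewise observes that the participants of \cref{alg:high} are exactly the vertices of $G'$ and then applies \cref{lem:remain-avg} to bound their number by $2m_i^{1/3}$. Your extra remarks about vertices demoted to $V_L$ notifying their neighbors simply spell out details the paper's (very terse) proof leaves implicit.
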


\begin{proof}
    By definition of~\cref{alg:clean vertices}, the set of vertices in $G'$ are
    ones that participate in~\cref{alg:high}. Hence, by~\cref{lem:remain-avg}, the
    number of vertices in the neighborhood of each high-degree vertex is
    $O(m_{i}^{1/3})$.
\end{proof}

\begin{lemma}\label{lem:updated-d'v}
    If a vertex $v \in V'$ moves from $V_H$ to $V_L$, then $\deg(v) \leq 3m_i^{2/3}$.
\end{lemma}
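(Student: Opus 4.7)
The plan is to mimic the contradiction argument of \cref{lemma1} almost verbatim, but replace the bound $S \le 2m$ with a bound that is expressed in terms of $m_i$, the number of edges at the start of the current phase. The key point is that the phase $i$ has length at most $m_i^{2/3}$ updates, so at every point during the phase the actual edge count $m$ satisfies $m \le m_i + m_i^{2/3} \le 2 m_i$ (this is exactly the bookkeeping used in \cref{lem:remain-avg}). Consequently, for the subgraph $G'$ on which \cref{alg:clean vertices} is run, one still has $S = \sum_{u \in V'} d_u \le 2m \le 4m_i$.

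First, suppose for contradiction that some $v \in V'$ is marked in iteration $i^*$ of the while loop of \cref{alg:clean vertices} with $d_v > 6 m_i^{2/3}$. Because the marking condition is $d_v \in [2^{i^*}, 2^{i^*+1})$, we obtain $2^{i^*+1} > 6 m_i^{2/3}$, hence $2^{i^*} > 3 m_i^{2/3}$. On the other hand, from $S \le 4 m_i$ we get
\[
S^{2/3} \;\le\; (4m_i)^{2/3} \;=\; 4^{2/3}\, m_i^{2/3} \;<\; 3\, m_i^{2/3} \;<\; 2^{i^*}.
\]

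Next, I would bound the number of vertices in $V'$ whose degree is at least $2^{i^*}$. Since $\sum_{u \in V'} d_u = S$, at most $S/2^{i^*}$ vertices can have $d_u \ge 2^{i^*}$, and by the display above $S/2^{i^*} < S/S^{2/3} = S^{1/3}$. In particular, the residual set $|V'| - Y_{i^*-1}$ at the end of iteration $i^*-1$, which consists precisely of the vertices with degree $\ge 2^{i^*}$, satisfies $|V'|-Y_{i^*-1} < S^{1/3}$. This contradicts the assumption that the while loop proceeded into iteration $i^*$ (its continuation condition is $|V'|-Y > S^{1/3}$), completing the proof.

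The only real obstacle is bookkeeping the constants: one must verify that replacing $m$ by the phase bound $2 m_i$ still leaves enough slack to obtain the clean $6 m_i^{2/3}$ threshold, and that the marking condition $d_v < 2^{i^*+1}$ combined with $S^{2/3} < 2^{i^*}$ gives strictly fewer than $S^{1/3}$ survivors. Once those inequalities are lined up (using $4^{2/3} < 3 < 6/2$), the argument is a direct adaptation of \cref{lemma1}, and no additional structural facts about the algorithm or the phase decomposition are needed.
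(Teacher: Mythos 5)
Your proposal is correct and is essentially the paper's argument: the paper simply cites the proof of \cref{lemma1} to extract the bound $d_v < 2S^{2/3}$ for any vertex marked in \cref{alg:clean vertices} and then plugs in $S \le 2m \le 4m_i$ (the same phase-length bound you use), yielding $d_v < 2(4m_i)^{2/3} \le 6m_i^{2/3}$. Re-running the contradiction of \cref{lemma1} inline with the adjusted constants, as you do, is the same reasoning spelled out in full.
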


\begin{proof}
By~\cref{alg:clean vertices}, we obtain that if a vertex $v \in V'$
moves from $V_H$ to $V_L$ then $\deg(v) < S^{2/3}$ . Since $S \le 2m \le 4m_i$,
we conclude that if $v \in V'$ moves from $V_H$ to $V_L$, then
$\deg(v) < S^{2/3} \le (4m_i)^{2/3} \le 3m_i^{2/3}$.
\end{proof}
\fi

Using~\cref{lem:updated-d'v}, we can prove our~\cref{inv:restart-low-deg} below.

\begin{lemma}\label{lem:inv3-holds}
    \cref{inv:restart-low-deg} holds after all updated restarts.
\end{lemma}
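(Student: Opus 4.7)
The plan is to case-split on how $v$ attained its current low-degree classification at the time of the update in phase $i$. If the updated version of \cref{alg:clean vertices} run during this phase marks $v$ and moves it from $V_H$ to $V_L$, then \cref{lem:updated-d'v} applies directly and gives $d_v \le 6m_i^{2/3} = O(m_i^{2/3})$, which is exactly what the invariant demands.

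Otherwise $v$ was already in $V_L$ before the current update. Since $v$ is adjacent to the update and stayed low-degree, the timestamp-based promotion rule introduced for the $m_{\text{avg}}$ setting either was not triggered at $v$ (because the update only required $v$ to adjust its counter, not to send any other outgoing message) or was triggered but $t_{\text{update}} - t_v < S_v/4$. In the first sub-case, $d_v$ differs by at most one from its value at the most recent earlier update that woke $v$, and I would appeal to the invariant at that previous time together with the phase-length definition to push the bound forward. In the second sub-case I would use the property $S_v \ge d_v$ maintained by the $S_v$-update rule, together with the fact that at most $S_v/4$ edge updates have occurred since $t_v$, to obtain $d_v(t) \le d_v(t_v) + S_v/4 \le \tfrac{5}{4} S_v$; it then suffices to show $S_v = O(m_i^{2/3})$. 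Tracing $S_v$ back to the time $t_v$ at which it was last set, it was either assigned during a restart in some phase $j \le i$ (in which case $S_v = S \le 2 m_{t_v} \le 4 m_j$, since the subgraph $V'$ has total degree at most twice the current edge count) or upgraded directly by an insertion that pushed $d_v$ above the previous $S_v$.

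The main obstacle will be carefully reconciling the phase $j$ of $t_v$ with the current phase $i$: the timestamp constraint $t_{\text{update}} - t_v \le S_v/4$ must be shown to force a refresh of $S_v$ before the edge count can drift enough to make the target bound $S_v = O(m_i^{2/3})$ fail. This is precisely what the modified restart procedure is engineered to guarantee, and I expect it to follow from comparing $S_v/4$ against the sum $\sum_{k=j}^{i-1} m_k^{2/3}$ of intervening phase lengths and using that each phase $k$ admits an edge-count fluctuation of at most $m_k^{2/3}$, which pins $m_j$ to $O(m_i)$ and hence $S_v$ to $O(m_i^{2/3})$.
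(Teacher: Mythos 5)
Your first case ($v$ moved from $V_H$ to $V_L$ during the current restart, apply \cref{lem:updated-d'v}) matches the paper. The gap is in the case where $v$ was already low-degree: your chain $d_v(t)\le d_v(t_v)+S_v/4\le \tfrac54 S_v$ forces you to prove $S_v=O(m_i^{2/3})$, and that target is unattainable. $S_v$ is set to the degree sum $S$ of the restart subgraph, which is $\Theta$ of the number of edges at the time of that restart; if the last restart happened in the current phase, $S_v=\Theta(m_i)$, not $O(m_i^{2/3})$. Your closing inference also does not follow: pinning $m_j=O(m_i)$ only gives $S_v\le 4m_j=O(m_i)$, and combined with $d_v\le\tfrac54 S_v$ this yields $d_v=O(m_i)$, a polynomially weaker bound than the invariant requires. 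The loss comes from bounding $d_v$ additively by the number of intervening updates instead of using the algorithm's own threshold: since $v$ \emph{stays} low-degree, $d_v\le d'_v$, and at the last $V_H\to V_L$ move the restart set $d'_v=2d_v<4S^{2/3}=4S_v^{2/3}$ (proof of \cref{lemma1} together with \cref{alg:clean vertices}). With that two-thirds-power relation in hand, it suffices to show $S_v=O(m_i)$, which is exactly what the timestamp condition gives: $t_{(u,v)}-t_v<S_v/4$ means at most $S_v/4$ updates (hence deletions) occurred since the restart, when the edge count was at least $S_v/2$, so the current edge count is at least $S_v/4$; and during phase $i$ the edge count is at most $m_i+m_i^{2/3}=O(m_i)$. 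Hence $d_v<4S_v^{2/3}=O(m_i^{2/3})$.

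Two smaller points. Your sub-case where the timestamp rule is ``not triggered'' (counter-only adjustment) is handled only by a vague appeal to the invariant ``at the previous time,'' which does not obviously transfer across phases when $m$ has shrunk; the paper sidesteps this by checking the degree/timestamp conditions (and applying the invariant) precisely when $v$ must take an action, and by noting that $v$ must have visited $V_H$ at least once, since otherwise $d'_v=2$, $t_v=0$, $S_v<2$ would force the timestamp condition to fire. Second, you should also rule out that staying in $V_L$ rests only on $d_v\le d'_v$ with a stale $d'_v$: that is exactly why the argument must route through $S_v$ and the $S_v/4$ freshness guarantee rather than through the phase index of the last restart alone.
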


\begin{proof}
Suppose $v$ is a vertex adjacent to an edge update $(u,v)$ at phase $i$, and after the update $v \in V_L$. If $v \in V_L$ before the update, then after the update $v$ checks whether $\deg(v) > deg'(v)$ and whether $t_{(u,v)}- t_v \ge S_v/4$.
Since $v$ stays in $V_L$, it means that $\deg(v) \le deg'(v)$ and $t_{(u,v)}- t_v < S_v/4$. $deg'(v)$, $t_v$, and $S_v$ were updated during the last time $v$ moved from $V_H$ to $V_L$.
By the proof of \cref{lemma1} and according to~\cref{alg:clean vertices}, we obtain that when $v$ moved from $V_H$ to $V_L$ it updated $deg'(v)$ to be $2\deg(v)$ which is at most $2S^{2/3}$ and updated $S_v = S$.
By the assumption, there were at most $S_v/4$ updates since the last restart and the number of edges in the graph during the restart was at least $S_v/2$. Therefore, the number of edges in the graph is at least $S_v/4$.
In addition, since the update occurred at phase $i$, we get that $S_v/4 = O(m_i)$. By the fact that $\deg(v) < deg'(v) < 2S_v^{2/3}$ we get that $\deg(v) = O(m_i^{2/3})$.
If $v \in V_H$ before the update, then by ~\cref{lem:updated-d'v} we get that $\deg(v) = O(m_i^{2/3})$.
\end{proof}

Note that \cref{lem:d'v} is not true anymore because $m$ can decrease
significantly throughout phases but $deg'(v)$ can stay the same for more than one
phase, and in that case $deg'(v)$ could be more than $m_i$. Hence, the timestamp that a vertex $v$ receives during its movement from $V_H$ to $V_L$ is critical.

\begin{restatable}{lemma}{avgconst}\label{lem:avg-const}
    The average number of \ld vertices that enter or leave the MIS during an update is constant.
\end{restatable}
\iflong
\begin{proof}
In order to prove this lemma, we show that only a constant number of \ld vertices leave the MIS during an update.
A \ld vertex $u$ leaves the MIS for three reasons:
\begin{itemize}
    \item an edge insertion $(u,v)$ occurs and $\deg(u) > deg'(u)$ after the insertion.
    \item an edge update $(u,v)$ occurs and $t_{(u,v)}- t_u > S_u/4$
    \item an edge insertion $(u,v)$ occurs such that both $u$ and $v$ are \ld
        vertices and both are in the MIS.
\end{itemize}
In the first and the second cases, at most $2$ vertices might move to $V_H$ and leave the MIS ($u$ and $v$); that could happen if for both at least one of the two conditions is met.
In the third case, if none of the first two conditions is met, only one of $u$ and $v$ would leave the MIS.

Note that it is \emph{not} the case that only vertices that are adjacent to an
edge update can move to $V_H$. However, the other \ld vertices that can move to
$V_H$ are not in the MIS. Only vertices that need to make an action (either
enter or leave the MIS) check their timestamp and their degree. In addition,
those vertices first check their timestamp and only after that enter or leave
the MIS (if it is still necessary). When a vertex $u$ leaves the MIS, all its
\ld neighbors might enter the MIS, but before they enter they first check their
timestamp; therefore, if they leave $V_H$ to become part of $V_H$,
they first leave $V_H$ and enter the MIS as a \hd vertex later if necessary.
So, for these vertices, we ensure that they don't enter the MIS, and then
immediately need to leave the MIS because they move to $V_H$.

Overall, we get that at most $2$ \ld vertices leave the MIS during an update.
Thus, the average number of \ld vertices that enter the MIS during an update is
also constant since a vertex can only enter the MIS if it is not currently in
the MIS.
\end{proof}
\fi

\subsubsection{Updated analysis of the Restart Procedure for $\mavg$}
\begin{theorem}\label{thm:phase_analysis}
During phase $i$ for some $i\ge 0$, the amortized round complexity is
$\Tilde{O}(1)$ and the amortized message complexity is $\Tilde{O}(m_i^{2/3})$
for each update.
\end{theorem}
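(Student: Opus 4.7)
The plan is to replay the case analysis of Section~\ref{sec:analysis-max} in the phased setting, substituting $m_i$ for $\mmax$ and invoking the phase-local invariants \cref{inv:restart-low-deg} and \cref{inv:restart-high-deg} in place of \cref{inv:restart}. The key conceptual shift is that we can no longer use a uniform global bound on every $d'_v$; instead we can only use the degree bound on vertices that are either adjacent to the current update or that participate in a restart during phase $i$. Fortunately, these are exactly the vertices that incur communication, so this weaker guarantee still suffices.

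First I would re-examine each update type (\ld--\ld, \ld--\hd, \hd--\hd, insertion and deletion) and verify that the proofs of \cref{lem:deletion-low}--\cref{lem:hd-delete} go through essentially verbatim with $m_i$ in place of $\mmax$. Concretely, whenever a vertex $v$ that is \ld and incident to the current update needs to broadcast to its neighbors, \cref{lem:inv3-holds} guarantees $d_v = O(m_i^{2/3})$, so $O(m_i^{2/3})$ messages in $O(1)$ rounds suffice. Whenever a \hd vertex $v$ triggers \cref{alg:find-subgraph} or \cref{alg:high}, \cref{lem:inv-high-deg} gives $|N_{\mathrm{high}}(v)| = O(m_i^{1/3})$ after the restart, so the induced high-degree subgraph carries $O(m_i^{2/3})$ edges and the runtime of \cref{lem:high-deg-computation} / \cref{cor:high-2-runtime} becomes $\tilde O(1)$ rounds and $\tilde O(m_i^{2/3})$ messages. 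The restart procedure itself re-uses the analysis of \cref{lem:m-max-runtime}, with the charging argument of $O(d_v)$ updates between consecutive restarts of a vertex $v$ now being financed out of the updates within (or that opened) the current phase.

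The main new obstacle is the cost of \ld vertices that migrate to $V_H$ purely because of the timestamp trigger $t_{(u,v)} - t_w \ge S_w/4$, together with the potential cascade such a migration can cause (neighbors checking their own timestamps, and \hd neighbors forced out of the MIS). I would handle this by a direct charging argument: when $w$ sets $S_w$ during its last restart, $S_w$ was (up to constants) the total degree of $w$'s local component, hence at least a constant fraction of the number of edges at that moment; by the phase definition and the timestamp condition, $\Omega(S_w)$ edge updates have since occurred, and we deposit $\tilde O(1)$ rounds and $\tilde O(m_i^{2/3})$ messages of credit on each of them. \cref{lem:avg-const} caps the number of \ld vertices leaving the MIS per update by $O(1)$ on average, so each update triggers only $O(1)$ such migrations on average, and the consequent \hd work is absorbed by \cref{lem:high-deg-computation} applied with the phase-local parameters. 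Initial restarts that occur at the start of the phase, before any vertex has accumulated enough of an $S_w$ budget, can be paid for out of the $m_i^{2/3}$ updates in the phase itself, since a single such restart costs $\tilde O(m_i^{2/3})$ messages.

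Summing the contributions from each update type, together with the amortized restart cost and the migration cost, gives a total of $\tilde O(1)$ amortized rounds and $\tilde O(m_i^{2/3})$ amortized messages per update in phase $i$, as claimed. The hardest ingredient is the bookkeeping for cascading $V_L \to V_H$ migrations, since a single update may awaken several neighbors who each decide to migrate; this is where \cref{lem:avg-const} and the $S_w/4$ threshold in the timestamp test must be combined carefully to ensure that credits previously deposited on past updates are always available to pay for the current restart chain.
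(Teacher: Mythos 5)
There is a genuine gap, and it sits exactly at the point the paper flags as the main difficulty of the phased analysis. Your proposal asserts that ``a single such restart costs $\tilde{O}(m_i^{2/3})$ messages,'' and that the charging argument of \cref{lem:m-max-runtime} can simply be ``financed out of the updates within (or that opened) the current phase.'' Neither claim is true as stated. A restart during phase $i$ costs $\tilde{O}(|E'| + S')$ messages where $S' = \sum_{v\in V_{low}} d_v$ is the total degree of the vertices demoted from $V_H$ to $V_L$; when the estimate $S$ stems from an earlier, denser part of the update sequence, $S'$ can be far larger than $m_i^{2/3}$, and the number of such restarts within a phase is not bounded a priori. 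Moreover, the $\mmax$-style charging (``$v$ needs $\Omega(d_v)$ adjacent updates before it can return to $V_H$'') is literally false in the timestamped setting, because a vertex can return to $V_H$ via the trigger $t_{(u,v)} - t_v \ge S_v/4$ without any adjacent insertions. The paper closes this hole with a light/heavy dichotomy: for light restarts ($S \le 4m_i^{2/3}$) the whole cost is $\tilde{O}(m_i^{2/3})$ outright (\cref{lem:light-restart}); for heavy restarts, \cref{lem:num-of-restarts} shows that since $S_v/4 > m_i^{2/3} \ge$ the phase length, the timestamp trigger cannot refire within phase $i$, so a demoted vertex can re-enter a restart in the same phase only after doubling its degree, and the residual cost of demoted vertices that never double (total $\tilde{O}(\sum_v d_v^{max}) = \tilde{O}(m_i)$) is absorbed by an extra blanket charge of $\tilde{O}(m_i^{1/3})$ per update over the $m_i^{2/3}$ updates of the phase (\cref{lem:heavy-restart-cost}). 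Your proposal contains none of these ingredients, and without them the restart accounting does not go through.

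Your replacement idea for the $V_L \to V_H$ migrations---charging a timestamp-triggered move of $w$ to the $\Omega(S_w)$ updates that elapsed since $t_w$---is also problematic for the statement being proved: those updates generally lie in \emph{earlier} phases, so depositing credit on them retroactively degrades the per-phase guarantee for those phases (and many vertices may charge the same past update), whereas \cref{thm:phase_analysis} is a per-phase bound. The paper instead keeps all charges inside phase $i$: vertices of degree $> 7m_i^{2/3}$ at the start of the phase move to $V_H$ at most once per phase (via \cref{lem:updated-d'v}) and their total degree $O(m_i)$ is again spread as $O(m_i^{1/3})$ per update, while the remaining vertices have degree $O(m_i^{2/3})$ and their migrations are bounded in number per update by \cref{lem:avg-const}. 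The parts of your write-up that replay the $\mmax$ case analysis with $m_i$ and \cref{inv:restart-low-deg}, \cref{inv:restart-high-deg} are fine and match \cref{sec:analysis-avg}, but the heart of \cref{thm:phase_analysis}---making heavy restarts and intra-phase re-participation affordable---is missing.
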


To prove \cref{thm:phase_analysis}, we divide the restart procedure into two
kinds of restarts:
\begin{itemize}
    \item \textbf{Heavy restart:} a restart that happens during phase $i$ in
        which the estimated number of edges in the graph $S > 4m_i^{2/3}$.
    \item \textbf{Light restart:} a restart in which the estimated number of
        edges in the graph $S\le 4m_i^{2/3}$.
\end{itemize}

\begin{lemma}\label{lem:light-restart}
    A light restart takes $O(1)$ amortized rounds and $O(m_i^{2/3})$
    messages.
\end{lemma}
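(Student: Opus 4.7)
The strategy is to specialize \cref{lem:m-max-runtime} to the light-restart regime by substituting the hypothesis $S \leq 4 m_i^{2/3}$ into its cost formulas. Recall from that analysis that a single run of \cref{alg:bfs}, \cref{alg:approx m}, and \cref{alg:clean vertices} has worst-case cost $O(|V_{low}| \log \Delta)$ rounds and $O(|E'| \log \Delta + S')$ messages, where $V'$ is the participating vertex set, $V_{low} \subseteq V'$ is the set migrating from $V_H$ to $V_L$, and $S' = \sum_{v \in V_{low}} d_v$. Because $S = \sum_{v \in V'} d_v$ upper bounds both $2|E'(V')|$ and $S'$, the light-restart hypothesis immediately yields $|E'|, S' = O(m_i^{2/3})$, so the per-restart message cost is $O(m_i^{2/3} \log \Delta) = \widetilde{O}(m_i^{2/3})$, and the message bound of the lemma requires no amortization.

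To amortize the $O(|V_{low}| \log \Delta)$ round cost down to $\widetilde{O}(1)$ per update, I would import the charging scheme of \cref{lem:m-max-runtime} and adapt it to the $m_{avg}$ setting. When $v \in V_{low}$ is moved to $V_L$ the restart sets $d'_v \leftarrow 2 d_v$, $S_v \leftarrow S$, and $t_v$ to the current timestamp, so $v$ can return to $V_H$ only through one of two events: the degree trigger $d_v > d'_v$, which forces at least $d_v$ incident edge insertions, or the new timestamp trigger $t - t_v \geq S_v/4$, which forces at least $S_v/4 = \Omega(d_v)$ global updates since $d_v \leq S = S_v$ at the moment of the setting. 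In the degree case I would charge $O(\log \Delta)$ coins to each incident insertion, exactly as in \cref{lem:m-max-runtime}; in the timestamp case I would spread $v$'s $O(\log \Delta)$ round contribution over the $\Omega(d_v)$ global updates intervening until the trigger fires.

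The main obstacle is ensuring the per-update round total remains $\widetilde{O}(1)$ when many vertices simultaneously accumulate timestamp-based charges. I would address this by observing that, by the algorithm of \cref{sec:m-avg}, only vertices actually woken up by the cascade of the current update reassess their timestamps, and that \cref{lem:avg-const} bounds the number of \ld vertices changing MIS status per update by $O(1)$, so the cascade reaches only $O(1)$ \ld vertices; each such vertex draws at most $O(\log \Delta)$ accumulated coins from its fund, keeping the per-update round contribution at $O(\log \Delta) = \widetilde{O}(1)$. Combining this amortized round bound with the direct per-restart message bound established in the first paragraph yields the claimed $\widetilde{O}(1)$ amortized rounds and $\widetilde{O}(m_i^{2/3})$ messages per light restart.
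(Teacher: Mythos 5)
Your message bound is fine and coincides with the paper's: for a light restart $|E'|+S'\le S\le 4m_i^{2/3}$, so the worst-case cost $O(|E'|\log\Delta+S')$ of one run of \cref{alg:bfs}, \cref{alg:approx m} and \cref{alg:clean vertices} is already $\tilde O(m_i^{2/3})$ messages with no amortization. The gap is in the round bound. The forward-charging of \cref{lem:m-max-runtime} works in the $\mmax$ setting because a vertex moved to $V_L$ can re-enter $V_H$ only after $\Omega(d_v)$ updates \emph{adjacent to $v$}, and each update is adjacent to only two vertices, so charges cannot pile up on a single update. Your timestamp-trigger case destroys exactly this locality, and your proposed fix does not close the hole. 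Spreading each $v$'s $O(\log\Delta)$ rounds over the $\Omega(S_v/4)$ \emph{global} updates before its trigger can fire still lets one update collect charges from every restart whose window overlaps it: a single light restart can move up to $\Theta(S)=\Theta(m_i^{2/3})$ vertices into $V_{low}$ (think of $\Theta(S)$ degree-one vertices in $V'$), restarts can occur at essentially every update, and the funds created by different overlapping restarts belong to distinct vertices, so nothing caps the number of simultaneously active funds by a constant. Your key claim that ``the cascade reaches only $O(1)$ \ld vertices'' is false in the worst case, and \cref{lem:avg-const} cannot be used to support it: that lemma bounds (on average) the number of \ld vertices that \emph{change MIS membership} per update, not the number of vertices a restart touches or moves between $V_H$ and $V_L$.

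The paper sidesteps forward-charging of rounds altogether. Its per-restart round cost is $\tilde O(|V'_M|)$, where $V'_M\subseteq V_{low}$ is the set of vertices that moved to $V_L$ \emph{and then enter the MIS}: the BFS, aggregation, while-loop and parallel notifications all take $O(\log\Delta)$ rounds, and the only super-logarithmic contribution is the sequential insertion of these vertices into the MIS. Since each member of $V'_M$ is a \ld vertex entering the MIS at the current update, \cref{lem:avg-const} makes $|V'_M|$ amortized constant per update, giving $\tilde O(1)$ amortized rounds directly. To repair your argument you should amortize only the rounds attributable to MIS insertions (via \cref{lem:avg-const}), not the full $O(|V_{low}|\log\Delta)$ via timestamp-fund bookkeeping, which as written does not yield $\tilde O(1)$.
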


\begin{proof}
Since the changes we made to our restart procedure is
only to save two more known values
for each vertex that moves from $V_H$ to $V_L$, the running time of the restart
did not change. Denote by $V'_M$ the vertices that moved from $V_H$ to $V_L$
during the restart and entered the MIS after the restart. We know
from~\cref{subsec:restart-analysis} that the restart procedure takes at most
$O(|V'_{M}|)$ rounds and $O(|E'| + S')$ messages. Since $|E'| +
S' \le S \le 4m_i^{2/3}$ we
obtain that the number of messages is at most $O(m_i^{2/3})$.
By~\cref{lem:avg-const} we know that the average number of vertices that enter
the MIS in an update is constant, therefore, the size of $V'_M$ is constant on
average and the number of rounds is amortized $O(1)$.
Note that this restart might cause other restarts in a chain (as we saw in the
$m_{max}$ case), we would show later on that we can handle this case also.
\end{proof}

The difficulty with our update procedure for $\mavg$ is that trivially
heavy restarts can cost more than $O(m_i^{2/3})$ worst-case,
and since we cannot bound the number of heavy restarts during a phase, we might
need to pay a lot for heavy restarts during phase $i$. In the next lemma, we
prove that a vertex $v$ would participate in a heavy restart only if it could
``pay'' for it.

\begin{restatable}{lemma}{numrestarts}\label{lem:num-of-restarts}
    If a vertex $v$ participated in a heavy restart $r_a$ during phase $i$ and
    moved from $V_H$ to $V_L$ during $r_a$, it can move back to $V_H$ and
    participate in another restart $r_b$ during phase $i$ only if its degree
    $\deg(v)$ was doubled after $r_a$.
\end{restatable}
\iflong
\begin{proof}
    Suppose $v$ is a vertex that moved to $V_L$ after the restart $r_a$.
In order for $v$ to move back to $V_H$
during an update $j$ that occurs in phase $i$ after $r_a$, one of
the following two conditions needs to be met:
\begin{itemize}
    \item $\deg(v) > deg'(v)$
    \item $t_v - t_j > S_v/4$
\end{itemize}
 where $t_j$ is the timestamp of update $j$. Since $t_v$ was updated during
 $r_a$, and $r_a$ and $j$ both occurred during phase $i$, we get that $t_v - t_j
 < m_i^{2/3}$. In addition, $S_v = S_{r_a} > 4m_i^{2/3}$ because $r_a$ is a
 heavy restart, which means that during any update $j$ in phase $i$, $t_v - t_j
 < m_i^{2/3} < S_v/4$, so the second condition would not occur during phase $i$.
 Therefore, $v$ can participate in another restart only if $\deg(v) >deg'(v)$ and since
 $deg'(v) = 2d^{r_a}_v$ (where $d^{r_a}_v$ is the degree of $v$ during the restart
 at $r_a$), then $\deg(v)$ must be at least $2d^{r_a}_v$, which means the degree of $v$ was doubled.
\end{proof}
\fi

\begin{restatable}{lemma}{heavyrestartcost}\label{lem:heavy-restart-cost}
    During phase $i$, the cost of a heavy restart is $O(1)$ amortized
    rounds and $O(m_i^{2/3})$ amortized messages.
\end{restatable}
\iflong
\begin{proof}
Overall, we get that in phase $i$ each heavy restart costs $O(|E'| +
S')$ messages which is at most $O(m_i^{2/3} + S')$. We use the same
argument as in the proof of~\cref{lem:m-max-runtime} to ignore the runtime
contribution of set $L$ to $|E'|$.
The inequality follows because $m_i$ can grow to at most $m_i + m_i^{2/3}$
during phase $i$. Since $S' = \sum_{v\in V_{low}} \deg(v)$ (where $V_{low}$
is the group of vertices that moved from $V_H$ to $V_L$ during the restart), we
can pay for each such vertex $v \in V_{low}$ if its degree doubled after the
restart.

By \cref{lem:num-of-restarts}, each vertex $v \in V_{low}$ that did not
double its degree after the heavy restart will not participate in another
restart during phase $i$. The number of vertices that does not pay for a heavy
restart during phase $i$ is at most $n$ and therefore the cost of those vertices
is at most $O(\sum_{v\in V} \deg(v))$; however, since $\deg(v)$ can change
during phase $i$, we will calculate the cost according to $deg^{max}(v)$ which is
the maximum degree of $v$ during phase $i$. So the maximum cost we pay for
vertices that participated in a heavy restart but did not pay their part of the
restart by doubling their degree is $O(\sum_{v\in V} deg^{max}(v)) =
O(\sum_{v\in V} \deg(v) + m_i^{2/3}) = O(m_i + m_i^{2/3}) =
O(m_i)$, so if we charge on each update during phase $i$ another
$O(m_i^{1/3})$ messages, we would pay at the end of phase $i$ for all
those vertices and get $O(m_i^{2/3})$ amortized message complexity
during this phase.
The round complexity of a heavy restart is at most $O(|V'_M|)$, where
again $V'_M$ is the vertices that moved from $V_H$ to $V_L$ during the restart
and entered the MIS after the restart.
By~\cref{lem:avg-const} we get that the average number of \ld vertices that
enter the MIS in an update is constant, therefore, the average size of $V'_M$ is
constant and the restart costs amortized $O(1)$ rounds.
\end{proof}
\fi
By \cref{lem:light-restart}, during phase $i$, we know that a light restart
costs $O(1)$ amortized rounds and $O(m_i^{2/3})$ messages and by
\cref{lem:heavy-restart-cost} we know that a heavy restart
also costs $O(1)$ amortized rounds and $O(m_i^{2/3})$ amortized
messages.

The last difficulty we encounter is
that one restart can cause a chain of restarts. However, recall that after a
restart only a vertex that moved from $V_H$ to $V_L$ and entered the MIS can
cause another restart. Since the average number of \ld vertices that enter
the MIS during an update is constant (by~\cref{lem:avg-const}), we get that the
average number of restarts during a phase is also constant and by our lemmas
above obtains our desired costs.

Overall, the payment for all the restarts during a phase is $O(1)$
amortized rounds and $O(m_i^{2/3})$ amortized messages.

\begin{restatable}{lemma}{movement}\label{lem:movement}
    The total cost of moving vertices from $V_L$ to $V_H$ is amortized $O(1)$ rounds and $O(m^{2/3})$ messages.
\end{restatable}
\iflong
\begin{proof}
Let $\Tilde{V}$ be all the vertices $v \in V$ with degree $\deg(v) > 4m_i^{2/3}$ at
the beginning of phase $i$. Note that each such vertex $v \in \Tilde{V}$ can
move to $V_H$ only once during phase $i$. That is because according to
\cref{lem:updated-d'v} a vertex $v \in V_H$ moves to $V_L$ only if $\deg(v) <
3m_i^{2/3}$, and since each vertex $v \in \Tilde{V}\cap V_H$ has degree $\deg(v) >
4m_i^{2/3}$, it can only decrease during phase $i$ to more than $3m_i^{2/3}$,
which is not enough for $v$ to move back to $V_L$.
Since the sum of the degrees of vertices in $\Tilde{V}$ is $O(m_i)$, we can pay for
the movement of those vertices by charging another $O(m_i^{1/3})$ messages in
each update.

Let's take a look on the vertices in $V \setminus \Tilde{V}$; each such vertex $v$ has degree $\deg(v) \le 4m_i^{2/3}$, so the cost of the movement of $v$ is at most $O(1)$ rounds and $O(m_i^{2/3})$ messages. However, the movement of $v$ might cause the movement of other vertices.
Recall that only vertices that want to make an action (enter or leave the MIS)
first check their stored timestamp and their degree and move to $V_H$ if necessary.
By~\cref{lem:avg-const}, the average number of such \ld vertices is constant.
Therefore, the total cost of the movement of vertices from $V \setminus \Tilde{V}$ is $O(1)$ amortized rounds and $O(m_i^{2/3})$ amortized messages.
\end{proof}
\fi

By~\cref{lem:movement} and since a vertex can from $V_H$ to $V_L$ only through a
restart, we get that overall the running time of the movements in phase $i$ is $O(1)$ amortized rounds and $O(m_i^{2/3})$ amortized messages.

\subsubsection{Analysis of the Main Algorithm under
$m_{avg}$}\label{sec:analysis-avg}
\iflong
Using our updated restart procedure for $\mavg$ provided in~\cref{sec:m-avg}, we
provide an updated analysis of our main algorithm in terms of $\mavg$ in this
section.

\begin{restatable}{lemma}{deletionlow}\label{lem:deletion-avg-low}
    Given an edge deletion during phase $i$ between two low-degree vertices $(u, v)$
    (where $u$ and $v$ are low-degree after any restart) where w.l.o.g.
    $u$ is in the MIS and $c_v = 0$ after the deletion, $v$ adds itself
    to the MIS after $O(1)$ rounds of communication and after sending
    $O(m_{i}^{2/3})$ messages.
\end{restatable}

\begin{proof}
    By~\cref{inv:restart-low-deg}, the degree of $v$ after the restart is
    $O(m_{i}^{2/3})$. Hence, by our procedure $v$ enters the MIS after sending a
    message informing all its neighbors that it is entering the MIS. Hence,
    because $deg(v) = O(m_i^{2/3})$, $v$ adds itself to the MIS after $O(1)$ rounds
    of communication and after sending $O(m_i^{2/3})$ messages.
\end{proof}

\begin{lemma}\label{lem:insertion-avg-low}
    Given an edge insertion during phase $i$ between low-degree vertices $(u,
    v)$ (where $u$ and $v$ are low-degree after any restart)
    such that both are in the MIS,~\cref{alg:low-neighbor} requires $O(1)$
    amortized rounds and $O(m_{i}^{2/3})$ amortized messages using
    $O(\log n)$-bit messages.
\end{lemma}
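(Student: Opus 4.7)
The plan is to mirror the proof of \cref{lem:insertion-low}, substituting the new phase-based invariants and counting lemmas that were established for $\mavg$. Correctness of \cref{alg:low-neighbor} is inherited directly: the algorithm only adds a vertex $w$ to the MIS when $c_w=0$, i.e., when no low-degree neighbor of $w$ lies in the MIS, so the updated set remains a valid independent set that respects the priority of low-degree vertices. This part of the argument is identical to the $\mmax$ case and does not depend on the edge count bound.

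Next, I would bound the cost incurred by a single vertex that enters or leaves the MIS during \cref{alg:low-neighbor}. Since we are restricted to vertices that are low-degree \emph{after any restart} during phase $i$, \cref{inv:restart-low-deg} implies that any such vertex has degree $O(m_i^{2/3})$. Therefore, notifying all of its neighbors about its status change, and adjusting the relevant counters $c_x$, costs $O(1)$ rounds of communication and $O(m_i^{2/3})$ messages, each of $O(\log n)$ bits. Any cascading calls to the high-degree subroutines \cref{alg:find-subgraph} or \cref{alg:high} are charged separately via \cref{lem:high-deg-computation} and \cref{cor:high-2-runtime}, so here we only account for the low-degree bookkeeping done within \cref{alg:low-neighbor} itself.

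The third and main step is the amortization. By \cref{lem:avg-const}, the average number of low-degree vertices that leave (equivalently, enter) the MIS during a single update in phase $i$ is constant. I would use the same charging scheme as in the proof of \cref{lem:insertion-low}: since the algorithm starts from the empty graph with every vertex in the MIS, whenever a low-degree vertex $w$ leaves the MIS, we pre-pay $O(1)$ rounds and $O(m_i^{2/3})$ messages that will cover $w$'s next re-entry into the MIS. Summing over the constantly many vertices that change status per update yields $O(1)$ amortized rounds and $O(m_i^{2/3})$ amortized messages per update during phase $i$.

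The subtle point, and the main obstacle, is that $m_i$ changes between phases while the ``credit'' deposited when a vertex left the MIS was computed using the value of $m_i$ at that earlier time; but because the phase length $m_i^{2/3}$ is short relative to $m_i$ and the number of edges can change by only a factor of $2$ within a phase (as used throughout \cref{sec:m-avg}), the stored credit is still within a constant factor of what is needed. Combined with the fact that any degree-threshold-induced transitions are absorbed by the restart analysis of \cref{lem:light-restart} and \cref{lem:heavy-restart-cost}, this establishes the claimed $\tilde{O}(1)$ rounds and $\tilde{O}(m_i^{2/3})$ messages amortized bound for \cref{alg:low-neighbor} under $\mavg$.
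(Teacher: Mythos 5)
Your first two steps (correctness, and the per-vertex cost bound of $O(1)$ rounds and $O(m_i^{2/3})$ messages via \cref{inv:restart-low-deg} and \cref{lem:inv3-holds}) match the paper. The gap is in your amortization step. You import the credit scheme of \cref{lem:insertion-low}: when a \ld vertex leaves the MIS you deposit $O(m_i^{2/3})$ credit (with the \emph{current} phase's $m_i$) to pre-pay its next entry. You then justify the phase mismatch by noting that the number of edges changes by at most a constant factor \emph{within} a phase. But a vertex can leave the MIS in phase $i'$ and re-enter only many phases later, in some phase $j$ with $m_j$ arbitrarily larger than $m_{i'}$; across phases the edge count is not bounded by any constant factor, so the deposited credit $O(m_{i'}^{2/3})$ need not be within a constant factor of the $O(m_j^{2/3})$ re-entry cost. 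As written, the sentence ``the stored credit is still within a constant factor of what is needed'' does not follow, and the cross-phase credit scheme breaks exactly in the regime the $\mavg$ analysis is meant to handle.

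The paper's proof avoids this by not carrying credit across time at all: it charges the cost of each \ld vertex that enters the MIS to the update in which the entry occurs, bounding that cost by $O(m_i^{2/3})$ with the current phase's $m_i$ (via \cref{inv:restart-low-deg}, or \cref{lem:inv3-holds} for the neighbors of the leaving vertex), and then invokes \cref{lem:avg-const} to argue that on average only a constant number of \ld vertices enter the MIS per update, giving the amortized bound directly in phase-$i$ units. It also explicitly handles the neighbors whose timestamp check sends them to $\vhd$ before entering, deferring that cost to the movement analysis of \cref{sec:m-avg}; your attribution of these transitions to the restart lemmas (\cref{lem:light-restart}, \cref{lem:heavy-restart-cost}) is slightly off, since restarts only govern movement from $\vhd$ to $\vld$. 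To repair your write-up, replace the leave-time credit deposit with the paper's entry-time charging (or re-derive the credit so that it is denominated in the phase of the entry), and route the $\vld \to \vhd$ transitions through the movement analysis rather than the restart lemmas.
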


\begin{proof}
    By~\cref{inv:restart-low-deg}, the degree of $u$ and $v$ after the restart
    is $O(m_{i}^{2/3})$. Therefore, w.l.o.g. if $u$ leaves the MIS then it notifies
    all its neighbors with $O(1)$ rounds and $O(m_i^{2/3})$ messages. Each of
    $u$'s \ld neighbors that needs to enter the MIS, checks first if it needs to
    move to $V_H$. If so, it moves to $V_H$ and notify all its neighbors (we
    analyze the cost of the movement above).
    If not, by the proof of~\cref{lem:inv3-holds} we get that its degree is at
    most $O(m_i^{2/3})$ and therefore it pays $O(1)$ rounds and $O(m_i^{2/3})$
    message to notify all its neighbors.
    By~\cref{lem:avg-const} we know that the average number of \ld vertices that
    enter the MIS during an update is constant.
    Therefore, the amortized cost of this update is $O(1)$ rounds and
    $O(m_i^{2/3})$ messages.
                \end{proof}

\begin{lemma}
    Determining the set of high-degree neighbors to enter the MIS after an edge insertion
    between a high-degree node in the MIS and a low-degree node in the MIS requires
    $O(\log^2 n)$ rounds and $O(m_{i}^{2/3}\log^2 n)$ messages
    using~\cref{alg:high}.
\end{lemma}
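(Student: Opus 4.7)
The plan is to follow the template of $\cref{lem:edge-insert-high-low}$ from the $\mmax$ analysis, but applied inside a single phase of the $\mavg$ analysis using the refined phase-based bounds. Since the inserted edge $(u,v)$ has both endpoints in the MIS with $u \in V_L$ and $v \in V_H$, at least one must leave; per case~(1) of the enumerated list in $\cref{alg:low-deg-enters}$, we designate $u$ as the leader of $\cref{alg:high}$ and set $U' = \{v\}$, take $U$ to be the subset of $N_{high}(v)$ whose members currently have no MIS neighbor, and $L = \emptyset$.

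The first substantive step of $\cref{alg:high}$ is to assemble $G' = \{u, v\} \cup U \cup W$, where $W = \bigcup_{w \in U} N_{high}(w)$, and to run $\cref{alg:clean vertices}$ on $G'$. By the combination of $\cref{lem:light-restart}$ and $\cref{lem:heavy-restart-cost}$, this restart contributes $\tilde{O}(1)$ amortized rounds and $\tilde{O}(m_i^{2/3})$ amortized messages in phase $i$, irrespective of whether the restart is classified as light or heavy; a chain of subsequent restarts triggered by vertices that move from $V_H$ to $V_L$ during the procedure is absorbed at the same amortized cost, using the constant-on-average movement bound of $\cref{lem:avg-const}$. Once the restart finishes, $\cref{inv:restart-high-deg}$ guarantees that $|N_{high}(v)| = O(m_i^{1/3})$ for the post-restart partition, so in particular $|U| = O(m_i^{1/3})$, and $v$ can broadcast its departure from the MIS to all its surviving high-degree neighbors in $O(1)$ rounds and $O(m_i^{1/3})$ messages.

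The remaining task is to invoke $\cref{lem:cps}$ on the subgraph induced by $\{u, v\} \cup U$, with $u$ as the designated leader and the current MIS (restricted to this subgraph) as the input set $S$. Every vertex of $U$ is a neighbor of $v$, which is itself a neighbor of $u$, so the subgraph has diameter $D = O(1)$; because $U$ consists of $O(m_i^{1/3})$ vertices that are all high-degree in the current partition, the number of edges of the induced subgraph is $O(m_i^{2/3})$. Substituting $D = O(1)$ and $|E| = O(m_i^{2/3})$ into the bounds of $\cref{lem:cps}$ yields $O(\log^2 n)$ rounds and $O(m_i^{2/3} \log^2 n)$ messages, which dominates the restart and notification costs and establishes the stated bounds.

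The main obstacle is ensuring that $\cref{inv:restart-high-deg}$ actually applies to every vertex that ends up in $U$; this is precisely why the restart must be executed on the one-hop-larger set $\{u, v\} \cup U \cup W$ rather than on $\{u, v\} \cup U$ alone, so that all vertices of $U$ have freshly verified their high-degree status (and their high-degree neighborhood sizes) before the CPS-style subroutine begins. A minor secondary point is to verify that the $\textsc{Congest}$ bandwidth assumption of $\cref{lem:cps}$ is not violated during the subroutine; this is immediate since the vertex and edge sets of the subgraph on which CPS runs are frozen for the duration of the call, and the modified CPS algorithm is itself analyzed in the $\congest$ model.
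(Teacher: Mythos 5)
Your proposal is correct and follows essentially the same route as the paper: the paper's proof simply invokes \cref{inv:restart-high-deg} to bound the number of participating high-degree vertices by $O(m_i^{1/3})$ and then repeats the proof of \cref{lem:high-deg-computation} with $m_i$ in place of $\mmax$, which is exactly what you unpack (restart cost via the light/heavy-restart lemmas and \cref{lem:avg-const}, then \cref{lem:cps} on a constant-diameter subgraph with $O(m_i^{2/3})$ edges). Your version is just a more explicit instantiation of the same argument for the insertion case.
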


\begin{proof}
    By~\cref{inv:restart-high-deg} we get that after the restart, the number of \hd vertices that participate in~\cref{alg:high} is at most $O(m_i^{1/3})$. Obtaining the same proof of~\cref{lem:high-deg-computation} with $m_i$ instead of $m_{max}$ would suffice.
\end{proof}

The rest of the lemmas, \cref{lem:edge-del-high-low,lem:hd-insert,lem:hd-delete}, follow
immediately for phase $i$ from our invariants and by replacing $\mmax$ by $m_i$.
We show as our last step that we can translate our costs from $m_i$ to $\mavg$.

\begin{lemma}\label{thm:holder}
If there are $t$ updates, where in phase $i$ the average number of messages
during each update is $\Tilde{O}(m_i^{2/3})$, such that $O(m_i)$ is the maximum
number of edges in this phase, then the number of messages during each update is
$\Tilde{O}(m_{avg}^{2/3})$.
\end{lemma}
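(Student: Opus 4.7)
The plan is to pass from the per-phase amortized bound to a per-update amortized bound over the whole sequence by invoking H\"{o}lder's inequality with the right exponents. Let the phases be indexed by $i=1,2,\dots$, let $m_i$ denote the number of edges at the start of phase $i$, and recall that by definition the length of phase $i$ is $\ell_i=\max(1,m_i^{2/3})$. By hypothesis the total number of messages charged to phase $i$ is $\widetilde{O}(\ell_i\cdot m_i^{2/3})=\widetilde{O}(m_i^{4/3})$, so the grand total of messages over the whole sequence is
\[
M \;=\; \widetilde{O}\!\left(\sum_i m_i^{4/3}\right).
\]

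Next I would express $m_{avg}$ in the same notation. Since in phase $i$ the number of edges stays in $[m_i, 2m_i]$ (a phase ends as soon as this doubles), the total $\sum_{j=1}^{t} m(j)$ of edge counts across updates satisfies $\sum_{j=1}^t m(j)=\Theta(\sum_i m_i\cdot\ell_i)=\Theta(\sum_i m_i^{5/3})$. Hence, writing $t=\sum_i \ell_i=\Theta(\sum_i m_i^{2/3})$, we have
\[
m_{avg}\;=\;\Theta\!\left(\frac{\sum_i m_i^{5/3}}{\sum_i m_i^{2/3}}\right),
\qquad
m_{avg}\cdot t \;=\; \Theta\!\left(\sum_i m_i^{5/3}\right).
\]

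Now I apply H\"{o}lder to split $m_i^{4/3}$ into factors that match the two quantities above. Writing $m_i^{4/3}=m_i^{2/9}\cdot m_i^{10/9}$ and using H\"{o}lder with conjugate exponents $p=3$ and $q=3/2$ gives
\[
\sum_i m_i^{4/3} \;\le\; \Bigl(\sum_i m_i^{2/3}\Bigr)^{1/3}\Bigl(\sum_i m_i^{5/3}\Bigr)^{2/3}
\;=\; t^{1/3}\cdot\bigl(m_{avg}\cdot t\bigr)^{2/3}
\;=\; t\cdot m_{avg}^{2/3}.
\]
Dividing by $t$, the amortized number of messages per update is $\widetilde{O}(m_{avg}^{2/3})$, as desired.

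The only real obstacle is picking the H\"{o}lder exponents correctly, which is pinned down by the system $2\alpha+5\beta=4$, $\alpha+\beta=1$ (forcing $\alpha=1/3,\beta=2/3$, i.e.\ $p=3,q=3/2$). Beyond that, minor issues are honest bookkeeping: ensuring the $\Theta(\cdot)$ estimate of the in-phase edge count (the phase ends before the count more than doubles), handling the degenerate phases in which $m_i^{2/3}<1$ (charge one update at cost $\widetilde{O}(1)$, which is absorbed by the $\widetilde{O}(\cdot)$), and observing that the same calculation with $m_i^{2/3}$ replaced by $1$ inside the ``messages in phase $i$'' factor immediately yields the round-complexity version $\widetilde{O}(1)$ per update.
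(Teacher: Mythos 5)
Your proposal is correct and uses essentially the same argument as the paper: the paper applies H\"older's inequality with conjugate exponents $3/2$ and $3$ directly over the $t$ updates (with the constant sequence as the second factor), and your phase-level application with factors $m_i^{2/9}\cdot m_i^{10/9}$ becomes exactly that inequality once you substitute $\ell_i=m_i^{2/3}$, so the grouping by phases is only a cosmetic repackaging. One small correction to your bookkeeping: a phase does not end ``as soon as the edge count doubles''---its length is fixed at $m_i^{2/3}$ updates---but your estimate that the edge count stays $\Theta(m_i)$ within phase $i$ still holds precisely because the count can change by at most $m_i^{2/3}$ over such a phase.
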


\begin{proof}
   In order to prove the theorem, we would like to prove that the average number of messages throughout all the updates is less than $\Tilde{O}(m_{avg}^{2/3})$.
   Since the average number of messages during each update in phase $i$ is
   $\Tilde{O}(m_i^{2/3})$ and the number of edges during this phase is $O(m_i)$,
   we can just refer to the number of sent messages to resolve an update $j$ as
   $(c_jm_j)^{2/3}\log ^2n$ (for some constant $c_j$) and to the number of edges in
   this update as  $c_jm_j$, respectively.
   So we need to prove that:
   \begin{align}
       \frac{\sum_{j=1}^t(c_jm_j)^{2/3}\log^2 n}{t} \le
       \left(\frac{\sum_{j=1}^tc_jm_j}{t}\right)^{2/3}\log^2 n
       \label{eq:avg}
   \end{align}
   and by the fact that
   \begin{align}
       \left(\frac{\sum_{j=1}^tc_jm_j}{t}\right)^{2/3} = m_{avg}^{2/3}
       \label{eq:mavg}
   \end{align}
   we would prove our theorem.
   Note that~\cref{eq:avg} is equal to:
   \begin{align}
       \sum_{j=1}^t(c_jm_j)^{2/3} \le \left(\sum_{j=1}^tc_jm_j\right)^{2/3}t^{1/3}
       \label{eq:avg2}
   \end{align}
   In order to prove~\cref{eq:avg2} we need to use H\"{o}lder's inequality:
   \begin{align}
       \label{eq:holder}
       \sum_{k=1}^n|x_ky_k| \le  \left(\sum_{k=1}^n|x_k|^p\right)^{\frac{1}{p}} \left(\sum_{k=1}^n|y_k|^q\right)^{\frac{1}{q}}
   \end{align}
   such that $(x_1,...,x_n), (y_1,...,y_n)\in \mathbb{R}^n$, $p,q \ge 1$ and $\frac{1}{p} + \frac{1}{q} = 1$.
   By choosing $p = \frac{3}{2}, q = 3$, and for each $1 \le j \le t$ $y_j = 1, x_j = (c_jm_j)^{2/3}$ we get that:
      \begin{align}
       \sum_{j=1}^t|(c_jm_j)^{\frac{2}{3}}| \le \left(\sum_{j=1}^t|(c_jm_j)^{\frac{2}{3}}|^{\frac{3}{2}}\right)^{\frac{2}{3}} \left(\sum_{j=1}^t|1|^3\right)^{\frac{1}{3}}
   \end{align}
   which is exactly~\cref{eq:avg2}.
\end{proof}

Using the above lemmas, we obtain our final result for this section with respect
to $\mavg$.
\else
Using our updated restart procedure for $\mavg$ provided in~\cref{sec:m-avg}, we
provide an updated analysis of our main algorithm in~\cref{app:m-avg} terms of $\mavg$,
due to space constraints. The analysis then provides our main theorem in terms of $\mavg$.
\fi
\begin{restatable}{theorem}{mavgfinal}\label{thm:m-avg-final}
There exists a deterministic algorithm in the \congest model that maintains an
MIS in a graph $G = (V, E)$ under edge insertions/deletions in $O(\log^2 n)$
amortized rounds and $O(\mavg^{2/3}\log^2 n)$ amortized
messages per update.
\end{restatable}

\appendix
\section{\texorpdfstring{$O(D\log^2 n)$ round Static MIS algorithm in \congest}{}}\label{app:cps}

We describe the key components of~\cite{CPS20} that are necessary to
prove~\cref{lem:cps}. The algorithm provided in~\cite{CPS20} operates
under the \congest model. The algorithm consists of $O(\log n)$ phases
each of which requires $O(D \log n)$ rounds of communication
where $D$ is the diameter of the graph. Each phase is one round of
a modified and derandomized version of~\cite{Ghaffari16}. Consider phase $t$ of
the algorithm
where we only consider the vertices that are not yet removed from the graph
(i.e.\ the vertices that have not yet been added to the MIS nor are neighbors of
vertices in the MIS). Each vertex stores in its local memory a (large enough)
family of hash
functions and they choose a hash function based on a shared random seed that is
computed deterministically.

Let seed $Y = (y_1, \dots, y_{\gamma})$ be the $\gamma$ random variables that are
used to select a hash function from the hash function family. The value of $y_i$
in the seed is computed in the following way. Suppose $y_1 = b_1, \dots, y_{i
-1} = b_{i - 1}$ have already been computed. By Ghaffari's algorithm, a vertex can
be in one of two types of \emph{golden-rounds}. If a vertex $v$ is in \emph{golden
round-1}, $v$ uses the IDs of its neighbors (that are not yet removed) and some
additional information from its neighbors to compute
\begin{align*}
E(\psi_{v, t}|Y_{i, b}) = Pr[m_{v, t} = 1| Y_{i, b}] - \sum_{u \in N(v)}
Pr[m_{u, t} = 1 | Y_{i, b}]
\end{align*}
where $Pr[m_{v, t} = 1 | Y_{i, b}]$ is the
conditional probability that $v$ is marked in phase $t$.

For vertices in \emph{golden-round-2}, the vertex $v$ first finds a subset of the
neighbors $W(v) \subseteq N(v)$ satisfying
$\sum_{w \in W(v)} p_t(w) \in [1/40, 1/4]$. Then, let $M_{t, b}(u)$ be the
    conditional probability on $Y_{i, b}$ that $u$ is marked but none of its
    neighbors are marked. Let $M_{t, b}(u, W(v))$ be the conditional probability
    that another neighbor of $v$, $w \in W(v)$, other than $u$ is marked. Using
    these values, $v$ then computes
    \begin{align*}
        E(\psi_{v, t}|Y_{i, b}) = \sum_{u \in W(v)} Pr[m_{u, t} = 1 | Y_{i, b}] - M_{t, b}(u) - M_{t, b}(u, W(v)).
    \end{align*}
    Each vertex then sends $E(\phi'_{v, t}|Y_{i, b})$ (computed using
    $E(\psi_{v, t}|Y_{i, b})$) to its parent who sums up all the values they
    received. The leader receives the aggregate values $\sum_{v \in V'}
    E(\phi'_{v, t} | Y_{i, b} = 0)$ and $\sum_{v \in V'} E(\phi'_{v, t} | Y_{i,
    b} = 1)$ from its descendants and decides on $y_i = 0$ if $E(\phi'_{v, t} | Y_{i, b} = 0) \geq
    E(\phi'_{v, t} | Y_{i, b} = 1)$ and $y_i = 1$ otherwise. The leader then
    sends the bit to all its descendants.

    Once the vertices have computed $Y$, they can simulate phase $t$ of Ghaffari's
    algorithm. Each vertex that gets marked enters the MIS and removes themselves
    from the subgraph running the algorithm if none of their
    neighbors are marked. All neighbors of such vertices also remove themselves.
    The algorithm proceeds with this smaller subgraph.

\section*{Acknowledgements}

We are grateful to Boaz Patt-Shamir for useful comments on an earlier version of
this paper that helped improve its presentation.

\bibliographystyle{alpha}
\bibliography{ShayBib}

\newcommand{\etalchar}[1]{$^{#1}$}
\begin{thebibliography}{CDK{\etalchar{+}}20}

\bibitem[ABI86]{AlonBI86}
Noga Alon, L{\'{a}}szl{\'{o}} Babai, and Alon Itai.
\newblock A fast and simple randomized parallel algorithm for the maximal
  independent set problem.
\newblock {\em J. Algorithms}, 7(4):567--583, 1986.

\bibitem[AOSS18]{AOSS18}
Sepehr Assadi, Krzysztof Onak, Baruch Schieber, and Shay Solomon.
\newblock Fully dynamic maximal independent set with sublinear update time.
\newblock In Ilias Diakonikolas, David Kempe, and Monika Henzinger, editors,
  {\em Proc. of 50th STOC}, pages 815--826. {ACM}, 2018.

\bibitem[AOSS19]{AOSS19}
Sepehr Assadi, Krzysztof Onak, Baruch Schieber, and Shay Solomon.
\newblock Fully dynamic maximal independent set with sublinear in n update
  time.
\newblock In Timothy~M. Chan, editor, {\em Proc. of 30th SODA}, pages
  1919--1936, 2019.

\bibitem[BDH{\etalchar{+}}19]{BDHSS19}
Soheil Behnezhad, Mahsa Derakhshan, MohammadTaghi Hajiaghayi, Cliff Stein, and
  Madhu Sudan.
\newblock Fully dynamic maximal independent set with polylogarithmic update
  time.
\newblock In David Zuckerman, editor, {\em Proc. of 60th FOCS}, pages 382--405,
  2019.

\bibitem[BEG18]{BEG18}
Leonid Barenboim, Michael Elkin, and Uri Goldenberg.
\newblock Locally-iterative distributed $(\delta+ 1)$: Coloring below
  szegedy-vishwanathan barrier, and applications to self-stabilization and to
  restricted-bandwidth models.
\newblock In {\em Proceedings of the 2018 ACM Symposium on Principles of
  Distributed Computing}, PODC '18, page 437–446, New York, NY, USA, 2018.
  Association for Computing Machinery.

\bibitem[BKM19]{BKM19}
Philipp Bamberger, Fabian Kuhn, and Yannic Maus.
\newblock Local distributed algorithms in highly dynamic networks.
\newblock In {\em 2019 {IEEE} International Parallel and Distributed Processing
  Symposium, {IPDPS} 2019, Rio de Janeiro, Brazil, May 20-24, 2019}, pages
  33--42. {IEEE}, 2019.

\bibitem[BS16]{BS16}
Aaron Bernstein and Cliff Stein.
\newblock Faster fully dynamic matchings with small approximation ratios.
\newblock In {\em Proceedings of the 27th Annual {ACM-SIAM} Symposium on
  Discrete Algorithms, {SODA} 2016, Arlington, VA, USA, January 10-12, 2016},
  pages 692--711, 2016.

\bibitem[CDK{\etalchar{+}}20]{CDK20}
Keren Censor{-}Hillel, Neta Dafni, Victor~I. Kolobov, Ami Paz, and Gregory
  Schwartzman.
\newblock Fast deterministic algorithms for highly-dynamic networks.
\newblock In Quentin Bramas, Rotem Oshman, and Paolo Romano, editors, {\em 24th
  International Conference on Principles of Distributed Systems, {OPODIS} 2020,
  December 14-16, 2020, Strasbourg, France (Virtual Conference)}, volume 184 of
  {\em LIPIcs}, pages 28:1--28:16. Schloss Dagstuhl - Leibniz-Zentrum f{\"{u}}r
  Informatik, 2020.

\bibitem[CHK16]{CHK16}
Keren Censor{-}Hillel, Elad Haramaty, and Zohar~S. Karnin.
\newblock Optimal dynamic distributed {MIS}.
\newblock In {\em Proceedings of the 2016 {ACM} Symposium on Principles of
  Distributed Computing, {PODC} 2016, Chicago, IL, USA, July 25-28, 2016},
  pages 217--226, 2016.

\bibitem[CPS20]{CPS20}
Keren Censor{-}Hillel, Merav Parter, and Gregory Schwartzman.
\newblock Derandomizing local distributed algorithms under bandwidth
  restrictions.
\newblock {\em Distributed Comput.}, 33(3):349--366, 2020.

\bibitem[CW77]{CW77b}
Larry Carter and Mark~N. Wegman.
\newblock Universal classes of hash functions.
\newblock In {\em Proc. 9th STOC}, pages 106--112, 1977.

\bibitem[CZ19]{CZ19}
Shiri Chechik and Tianyi Zhang.
\newblock Fully dynamic maximal independent set in expected poly-log update
  time.
\newblock In David Zuckerman, editor, {\em Proc. of 60th FOCS}, pages 370--381,
  2019.

\bibitem[DZ18]{DZ18}
Yuhao Du and Hengjie Zhang.
\newblock Improved algorithms for fully dynamic maximal independent set.
\newblock {\em CoRR}, abs/1804.08908, 2018.

\bibitem[GGR20]{GGR20}
Mohsen Ghaffari, Christoph Grunau, and V{\'{a}}clav Rozhon.
\newblock Improved deterministic network decomposition.
\newblock {\em CoRR}, abs/2007.08253, 2020.

\bibitem[Gha16]{Ghaffari16}
Mohsen Ghaffari.
\newblock An improved distributed algorithm for maximal independent set.
\newblock In {\em Proceedings of the 27th Annual {ACM-SIAM} Symposium on
  Discrete Algorithms, {SODA} 2016, Arlington, VA, USA, January 10-12, 2016},
  pages 270--277, 2016.

\bibitem[GK21]{GuptaK18}
Manoj Gupta and Shahbaz Khan.
\newblock Simple dynamic algorithms for maximal independent set and other
  problems.
\newblock {\em SOSA '21}, abs/1804.01823, 2021.

\bibitem[GP13]{GP13}
M.~Gupta and R.~Peng.
\newblock Fully dynamic $(1+\epsilon)$-approximate matchings.
\newblock In {\em Proceedings of the 54th {IEEE} Annual Symposium on
  Foundations of Computer Science, {FOCS} 2013, Berkeley, CA, USA, October
  26-29, 2013}, pages 548--557, 2013.

\bibitem[HK73]{HK73}
J.~E. Hopcroft and R.~M. Karp.
\newblock An n$^{\mbox{5/2}}$ algorithm for maximum matchings in bipartite
  graphs.
\newblock {\em SIAM J. Comput.}, 2(4):225--231, 1973.

\bibitem[KKM13]{KKM13}
Bruce~M. Kapron, Valerie King, and Ben Mountjoy.
\newblock Dynamic graph connectivity in polylogarithmic worst case time.
\newblock In {\em Proc. of 24th SODA}, pages 1131--1142, 2013.

\bibitem[KNNP19]{DBLP:journals/corr/abs-1909-07854}
Manas~Jyoti Kashyop, N.~S. Narayanaswamy, Meghana Nasre, and Sai~Mohith
  Potluri.
\newblock Dynamic coloring for bipartite and general graphs.
\newblock {\em CoRR}, abs/1909.07854, 2019.

\bibitem[KNNP20]{KNNP20}
Manas~Jyoti Kashyop, N.~S. Narayanaswamy, Meghana Nasre, and Sai~Mohith
  Potluri.
\newblock Trade-offs in dynamic coloring for bipartite and general graphs,
  2020.

\bibitem[KS18a]{KS18}
Haim Kaplan and Shay Solomon.
\newblock Dynamic representations of sparse distributed networks: {A}
  locality-sensitive approach.
\newblock In Christian Scheideler and Jeremy~T. Fineman, editors, {\em Proc. of
  30th {SPAA} 2018}, pages 33--42, 2018.

\bibitem[KS18b]{KG18}
Ken{-}ichi Kawarabayashi and Gregory Schwartzman.
\newblock {Adapting Local Sequential Algorithms to the Distributed Setting}.
\newblock In Ulrich Schmid and Josef Widder, editors, {\em 32nd International
  Symposium on Distributed Computing (DISC 2018)}, volume 121 of {\em Leibniz
  International Proceedings in Informatics (LIPIcs)}, pages 35:1--35:17,
  Dagstuhl, Germany, 2018. Schloss Dagstuhl--Leibniz-Zentrum fuer Informatik.

\bibitem[KW13]{KW13}
Michael K{\"{o}}nig and Roger Wattenhofer.
\newblock On local fixing.
\newblock In {\em Principles of Distributed Systems - 17th International
  Conference, {OPODIS} 2013, Nice, France, December 16-18, 2013. Proceedings},
  volume 8304 of {\em Lecture Notes in Computer Science}, pages 191--205.
  Springer, 2013.

\bibitem[Lin87]{Linial87}
Nathan Linial.
\newblock Distributive graph algorithms-global solutions from local data.
\newblock In {\em Proceedings of the 28th {IEEE} Annual Symposium on
  Foundations of Computer Science, {FOCS} 1987, Los Angeles, CA, USA, October
  27-29, 1987}, pages 331--335, 1987.

\bibitem[LPR09]{LPR09}
Zvi Lotker, Boaz Patt{-}Shamir, and Adi Ros{\'{e}}n.
\newblock Distributed approximate matching.
\newblock {\em {SIAM} J. Comput.}, 39(2):445--460, 2009.

\bibitem[Lub86]{Luby86}
Michael Luby.
\newblock A simple parallel algorithm for the maximal independent set problem.
\newblock {\em {SIAM} J. Comput.}, 15(4):1036--1053, 1986.

\bibitem[NS13]{NS13}
Ofer Neiman and Shay Solomon.
\newblock Simple deterministic algorithms for fully dynamic maximal matching.
\newblock In {\em Proceedings of the 45th Annual {ACM} {SIGACT} Symposium on
  Theory of Computing, {STOC} 2013, Palo Alto, CA, USA, June 1-4, 2013}, pages
  745--754, 2013.

\bibitem[Pel00]{PelB00}
D.~Peleg.
\newblock {\em Distributed Computing: A Locality-Sensitive Approach}.
\newblock SIAM, 2000.

\bibitem[PPS16]{PPS16}
Merav Parter, David Peleg, and Shay Solomon.
\newblock Local-on-average distributed tasks.
\newblock In Robert Krauthgamer, editor, {\em Proceedings of the Twenty-Seventh
  Annual {ACM-SIAM} Symposium on Discrete Algorithms, {SODA} 2016, Arlington,
  VA, USA, January 10-12, 2016}, pages 220--239, 2016.

\bibitem[PS16]{PS16}
D.~Peleg and S.~Solomon.
\newblock Dynamic $(1 + \epsilon)$-approximate matchings: A density-sensitive
  approach.
\newblock In {\em Proceedings of the 27th Annual {ACM-SIAM} Symposium on
  Discrete Algorithms, {SODA} 2016, Arlington, VA, USA, January 10-12, 2016},
  2016.

\bibitem[RG20]{RG20}
V{\'{a}}clav Rozhon and Mohsen Ghaffari.
\newblock Polylogarithmic-time deterministic network decomposition and
  distributed derandomization.
\newblock In Konstantin Makarychev, Yury Makarychev, Madhur Tulsiani, Gautam
  Kamath, and Julia Chuzhoy, editors, {\em Proccedings of the 52nd Annual {ACM}
  {SIGACT} Symposium on Theory of Computing, {STOC} 2020, Chicago, IL, USA,
  June 22-26, 2020}, pages 350--363. {ACM}, 2020.

\end{thebibliography}

\end{document}